\documentclass[journal,draftclsnofoot,onecolumn]{IEEEtran}

\usepackage{amsmath}
\allowdisplaybreaks[1]
\usepackage{setspace}

\usepackage{amssymb}
\usepackage{mathrsfs}
\usepackage{amsthm}
\usepackage{multirow}
\usepackage{color}
\usepackage{longtable}
\usepackage{array}
\usepackage{url}
\usepackage{comment}
\usepackage{enumerate}
\usepackage{eucal}
\usepackage{graphics}
\usepackage{verbatim}
\usepackage{subfigure}

\usepackage{algorithm}
\usepackage{algpseudocode}
\usepackage{cite}
\usepackage[table]{xcolor}

\usepackage{mathtools}
\usepackage{xparse}
\usepackage{bm}

\DeclarePairedDelimiterX{\set}[1]{\{}{\}}{\setargs{#1}}
\NewDocumentCommand{\setargs}{>{\SplitArgument{1}{;}}m}
{\setargsaux#1}
\NewDocumentCommand{\setargsaux}{mm}
{\IfNoValueTF{#2}{#1} {#1\,\delimsize|\,\mathopen{}#2}}%{#1\:;\:#2}

\newtheorem{thm}{Theorem}

\newtheorem{lem}{Lemma}
\newtheorem{dfn}{Definition}

\newtheorem{rmk}{Remark}
\newtheorem{cor}{Corollary}

\newtheorem{example}{Example}
\newenvironment{pf}{{\noindent\it Proof:}}{\hfill $\blacksquare$\par}

\newcommand{\RNum}[1]{\lowercase\expandafter{\romannumeral #1\relax}}
\newcommand{\Rnum}[1]{\uppercase\expandafter{\romannumeral #1\relax}}

\newcommand{\F}{\mathbb{F}}

\newcommand{\eqdef}{\triangleq}
\newcommand{\cA}{\mathcal{A}}

\newcommand{\cC}{\mathcal{C}}

\newcommand{\cE}{\mathcal{E}}

\newcommand{\cN}{\mathcal{N}}
\newcommand{\cO}{\mathcal{O}}
\newcommand{\cP}{\mathcal{P}}

\newcommand{\cV}{\mathcal{V}}

\newcommand{\cH}{\mathcal{H}}

\newcommand{\ba}{\mathbf{a}}

\newcommand{\bc}{\mathbf{c}}
\newcommand{\bd}{\mathbf{d}}
\newcommand{\be}{\mathbf{e}}

\newcommand{\bt}{\mathbf{t}}
\newcommand{\bu}{\mathbf{u}}
\newcommand{\bw}{\mathbf{w}}

\newcommand{\bx}{\mathbf{x}}
\newcommand{\by}{\mathbf{y}}
\newcommand{\bz}{\mathbf{z}}

\newcommand{\Zero}{\mathbf{0}}

\newcommand{\bq}{\mathbf{q}}
\newcommand{\bh}{\mathbf{h}}

\newcommand{\GL}{{\rm GL}}
\newcommand{\In}{{\rm In}}
\newcommand{\Out}{{\rm Out}}
\newcommand{\tail}{{\rm tail}}
\newcommand{\head}{{\rm head}}

\newcommand{\Rank}{{\rm Rank}}

\newcommand{\Row}{{\rm Row}}
\newcommand{\Span}{{\rm Span}}
\newcommand{\supp}{{\rm supp}}

\title{Linear network codes for vector-linear network function computation over three-layer networks}
\author{Min Xu and  Gennian Ge%
\thanks{This research was supported by the National Key Research and Development Program of China under Grant 2025YFC3409900, the National Natural Science Foundation of China under Grant 12231014, and Beijing Scholars Program.}
  \thanks{M. Xu (e-mail: minxu0716@qq.com) is with the Institute of Mathematics and Interdisciplinary Sciences, Xidian University, Xi'an 710126, China.}%
  \thanks{G. Ge (e-mail: gnge@zju.edu.cn) is with the School of Mathematical Sciences, Capital Normal University, Beijing 100048, China.}}
\begin{document}

\maketitle
\begin{abstract}
We study vector-linear function computation over three-layer networks with a fixed target function and a fixed source-access pattern. We develop a support-constrained row-space framework that represents a linear computing code by a global row space. This space must contain the target row space and be generated by rows satisfying the local support-constraints of the network. We prove that this representation is equivalent to the existence of a linear computing code. For any prescribed global row space, we give a necessary and sufficient condition for its realization and determine the minimum uniform communication load at the middle nodes. The condition is expressed in terms of the ranks of the local subspaces supported on the source-access sets. It separates the exact local realization problem from the outer problem of designing the global row space and yields a variational characterization of the linear computing capacity. We then apply the framework to MDS targets over cyclic networks. We identify when the target row space alone is sufficient and when auxiliary rows are required. We determine the capacity in the dense regime and in the sparse regime whenever the cut-set bound is integral. For the remaining sparse parameters, we give a general linear construction whose achievable rate equals the integer part of the cut-set bound.
\end{abstract}

\begin{IEEEkeywords}
Network function computation, vector-linear functions, three-layer network,  support-constraint, subspace covering, cyclic network.
\end{IEEEkeywords}

\section{Introduction}
%The rapid growth of large-scale distributed systems has made communication-efficient computation a central problem in information theory and coding theory. In many modern applications, including sensor networks, Internet-of-Things systems, federated learning, distributed analytics, and secure aggregation \cite{2004koetternetwork,2012rainetwork,2013ramamoorthycommunicating,2013AFKZ,2014Appuswamy,2018HTYG,2019GYYL}, a central terminal is often interested not in all raw data generated by distributed terminals, but only in a certain function of these data. Transmitting all source messages to the terminal and then computing the function is usually wasteful, especially when the network bandwidth is limited or when the number of sources is large. Network function computation provides a natural theoretical framework for understanding how much communication is fundamentally necessary when the objective is computation rather than message reconstruction. 

The rapid growth of large-scale distributed systems has made communication-efficient computation an important problem in information theory and coding theory. In many applications, including sensor networks, Internet-of-Things systems, federated learning, distributed analytics, and secure aggregation
\cite{2004koetternetwork,2012rainetwork,2013ramamoorthycommunicating,2013AFKZ,2014Appuswamy,2018HTYG,2019GYYL},
a central terminal needs only a function of the data generated at distributed terminals, rather than all raw data. Sending all source messages to the terminal before computing the function may therefore use much more communication than necessary \cite{2001Orlitsky}. Network function computation provides a general framework for studying the communication required for such tasks.

In a network function computation problem, a directed acyclic network $\cN$ contains multiple source nodes and one sink node. Each source node observes a message, and the sink wishes to compute a prescribed target function $f$ of all source messages. Intermediate nodes may encode their received symbols and forward them to downstream nodes.
We assume each link has unit capacity, which is a common and reasonable abstraction, especially when multiple parallel edges are allowed between nodes.
The main performance metric is the computing capacity, defined as the maximum average number of function instances that can be computed per use of the network. A fundamental question is how the computing capacity depends jointly on the network topology and the target function \cite{2011AFKZ,2018HTYG,2019GYYL,2014Appuswamy,2024GuangZhang}.

%This question has been studied extensively since the foundational work of Appuswamy \emph{et al.}~\cite{2011AFKZ,2013AFKZ}. General cut-set type upper bounds were developed for arbitrary networks and target functions, and these bounds were later refined and strengthened in~\cite{2018HTYG,2019GYYL,2024GuangZhang}. For several important cases, such as the identity function, the finite-field sum function, and some special network topologies, the known upper bounds are tight and can be achieved by linear network codes. However, in general, computing capacity remains difficult to characterize. Existing converse bounds are not always tight, and even when a meaningful upper bound is available, constructing a network code that achieves it can be highly nontrivial.

This question has been studied since the work of Appuswamy \emph{et al.}~\cite{2011AFKZ,2013AFKZ}. Cut-set upper bounds were developed for general networks and target functions and were later strengthened in \cite{2018HTYG,2019GYYL,2024GuangZhang}. These bounds are tight for several important cases, including the identity function, the finite-field sum function, and some special network topologies. In general, however, the computing capacity remains difficult to determine. Existing converse bounds are not always tight, and constructing a code that achieves optimal or nearly optimal computing rate can be nontrivial.

Among all target functions, vector-linear functions form one of the most important classes. A vector-linear target function asks the sink to compute several prescribed linear combinations of the source messages over a finite field. This setting includes both the identity function and the sum function as special cases, but it is substantially more general. It also naturally models a variety of distributed linear-processing tasks, such as linearly separable computation, coded distributed computation, and aggregation of multiple correlated statistics \cite{2021LSCWan,2022LSCITW,2022LSCWan,2026LSCcyclic,namboodiri2026fundamental,2025Tessellated,namboodiri2025fundamental,2018LiMa,2023Elia,2022WanSun,2024Elia,jafarpisheh2026secure,2025ChenCheng}. In these applications, the coefficient matrix of the desired linear map is not merely a technical object, its algebraic structure directly affects what can be computed locally and what must be communicated globally.

Despite their naturality, vector-linear target functions are much less understood than scalar sum or identity functions. Prior work has obtained general necessary conditions and capacity upper bounds for vector-linear function computation over arbitrary networks~\cite{2011AFKZ,2013AFKZ,2024ZhouFu}. Exact capacity results, however, are known only for rather restricted network classes, such as diamond networks~\cite{2022LiXudiamond} or small three-layer networks~\cite{guang2025distributed}. 
A main difficulty is that code feasibility depends not only on the rank of the target matrix but also on how its row space interacts with the source sets observed at the intermediate nodes. This interaction does not arise in the same form in routing or multicast, but it is central to vector-linear function computation.

This paper focuses on three-layer networks. A three-layer network consists of a source layer, a middle layer, and a single sink. Each middle node observes a prescribed subset of the sources and sends coded information to the sink. This architecture is not only a tractable special topology.  It is the basic communication interface in systems in which workers store different subsets of a dataset, relays have different sensing or access neighborhoods, or an aggregator collects locally formed linear statistics. Cyclic and heterogeneous data assignments in distributed linearly separable computation \cite{2021LSCWan,2022LSCITW,2022LSCWan,2026LSCcyclic,2026LSC_heterogeneous}, access networks in federated learning \cite{2024lindifferential,egger2023private}, and relay-based secure aggregation \cite{2025HSAcyclic,2024zhangwansunwangoptimal,2024ZhangWanSunWangITW,2024Luchengkangliucapacity,2025HSAxu} all lead naturally to three-layer source-access models. Thus, three-layer networks provide a basic but useful setting for isolating the algebraic constraints in vector-linear computation. 

Following the network function computation model, we treat both the network and the target function as fixed parts of the problem. Specifically, the source-access system $\Sigma=(\Sigma_1,\ldots,\Sigma_m)$ and the full-row-rank target matrix $T\in\F_q^{k\times s}$ are prescribed before the code is chosen. We study the zero-error linear computing capacity of the pair $(T,\Sigma)$. Neither the target coefficients nor the source assignment can be redesigned to improve the rate.
The fixed pair $(T,\Sigma)$ imposes two different requirements. Every row transmitted by a middle node must be supported on the source set observed by that node, whereas the information received at the sink must contain the target row space. Thus, code design must satisfy both the support-constraints imposed by the topology and the row-space requirement imposed by the target. Existing upper bounds identify communication bottlenecks, but they do not by themselves show how to choose local coding matrices that meet these two requirements.

To address this issue, we formulate linear computation as a support-constrained row-space realization problem. Rather than designing the local encoders separately, we study the global matrix received by the sink. Its row blocks must satisfy the support-constraints of the corresponding middle nodes, and its row space must contain the target row space. We first prove that this matrix formulation is equivalent to the original linear coding problem.
We then solve the realization problem for any prescribed global row space. Specifically, we give a necessary and sufficient condition for its realization and determine the minimum uniform communication load at the middle nodes. The condition depends only on the ranks of the locally supported parts of the prescribed space. Its proof uses a capacitated linear-space selection argument related to the linear Rado theorem \cite{rado1957note}. Optimizing the prescribed-space result over the block length and all admissible global row spaces gives a variational characterization of the linear computing capacity. This separates an exact local realization problem from an outer row-space design problem, which depends on the target and the topology.
This separation also explains the role of auxiliary rows. Restricting the global space to the target row space may provide too few locally supported directions. Enlarging the global space increases its dimension, but it may also create more locally supported directions. The achievable rate is determined by the balance between these two effects. The auxiliary directions are not part of the desired output and are removed by the sink decoder.

The closest application-level model is linearly separable computation (LSC) with one master and no stragglers. Datasets correspond to sources, workers to middle nodes, and the master to the sink. Message subdivision corresponds to the instance length, and worker transmissions correspond to the row-block loads. The two models therefore share the same encoder--decoder factorization and support-constraints. Their problem settings, however, are different.
Some LSC results consider a fixed cyclic or heterogeneous assignment and a generic random demand, often together with straggler constraints
\cite{2021LSCWan,2026LSC_heterogeneous}. Such results describe almost all demand matrices and do not distinguish targets with special algebraic structure. Other formulations fix the requested functions but allow the data assignment, encoding supports, or connectivity to be designed
\cite{2022LSCITW,2025Tessellated,namboodiri2025fundamental}. In the present work, both the finite-field target $T$ and the access system $\Sigma$ are prescribed, and zero-error recovery is required for this particular pair. The designing variables are only the linear code and its block length. Thus, our setting is the fixed-target, fixed-topology capacity problem from network function computation. Section~\ref{sec:prelim} gives a detailed comparison of the models and parameters.

The main contributions of this paper are summarized as follows.
\begin{enumerate}
\item We establish a support-constrained row-space representation for vector-linear computation over three-layer networks. 
The key observation is that every linear computing code is equivalent to constructing a received matrix whose row blocks satisfy the local support-constraints and whose row space contains the target row space. 
This reformulation provides a unified view of linear code design through global row-space realization.

\item We completely solve the realization problem for a prescribed global row space. 
We show that a row space can be realized if and only if its locally supported sections satisfy a collection of rank conditions, which also determines the minimum communication load. 
Beyond evaluating a given row space, this characterization reveals how enlarging the row space can improve local rank supply and provides guidance for the design of useful auxiliary directions.

\item We develop deterministic constructions for MDS targets over cyclic source-access systems by leveraging the dual viewpoint of linear codes. 
In particular, we design cyclic parity-check structures whose local null spaces provide the required support-constrained transmissions. 
This approach achieves the capacity upper bound in the dense regime and the sparse regimes with integral cut-set bounds, and achieves the integer part of the bound in the remaining sparse regimes.
\end{enumerate}

The rest of this paper is organized as follows. 
Section~\ref{sec:prelim} introduces the basic model of the vector-linear computation problem over three-layer networks, and reviews related results in network function computation and linearly separable computation.  
Section~\ref{sec:framework} develops the support-constrained row-space realization framework and shows how linear computation codes can be represented through a support-constrained matrix. 
Section~\ref{sec:linear} studies the maximum achievable rate for a fixed global row space and analyzes the gap between the upper bound and the achievable rate. 
Section~\ref{sec:mds_target} specializes the general framework to MDS target matrices and cyclic support systems, deriving deterministic constructions in both dense and below-threshold regimes. Finally, Section~\ref{sec:conclusion} concludes the paper and offers future directions.

\emph{Notations.} 
For a positive integer $a$, let $[a]\triangleq\{1,2,\ldots,a\}$. For integers $a\le b$, let $[a:b]\triangleq\{a,a+1,\ldots,b\}$. 
For a positive integer \(s\), indices in cyclic expressions are interpreted modulo \(s\). For integers \(a\) and \(r\geq 1\), define the cyclic interval of length \(r\) by $[a:a+r-1]_{\rm s} \triangleq \{((a-1+t)\bmod s)+1:t=0,1,\ldots,r-1\}$. When \(r\leq s\), this set contains \(r\) distinct elements. All vector spaces are over a finite field $\F_q$. For a matrix $A$, $\Rank(A)$ denotes its rank, and $\Row(A)$ denotes its row space. For a set $S$, $|S|$ denotes its cardinality. For a matrix $T\in\F_q^{k\times s}$ and a subset $A\subseteq[s]$, let $T_A$ denote the submatrix of $T$ formed by the columns indexed by $A$. 
For a vector $\bz\in\F_q^s$, its support is  $\supp(\bz)=\{j\in[s]:z_j\neq 0\}.$

\section{Model and Preliminaries}\label{sec:prelim}

In this section, we first recall the network function computation model and two known capacity upper bounds. We then introduce the three-layer source-access model and define linear computing codes. Finally, we review the most closely related results in network function computation and linearly separable computation.

\subsection{Network function computation and capacity upper bounds}
A general network function computation problem is defined over a directed acyclic graph $G=(\cV,\cE)$,  where multiple edges between two vertices are allowed. For an edge $e\in\cE$, let $\tail(e)$ and $\head(e)$ denote its \emph{tail} and \emph{head} nodes. For a vertex $v\in\cV$, define $\In(v)=\{e\in\cE:\head(e)=v\}, \Out(v)=\{e\in\cE:\tail(e)=v\}$.  A network is denoted by $\cN=(G,S,\gamma)$,  where $S=\{\sigma_1,\sigma_2,\ldots,\sigma_s\}$ is the set of source nodes and $\gamma$ is the sink node. The sink wishes to compute a \emph{target function} \[ f:\cA^s\to\cO \] of the source messages. \emph{An $(\ell,n)$ network code} computes \(\ell\) independent instances of \(f\) while transmitting at most \(n\) symbols on each edge. When the source and edge alphabets are the same, the computation rate is \(\ell/n\). The computing capacity \(C(\cN,f)\) is the supremum of all achievable rates. 

This paper focuses on vector-linear target functions over $\F_q$. Namely, the target function has the form  $f(x)=Tx$, where $T\in\F_q^{k\times s},\Rank(T)=k$. We recall two cut-set upper bounds for computing vector-linear target functions over a general network. These bounds will be used later as benchmarks for the three-layer constructions developed in this paper.
We first introduce the cut-set notation. For an edge subset $C\subseteq\cE$, let \[ I_C \triangleq \{i\in[s]: \text{ after deleting } C,\text{ there is no directed path from } \sigma_i \text{ to } \gamma\}. \] Thus, $I_C$ is the set of source indices separated from the sink by $C$. We say that $C$ is a \emph{cut set} if $I_C\neq\emptyset$, and denote by $\Lambda(\cN)$ the collection of all cut sets of $\cN$. 
The standard cut-set bound in \cite{2011AFKZ} specializes to the following rank bound.

\begin{thm}\label{thm:cut-rank-bound} For a network $\cN$ and a vector-linear target function $f(x)=Tx$, the computing capacity satisfies \[ C(\cN,T) \le \min_{C\in\Lambda(\cN)} \frac{|C|}{\Rank(T_{I_C})}. \] 
\end{thm}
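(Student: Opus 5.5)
Fix an arbitrary cut set $C\in\Lambda(\cN)$ and an arbitrary $(\ell,n)$ network code computing $\ell$ instances of $f(x)=Tx$ at the sink with zero error. The plan is a counting argument: everything the sink learns about the separated sources $\{\sigma_i:i\in I_C\}$ must pass through the edges of $C$. We show this information must take at least $q^{\ell\,\Rank(T_{I_C})}$ distinct values. Since the edges of $C$ can take only $q^{n|C|}$ values in total, this gives $\ell\,\Rank(T_{I_C})\le n|C|$, that is, $\ell/n\le |C|/\Rank(T_{I_C})$. Minimizing over $C$ and taking the supremum over achievable rates then yields the theorem.

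\textbf{Step 1 (function of the cut).} Let $J=[s]\setminus I_C$. Fix the messages $x_J\in\F_q^{\ell\times|J|}$ of the unseparated sources at an arbitrary value $\bar x_J$, and vary only $x_{I_C}$. Every edge in $\In(\gamma)$ is reachable, in the DAG with $C$ deleted, only from sources in $J$ or through edges of $C$. By definition of $I_C$, no path from $\sigma_i$ with $i\in I_C$ reaches $\gamma$ avoiding $C$. Induction along a topological order then shows that each edge message on a path to $\gamma$ avoiding $C$ is a deterministic function of $x_J$ and of the messages on edges of $C$. Hence, for fixed $\bar x_J$, the sink's input is a function of the $|C|$-tuple of messages on $C$, which ranges over a set of size at most $q^{n|C|}$. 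The sink output is therefore also such a function.

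\textbf{Step 2 (counting distinct outputs).} With $x_J=\bar x_J$ fixed, the sink must output $\ell$ instances of $T_{I_C}x_{I_C}+T_J\bar x_J$. Because $T_J\bar x_J$ is a fixed shift, the number of distinct required outputs equals $|\{T_{I_C}x_{I_C}\}|$ taken over all $\ell$ instances, which is $q^{\ell\,\Rank(T_{I_C})}$. Zero error forces two source assignments with different required outputs to produce different cut-message tuples. This gives $q^{n|C|}\ge q^{\ell\,\Rank(T_{I_C})}$.

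\textbf{Main obstacle.} The step needing care is Step 1: making precise that the sink's received symbols are determined by $(x_J,\text{messages on }C)$ when edges of $C$ may lie anywhere in the graph, including downstream of other edges of $C$. I would handle this by processing edges in topological order. Messages on edges of $C$ are treated as free symbols. For any other edge $e$ from which $\gamma$ is reachable, the tail is either a source in $J$ or has in-edges each of which is in $C$ or is again such an edge. A source $\sigma_i$ with $i\in I_C$ cannot be the tail, since that would give a $C$-avoiding path to $\gamma$. The remaining details are routine.
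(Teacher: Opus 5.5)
Your proof is correct, but it takes a different route from the paper. The paper gives no argument of its own: it specializes the general cut-set bound of \cite{2011AFKZ}, whose denominator is $\log_q$ of the footprint size of $f$ on $I_C$. For $f(x)=Tx$, two assignments of $x_{I_C}$ are equivalent exactly when $T_{I_C}$ maps them to the same vector, so the footprint size is $q^{\Rank(T_{I_C})}$ and the stated bound follows.

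You instead prove the linear case directly by the underlying cut-set counting. Your topological-order induction is sound, including when edges of $C$ lie downstream of other edges of $C$. It shows that, with $x_J$ frozen, everything the sink receives is a function of the messages on $C$. Your count of $q^{\ell\,\Rank(T_{I_C})}$ distinct required outputs is also right.

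The advantage of your route is that it is self-contained and uses linearity exactly where it matters. Since $T_J\bar x_J$ is only a shift, the number of distinct required outputs is the same for every frozen value $\bar x_J$. For a nonlinear $f$, freezing $x_J$ only certifies the number of distinct values of $f(\cdot,\bar x_J)$, which can be strictly smaller than the footprint size. That is where a general footprint-type bound needs extra care, and it is the kind of issue taken up in the later refinements the paper cites (e.g.\ \cite{2018HTYG}). For linear targets the two counts coincide, so your argument also confirms that the specialization the paper relies on is legitimate.

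The paper's route, in turn, is a one-line reduction that places the statement in the same general framework as the strong-partition bound of Theorem~\ref{thm:strong-partition-bound}. Finally, cuts with $\Rank(T_{I_C})=0$ contribute $+\infty$. Your inequality $\ell\,\Rank(T_{I_C})\le n|C|$ is then vacuous, as it should be.
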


A stronger upper bound is obtained by partitioning a cut set. For a cut set $C\in\Lambda(\cN)$, define \[ K_C \triangleq \{i\in[s]: \text{ there exists a directed path from }\sigma_i \text{ to } \tail(e)\text{ for some }e\in C\}. \] Clearly, $I_C\subseteq K_C$. 
We also write $ J_C\triangleq K_C\setminus I_C.$ 

\begin{dfn}[Definition~2 \cite{2019GYYL}] \label{def:strong-partition} Let $C\in\Lambda(\cN)$ be a cut set. A partition $\cP_C=\{C_1,C_2,\ldots,C_t\}$ of $C$ is called a strong partition of $C$ if the following two conditions hold: 
\begin{enumerate} 
\item $I_{C_a}\neq\emptyset$ for every $a\in[t]$; 
\item $I_{C_a}\cap K_{C_b}=\emptyset$ for every $a,b\in[t]$ with $a\neq b$. 
\end{enumerate} 
\end{dfn} 
The trivial partition $\{C\}$ is always a strong partition of $C$. For a strong partition $\cP_C=\{C_1,C_2,\ldots,C_t\}$,  define $I_a\triangleq I_{C_a}, a\in[t]$,  and $I_{\cP_C}\triangleq \bigcup_{a=1}^t I_a. $ The vector-linear rank associated with $\cP_C$ is defined as 
\begin{equation}\label{eq:RTPC}
    R_T(\cP_C) \triangleq \Rank(T_{I_C}) + \sum_{a=1}^t \Rank(T_{I_a}) - \Rank(T_{I_{\cP_C}}).
\end{equation}
Equivalently, the same quantity is sometimes denoted by $\Rank_{\cP_C}(T)$ in the literature. Using the definition of strong partition and the refined upper bound in \cite{2019GYYL}, the following upper bound is obtained.

\begin{thm}[Theorem~2 \cite{2024ZhouFu}]\label{thm:strong-partition-bound} For the network $\cN$ and the vector-linear target function $f(x)=Tx$, the computing capacity satisfies 
\[ C(\cN,T) \le \min_{C\in\Lambda(\cN)} \frac{|C|} {\max_{all\ \cP_C} R_T(\cP_C)}.\]
\end{thm}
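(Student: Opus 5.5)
The bound follows the standard information-theoretic cut-set argument of \cite{2019GYYL}, refined with a rank count adapted to the linear target. Fix an arbitrary $(\ell,n)$ network code computing $f(x)=Tx$ with zero error, a cut set $C\in\Lambda(\cN)$, and a strong partition $\cP_C=\{C_1,\ldots,C_t\}$. It suffices to show $\ell\cdot R_T(\cP_C)\le n|C|$. Taking the maximum over $\cP_C$ and the minimum over $C$ then gives the claimed bound on every achievable rate, hence on $C(\cN,T)$.

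The plan is a counting (fooling-set) argument on source messages.

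First, set all sources outside $K_C$ to fixed values. Then the symbols on $C$ are a function of $x_{K_C}=(x_{I_C},x_{J_C})$. Let $I_{\cP_C}^c\triangleq I_C\setminus I_{\cP_C}$.

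Next, use the two properties of a strong partition. Each $C_a$ separates $I_a$ from $\gamma$, and $I_a\cap K_{C_b}=\emptyset$ for $b\neq a$. Hence the symbols on $C_b$ do not depend on $x_{I_a}$ for $a\neq b$. So the vector on $C_a$ is a function $g_a(x_{I_a},y)$, where $y$ collects the remaining variables $x_{I_{\cP_C}^c}$ and $x_{J_C}$.

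The sink's output is determined by the symbols on $C$ together with the sources not separated by $C$. Fix those, and fix $x_{J_C}$, arbitrarily. Zero-error computation then forces the following: if two inputs differing only on $I_C$ yield the same symbols on $C$, they must give the same value of $T_{I_C}x_{I_C}$ over all $\ell$ instances. Here $x_{I_C}$ is an $|I_C|\times\ell$ array.

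The key step is a counting lemma. Consider the quotient of the image of $T_{I_C}$ modulo the sum $\sum_a\Row$-images $T_{I_a}(\cdot)$ together with the image of $T_{I^c_{\cP_C}}$. The number of distinct values of $T_{I_C}x_{I_C}$ must be at most $\prod_a|\text{values on }C_a|$. The reason is that we may vary the $x_{I_a}$ independently and use the standard "swap" argument of \cite{2019GYYL}. If two full assignments agree on each $C_a$ block-by-block, then mixing them coordinatewise across the blocks $a$ produces the same cut symbols. This forces equality of the relevant partial sums of $T_{I_a}x_{I_a}$ modulo the image of the other coordinates.

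Carrying this out linearly, the number of distinguishable target values is $q^{\ell\cdot R_T(\cP_C)}$, where
\[
R_T(\cP_C)=\Rank(T_{I_C})+\sum_a\Rank(T_{I_a})-\Rank(T_{I_{\cP_C}}).
\]
This is exactly the rank of the "glued" linear map in which each block $T_{I_a}$ is duplicated independently. Concretely, I would show that the map
\[
(x_{I_C},x'_{I_1},\ldots,x'_{I_t})\mapsto\bigl(T_{I_C}x_{I_C},\,(T_{I_a}x'_{I_a})_a\bigr)
\]
restricted to the appropriate fooling family has image dimension $\ell R_T(\cP_C)$ after quotienting by common directions. Comparing with the $q^{n|C|}$ possible cut patterns gives $\ell R_T(\cP_C)\le n|C|$.

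The main obstacle is making this swapping argument yield precisely the rank expression $R_T(\cP_C)$ rather than merely $\Rank(T_{I_C})$. One must carefully account for the overlap term $\Rank(T_{I_{\cP_C}})$, which is the dimension shared between the sum of block images and the full image. Since the statement is Theorem~2 of \cite{2024ZhouFu}, the cleanest route is to invoke the general refined bound of \cite{2019GYYL}: capacity is at most $|C|/\log_{|\cA|}n_{C,f}$ for the strong-partition equivalence count. It then remains only to compute that count for linear $f$, i.e., to verify that it equals $q^{R_T(\cP_C)}$ by a dimension computation on the quotient spaces described above.
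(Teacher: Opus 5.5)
The paper does not prove this statement itself. It imports it from \cite{2024ZhouFu}, which evaluates the refined bound of \cite{2019GYYL} for $f(x)=Tx$. Your closing fallback is that same citation route, but the step you call ``it then remains only to compute that count'' is the whole content of the theorem, and the mechanism you sketch for it does not work. Write $\mathrm{col}(\cdot)$ for column spaces and $L\triangleq I_C\setminus I_{\cP_C}$.

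First, the quotient you describe, $\mathrm{col}(T_{I_C})$ modulo $\sum_a\mathrm{col}(T_{I_a})+\mathrm{col}(T_L)$, is zero, because $I_C=\bigl(\bigsqcup_a I_a\bigr)\sqcup L$. Second, the claim that the number of distinguishable \emph{target values} is $q^{\ell R_T(\cP_C)}$ fails exactly when the refined bound matters. The map $T_{I_C}x_{I_C}$ takes only $q^{\ell\Rank(T_{I_C})}$ values, while $R_T(\cP_C)>\Rank(T_{I_C})$ whenever $\sum_a\Rank(T_{I_a})>\Rank(T_{I_{\cP_C}})$. Bounding target values by $\prod_a|\text{values on }C_a|$ only reproduces Theorem~\ref{thm:cut-rank-bound}. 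Third, the block swap is valid only between inputs that share $x_L$, since every $g_a$ depends on $x_L$. Where it applies, it gives exact equality $T_{I_a}x_{I_a}=T_{I_a}x'_{I_a}$, not equality modulo other images. Finally, the ``glued map'' with independent copies $x'_{I_a}$ does not describe inputs of a single run of the code, and ``quotienting by common directions'' is never specified.

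The missing idea is to count inputs that the cut must separate, not target values, with $x_L$ restricted to representatives modulo the block space. Let $U_a=\mathrm{col}(T_{I_a})$, $U=\sum_aU_a=\mathrm{col}(T_{I_{\cP_C}})$ and $V=\mathrm{col}(T_{I_C})$. Then $R_T(\cP_C)=\dim(V/U)+\sum_a\dim U_a$. Fix all sources outside $I_C$. Since $V=U+\mathrm{col}(T_L)$, you can pick a set $\mathcal X_L$ of $q^{\ell\dim(V/U)}$ arrays $x_L$ whose images $T_Lx_L$ represent pairwise distinct elements of $(V/U)^{\ell}$. For each $a$, pick a set $\mathcal X_a$ of $q^{\ell\dim U_a}$ arrays $x_{I_a}$ with pairwise distinct $T_{I_a}x_{I_a}$. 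Now suppose two inputs in $\mathcal X_L\times\prod_a\mathcal X_a$ give the same symbols on $C$.
\begin{enumerate}
\item Equal cut symbols give equal outputs, so every column of $T_L(x_L-x'_L)$ lies in $U$. Hence $x_L=x'_L$.
\item For each $a_0$, replace only $x_{I_{a_0}}$ by $x'_{I_{a_0}}$ in the first input. The symbols on $C_{a_0}$ depend only on $(x_{I_{a_0}},x_L,x_{J_C})$, so they become those of the second input, which equal those of the first. The symbols on $C_b$, $b\neq a_0$, are unchanged because $I_{a_0}\cap K_{C_b}=\emptyset$.
\item The output is therefore unchanged, so $T_{I_{a_0}}x_{I_{a_0}}=T_{I_{a_0}}x'_{I_{a_0}}$, i.e.\ $x_{I_{a_0}}=x'_{I_{a_0}}$.
\end{enumerate}
Thus $q^{\ell R_T(\cP_C)}$ inputs produce pairwise distinct patterns among at most $q^{n|C|}$ possible ones, which gives $\ell R_T(\cP_C)\le n|C|$. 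This fooling-set argument makes your proof self-contained, without the general equivalence-class machinery of \cite{2019GYYL} that the paper relies on through \cite{2024ZhouFu}.
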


These bounds will be used only through their specializations to the three-layer networks considered below.

\subsection{The three-layer source-access model and linear network codes}
\begin{figure}
    \centering
    \includegraphics[width=0.5\linewidth]{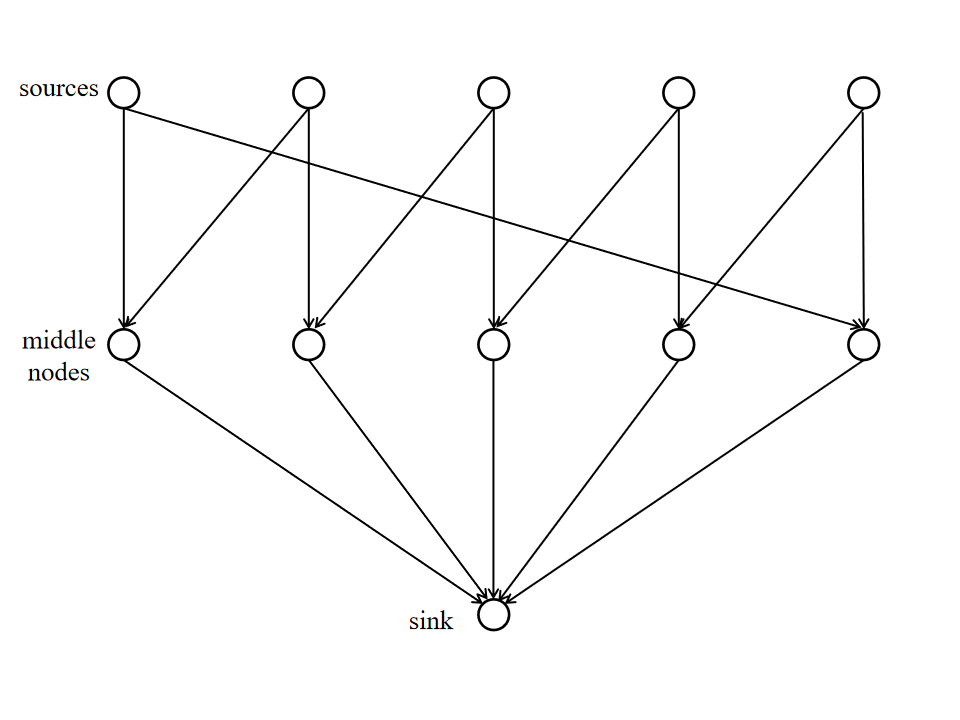}
    \caption{A three-layer network with $s=m=5$ and $\Sigma_i=[i:i+1]_{\rm 5}$.}
    \label{fig:3_layer_model}
\end{figure}
A three-layer network is denoted by  $\cN_{s,m,\Sigma}$,  where there are $s$ source nodes $\{\sigma_1,\sigma_2,\ldots,\sigma_s\},$ $m$ middle nodes $\{v_1,v_2,\ldots,v_m\}, $ and one sink node $\gamma$. The source-access topology is described by a set system 
\[ \Sigma=\{\Sigma_i:i\in[m]\},\qquad \Sigma_i\subseteq[s]. \] See Fig.~\ref{fig:3_layer_model} as an example.
There is an edge from \(\sigma_j\) to \(v_i\) if and only if \(j\in\Sigma_i\), and every middle node \(v_i\) has one outgoing edge to the sink. Thus, \(v_i\) can receive coded information only from the sources indexed by \(\Sigma_i\).

We now specialize the two cut-set bounds in
Theorems~\ref{thm:cut-rank-bound} and~\ref{thm:strong-partition-bound}
to the three-layer source-access network $\cN_{s,m,\Sigma}$.
For a source subset
$A\subseteq[s]$, define its middle-layer neighborhood by
\[
    \Gamma(A)\triangleq
    \{i\in[m]:\Sigma_i\cap A\neq\emptyset\}.
\]
For a single source index $j\in[s]$, we write $\Gamma(j)\triangleq\Gamma(\{j\})$.
We also define the missing source set at \(v_i\) as
$ \overline{\Sigma}_i\triangleq [s]\setminus \Sigma_i.$
For $B\subseteq[m]$, let $e_B\triangleq\{v_i\gamma:i\in B\}$ be the corresponding set of middle-to-sink edges.
For convenience, write $I_B=I_{e_B},K_B=K_{e_B}$. The three-layer topology gives
\[
    I_B=
    \{j\in[s]:\Gamma(j)\subseteq B\}\text{ and } K_B
    \triangleq
    \bigcup_{i\in B}\Sigma_i.
\]

For this network, the standard cut-set bound can be written as
\begin{equation}\label{eq:three-layer-cut-bound-B}
    C(\cN_{s,m,\Sigma},T)
    \le
    \min_{\substack{B\subseteq[m]\\ I_B\neq\emptyset}}
    \frac{|B|}{\Rank(T_{I_B})}.
\end{equation}
Equivalently, 
\[
    C(\cN_{s,m,\Sigma},T)
    \le
    \min_{\emptyset\neq A\subseteq[s]}
    \frac{|\Gamma(A)|}{\Rank(T_A)}.
\]

We next record the middle-layer specialization of the strong-partition bound. This specialization considers only cuts formed by middle-to-sink edges.
A partition $\cP_B=\{B_1,B_2,\ldots,B_t\}$ of $B$ is called a strong partition of $B$ if 
\begin{itemize}
    \item $I_{B_a}\neq\emptyset,\forall a\in[t]$;
    \item $I_{B_a}\cap K_{B_b}=\emptyset,\forall a\neq b.$
\end{itemize}
Let $\mathcal{P}(B)$ denote the collection of all such partitions.
For \(\cP_B\in\mathfrak P(B)\),  define $I_{\cP_B}\triangleq \bigcup_{a=1}^t I_{B_a}$ and
\[
    R_T(\cP_B)
    \triangleq
    \Rank(T_{I_B})
    +
    \sum_{a=1}^t \Rank(T_{I_{B_a}})
    -
    \Rank(T_{I_{\cP_B}}).
\]
Then, the strong partition cut-set bound gives
\begin{equation}\label{eq:three-layer-strong-bound}
    C(\cN_{s,m,\Sigma},T)
    \le
    \min_{\substack{B\subseteq[m]\\ I_B\neq\emptyset}}
    \frac{|B|}
    {\max_{\cP_B\in\mathcal{P}(B)}R_T(\cP_B)}.
\end{equation}
Since \(\{B\}\) is a strong partition and $R_T(\{B\})=\Rank(T_{I_B})$, the bound in \eqref{eq:three-layer-strong-bound} is at least as strong as \eqref{eq:three-layer-cut-bound-B}.

\subsection{Linear network computing codes}
For a positive integer $\ell$, let the $j$-th source generate $\bx_j=(x_{j,1},x_{j,2},\ldots,x_{j,\ell})^T\in\F_q^\ell.$  
For each instance $a\in[\ell]$, define \[ \bx^{(a)}=(x_{1,a},x_{2,a},\ldots,x_{s,a})^T\in\F_q^s. \] 
We order the complete source vector as \[ \bx_S=\big((\bx^{(1)})^T,(\bx^{(2)})^T,\ldots,(\bx^{(\ell)})^T\big)^T \in\F_q^{s\ell}. \] 
The desired output for the \(\ell\) instances is therefore  $(I_\ell\otimes T)\bx_S\in \F_q^{k\ell}$,  where $\otimes$ denotes the Kronecker product.

A linear $(\ell,n)$ computation code for $(\cN_{s,m,\Sigma},T)$ consists of the following linear maps. 
\begin{itemize} 
\item For every \(j\in\Sigma_i\), source \(\sigma_j\) sends $\bz_{j,i}=A_{j,i}\bx_j$ to middle node \(v_i\), where $A_{j,i}\in\F_q^{n\times\ell}$.
\item Middle node \(v_i\) sends $\by_i=\sum_{j\in\Sigma_i}B_{j,i}\bz_{j,i}$ to the sink, where $B_{j,i}\in\F_q^{n\times n}$.
\item The sink applies a decoding matrix $D\in\F_q^{k\ell\times mn}$,  which maps the received vector  $\by=(\by_1^T,\by_2^T,\ldots,\by_m^T)^T\in\F_q^{mn}$  to the desired target vector. 
\end{itemize} 
The code is valid if, for every source message vector $\bx_S\in\F_q^{s\ell}$, \[ D\by=(I_\ell\otimes T)\bx_S. \] Its computation rate is $R=\frac{\ell}{n}. $
The linear computing capacity of $(\cN_{s,m,\Sigma},T)$ is defined as \[ C_{\rm lin}(\cN_{s,m,\Sigma},T) \triangleq \sup\left\{\frac{\ell}{n}: \text{there exists a valid linear }(\ell,n)\text{ computation code}\right\}. \]
Clearly,
\[
    C_{\mathrm{lin}}(\cN_{s,m,\Sigma},T)
    \leq
    C(\cN_{s,m,\Sigma},T).
\]

\begin{rmk}[On source-to-middle transmissions] \label{rem:source-to-middle} 
In the definition of $(\ell,n)$ network codes, a middle node is \emph{not} assumed to know the entire $\ell$-symbol message of each adjacent source. Each edge $\sigma_jv_i$ carries only $n$ coded symbols. However, this is sufficient for our matrix formulation. Indeed, if middle node $v_i$ is supposed to transmit an $n$-dimensional linear combination involving source $x_j$, then source $\sigma_j$ can send exactly the corresponding $n$-dimensional contribution to $v_i$. Thus the source-to-middle layer does not impose an additional restriction beyond the support-constraint determined by $\Sigma_i$. 
\end{rmk}

\subsection{Related models and prior results}

\subsubsection{Vector-linear network function computation}

For vector-linear target functions, Appuswamy and Franceschetti \cite{2014Appuswamy} studied the existence of rate-one linear network codes. 
One important case considered there is when the target matrix has rank $s-1$ and is equivalent to a matrix of the form
$\begin{bmatrix}
    I_{s-1}\ \bu
\end{bmatrix}$,
where all entries of $\bu$ are nonzero. 
Their result shows that, under this nondegeneracy condition, the cut set bound can guarantee the existence of a rate-one linear code. 
This result already reveals that the algebraic structure of the target matrix is crucial. Specifically, two target matrices with the same rank may behave differently if their row or column dependencies interact differently with the network topology.

Exact capacity results are known for several restricted topologies.
Li and Xu \cite{2022LiXudiamond} characterized the capacity of vector-linear function computation over diamond networks. The three-source model studied in \cite{guang2025distributed} is closer to the topology considered here. In \cite{guang2025distributed}, several encoders observe prescribed source subsets and communicate with one decoder. That work gives detailed results for three sources and shows that the strong-partition upper bound need not be tight even for small vector-linear instances.

The present paper considers arbitrary numbers of sources and middle nodes, while focusing on linear codes and structured target matrices. Its purpose is not to classify all three-layer instances, but to develop a general support-constrained realization method and to obtain deterministic results for important structured families.

\subsubsection{Linearly separable computation}

Linearly separable computation (LSC) is the closest distributed computing model to the three-layer problem studied here. To avoid introducing a second notation system, we use the notation of this paper throughout the comparison. In the common LSC notation, the correspondence is
\[K=s,\qquad N=m,\qquad K_c=k,\]
and the LSC demand matrix \(F\) corresponds to our target matrix \(T\).

In an LSC problem, dataset \(j\in[s]\) produces a message \(\bx_j\), and worker \(i\in[m]\) is assigned the datasets indexed by \(\Sigma_i\). The master wishes to recover the \(k\) linear combinations specified by \(T\). If each message contains \(\ell\) symbols and worker \(i\) transmits \(p_i\) symbols, its normalized communication load is
\[\lambda_i\triangleq\frac{p_i}{\ell}.\]
The maximum worker load and total download are, respectively,
\[\lambda_{\max}\triangleq\max_{i\in[m]}\lambda_i,\qquad\lambda_{\mathrm{sum}}\triangleq\sum_{i=1}^m\lambda_i.\]
Under the uniform-load restriction \(p_i=n\), we have
\[
    \lambda_{\max}=\frac{n}{\ell}=\frac{1}{R},
    \qquad
    \lambda_{\mathrm{sum}}=\frac{mn}{\ell}=\frac{m}{R}.
\]
Thus, in the no-straggler and uniform-load setting, LSC and the present model have the same linear encoder--decoder feasibility problem. Their theorems may nevertheless address different demand models, assignment variables, block models, and performance measures.

A straggler-robust LSC scheme requires the master to decode from every responding set of a prescribed size. The present network model has no stragglers, that is, the sink receives the transmissions of all \(m\) middle nodes. Moreover, the unit-capacity middle-to-sink edges lead naturally to a maximum per-node load, rather than only a total download constraint.
Existing LSC work most closely related to our setting can be organized into two complementary directions, according to whether the data assignment or the demand is treated as the principal designing variable.

\textbf{Fixed assignment and generic demand}:
A common LSC assumption is that the entries of \(T\) are chosen independently and uniformly from a sufficiently large finite field. We refer to this as a \emph{generic demand}. For any fixed \(A\subseteq[s]\), such a matrix satisfies \[\Rank(T_A)=\min\{k,|A|\}\] with high probability as the field size grows. Generic demands therefore behave like MDS targets with high probability. A generic-demand theorem guarantees performance for almost all target matrices, but it need not apply to a particular structured target.

The information-theoretic LSC formulation in \cite{2021LSCWan,2022LSCWan} studies communication--computation tradeoffs under this demand model. 
For cyclic assignments, Huang et al.~\cite{2026LSCcyclic} introduce virtual demands and use interference alignment to eliminate the messages unavailable at each worker. Their demand matrix, virtual demands, and several encoding coefficients are selected randomly over a sufficiently large field, and decodability is formulated as a high-probability guarantee as the field size tends to infinity. Under the no-straggler specialization with equal numbers of datasets and workers, their low-demand-dimension regime corresponds to the lower-support regime \(r<s-k\) considered in this paper. For this regime, analytic decodability is established for specified parameter cases, while additional finite lengths are verified numerically. In contrast, our parity-check construction is deterministic and gives a complete algebraic guarantee for every prescribed MDS target. Moreover, the construction works over any field satisfying \(q>r+k-1\). Thus, it provides a deterministic finite-field guarantee depending only on the sliding-window length, without requiring asymptotically large fields or random full-rank events.

The heterogeneous model in \cite{2026LSC_heterogeneous} allows the assignment \(\Sigma\) to be arbitrary and given in advance. It studies the largest generic-demand dimension that can be supported under a communication-load constraint. Its converse and construction are described through zero submatrices of the assignment pattern. To make this concept precise, let $P_\Sigma\in\{0,1\}^{m\times s}$ with
\[
    (P_\Sigma)_{i,j}
    =
    \begin{cases}
    1, & j\in\Sigma_i,\\
    0, & j\notin\Sigma_i.
    \end{cases}
\]
For \(G\subseteq[m]\) and \(Q\subseteq[s]\), the submatrix \(P_\Sigma(G,Q)\) is a \emph{zero submatrix} if
\[P_\Sigma(G,Q)=\Zero.\]
Equivalently, none of the workers in \(G\) has access to any dataset in \(Q\). Such a zero submatrix limits the number of independent demand dimensions involving the messages indexed by \(Q\) that can be supplied by the workers outside \(G\). The results in \cite{2026LSC_heterogeneous} give universal converse and achievability bounds based on this structure, with equality in specified regimes, and extend the bounds from integer to fractional worker loads.

For a generic demand, the restriction to \(Q\) has rank \(\min\{k,|Q|\}\) with high probability. For a prescribed target, however, the relevant quantity is its actual rank \(\Rank(T_Q)\). The following example shows that even the converse of \cite{2026LSC_heterogeneous} cannot be transferred unchanged from a generic demand to a prescribed structured demand.

\begin{figure}
    \centering
    \includegraphics[width=0.5\linewidth]{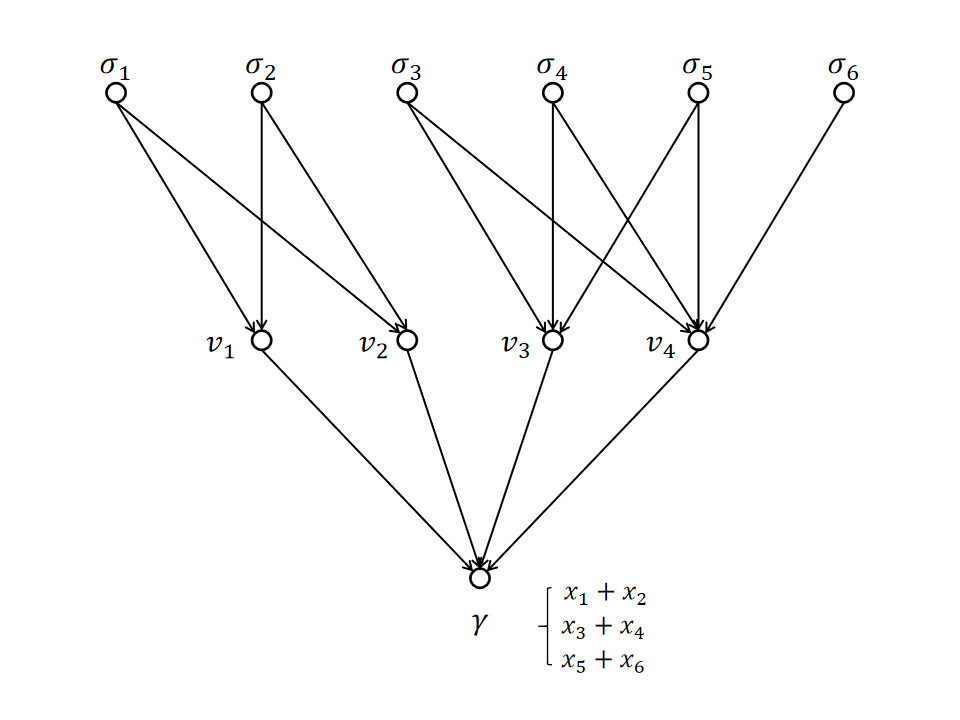}
    \caption{A three-layer network with $\Sigma_1=\Sigma_2=\{1,2\},\
    \Sigma_3=\{3,4,5\},$ and $
    \Sigma_4=\{3,4,5,6\}$.}
    \label{fig:LSC_example}
\end{figure}

\begin{example}\label{ex:generic-structured-lsc}
Let \(s=6\), \(m=4\), and
\[
    \Sigma_1=\Sigma_2=\{1,2\},\qquad
    \Sigma_3=\{3,4,5\},\qquad
    \Sigma_4=\{3,4,5,6\}.\]
The network is depicted in Fig.~\ref{fig:LSC_example}. Suppose that each worker sends at most one symbol for one function instance. Consider $ G=\{1,2\}$ and $Q=\{3,4,5,6\}.$ Since \(P_\Sigma(G,Q)=\Zero\), all information involving the messages in \(Q\) must be supplied by workers \(3\) and \(4\). These two workers send only two symbols in total. A generic rank-three target satisfies
\[
    \Rank(T_Q)=3
\]
with high probability and therefore cannot be computed under this load.

Now consider the prescribed groupwise-sum target
\[
    T
    =
    \begin{bmatrix}
    1&1&0&0&0&0\\
    0&0&1&1&0&0\\
    0&0&0&0&1&1
    \end{bmatrix}.
\]
Although \(\Rank(T)=3\), its restriction to \(Q\)
has rank
\[
    \Rank(T_Q)=2.
\]
Worker \(1\) can send \(x_1+x_2\), worker \(3\) can send \(x_3+x_4\), and worker \(4\) can send \(x_5+x_6\). The sink then recovers all three requested functions.

Thus, a generic rank-three demand is impossible under this assignment and load, whereas a particular rank-three target is feasible. The fixed-target problem must therefore retain the restricted ranks and row-space structure of \(T\).
\end{example}

\textbf{Demand-aware and universal assignment design:}
A second line of LSC research starts from a prescribed demand matrix and designs the task assignment or communication pattern. A basic algebraic representation is the sparse factorization $T=DE$ or, with message subdivision,
\[
    I_\ell\otimes T=DE.
\]
The rows of \(E\) describe the linear combinations computed by the workers, while \(D\) describes how these combinations are decoded. Row sparsity in \(E\) controls which datasets must be assigned to each worker. In multi-user models, sparsity in \(D\) also controls which worker--user communication links are used.

Khalesi and Elia \cite{2022LSCITW} relate this factorization problem to syndrome decoding, covering codes, and partial covering codes. The task placement and the supports of the factors are designing variables for the requested functions. Tessellated distributed computing \cite{2025Tessellated} develops a related sparse factorization model over the real field and also permits approximate recovery.

A \emph{universal demand} requirement asks one assignment and delivery design to support every demand matrix in a prescribed class, rather than one fixed target or a randomly drawn target. The nullspace-based construction in \cite{namboodiri2025fundamental} jointly designs task assignment and server transmissions for such a requirement. Under linear encoding and no subpacketization, it obtains exact tradeoffs in some regimes and constant-factor results in general. Here, no subpacketization means that each component message is treated as one coding block; allowing message subdivision corresponds to using a nontrivial block length such as \(\ell>1\) in our model. The multi-user extension \cite{namboodiri2026fundamental} additionally constrains the worker-user communication pattern.

These factorization and nullspace viewpoints are closely related to the constructions in this paper. In particular, the virtual demands used in cyclic LSC play a role similar to auxiliary rows: they enlarge an effective demand space so that locally supported transmissions become available. Therefore, neither the factorization \(T=DE\) nor the introduction of auxiliary demand rows is, by itself, the distinction of the present work.

The difference lies in which objects are fixed and which problem is solved. In our model, both \(T\) and \(\Sigma\) are prescribed. Hence the support pattern of the encoding matrix cannot be redesigned after the target is given. We allow arbitrary block length and require zero-error recovery for this particular finite-field pair \((T,\Sigma)\). Our first result proves that every linear code for this pair is equivalent to a support-constrained matrix realization. Our second result then gives a necessary and sufficient condition for realizing any prescribed global row space under the fixed support system \(\Sigma\).

\section{Support-Constrained Matrix Representation of Linear Codes}\label{sec:framework}
This section gives a matrix representation of linear computation codes over the three-layer network. The main idea is to collect all symbols received by the sink into one matrix. The access sets constrain the support of its row blocks, while decodability imposes a condition on its row space. This algebraic viewpoint is related to the matrix-based formulation
of network coding introduced in~\cite{2003Koetter}. This separates the local constraints of the network from the global requirement of the target function.

For \(j\in[s]\), define
\[ J_j^{(\ell)} \triangleq \{(a-1)s+j:a\in[\ell]\}\subseteq[s\ell].\]
These are the coordinates of \(\bx_S\) that belong to source \(\sigma_j\).
For \(A\subseteq[s]\), let
\[
    J_A^{(\ell)}
    \triangleq
    \bigcup_{j\in A}J_j^{(\ell)}
\]
and define the corresponding coordinate subspace by
\[
    \mathscr E_A^{(\ell)}
    \triangleq
    \left\{
        \bu\in\F_q^{s\ell}:
        u_t=0\text{ for every }t\notin J_A^{(\ell)}
    \right\}.
\]
We also write $\mathscr C_T^{(\ell)}\triangleq\Row(I_\ell\otimes T)$ for the lifted target row space. Since \(T\) has full row rank, \(\dim\mathscr C_T^{(\ell)}=k\ell\).

For a linear \((\ell,n)\) code, the vector sent by middle node \(v_i\) can be written as
\[
    \by_i
    =
    \sum_{j\in\Sigma_i}B_{j,i}A_{j,i}\bx_j
    \triangleq
    E_i\bx_S,
\]
where $E_i\in\F_q^{n\times s\ell}.$ Because \(v_i\) has access only to the sources in \(\Sigma_i\), every row of \(E_i\) belongs to \(\cE_{\Sigma_i}^{(\ell)}\).

\begin{dfn}[Support-constrained matrix]\label{def:support-constrained-matrix}
A block matrix
\[
    E
    =
    \begin{bmatrix}
        E_1\\
        E_2\\
        \vdots\\
        E_m
    \end{bmatrix}
    \in\F_q^{mn\times s\ell},
\]
where $E_i\in\F_q^{n\times s\ell},$ is called \emph{\(\Sigma\)-support-constrained} if for every $i\in[m]$,
\[\Row(E_i)\subseteq\cE_{\Sigma_i}^{(\ell)}.\]
Equivalently, for every $i\in[m]$, $E_i\bigl(:,J_{[s]\setminus\Sigma_i}^{(\ell)}\bigr)=\Zero.$
\end{dfn}

The following theorem is the basic representation result.

\begin{thm}\label{thm:realization-implies-code}
There exists a linear $(\ell,n)$ computation code for $(\cN_{s,m,\Sigma},T)$ if and only if there exists a $\Sigma$-support-constrained matrix $E\in\F_q^{mn\times s\ell}$  such that
\begin{equation}
\label{eq:received-row-space-condition}
    \mathscr C_T^{(\ell)}\subseteq\Row(E).
\end{equation}
\end{thm}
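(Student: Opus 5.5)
The plan is to prove the two directions separately, working throughout with the global encoding matrix $E$ obtained by stacking the local encoding matrices $E_i$.

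\emph{Necessity.} Suppose a valid linear $(\ell,n)$ code is given. Define $E_i=\sum_{j\in\Sigma_i}B_{j,i}A_{j,i}P_j$, where $P_j\in\F_q^{\ell\times s\ell}$ is the coordinate projection with $\bx_j=P_j\bx_S$. The columns of $P_j$ outside $J_j^{(\ell)}$ vanish, so every row of $E_i$ lies in $\mathscr E_{\Sigma_i}^{(\ell)}$. Hence the stacked matrix $E$ is $\Sigma$-support-constrained, and $\by=E\bx_S$. Validity of the code gives $DE\bx_S=(I_\ell\otimes T)\bx_S$ for all $\bx_S\in\F_q^{s\ell}$. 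Taking $\bx_S$ to run over the standard basis yields the matrix identity $DE=I_\ell\otimes T$. Every row of $I_\ell\otimes T$ is therefore a linear combination of rows of $E$, which is exactly \eqref{eq:received-row-space-condition}.

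\emph{Sufficiency.} Suppose $E$ is $\Sigma$-support-constrained and $\mathscr C_T^{(\ell)}\subseteq\Row(E)$. Each row of $I_\ell\otimes T$ then lies in $\Row(E)$, so there is a matrix $D\in\F_q^{k\ell\times mn}$ with $DE=I_\ell\otimes T$; this $D$ will be the decoder. It remains to realize each block $E_i$ by local maps.

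Since $E_i$ vanishes outside $J_{\Sigma_i}^{(\ell)}$, we can decompose $E_i=\sum_{j\in\Sigma_i}M_{j,i}P_j$, where $M_{j,i}\in\F_q^{n\times\ell}$ is the column submatrix of $E_i$ indexed by $J_j^{(\ell)}$. Following Remark~\ref{rem:source-to-middle}, take $A_{j,i}=M_{j,i}$, which is $n\times\ell$ as required, and $B_{j,i}=I_n$. Then $\by_i=\sum_{j\in\Sigma_i}M_{j,i}\bx_j=E_i\bx_S$, so $\by=E\bx_S$ and $D\by=(I_\ell\otimes T)\bx_S$ for every $\bx_S$. This gives a valid linear $(\ell,n)$ computation code.

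\emph{Main obstacle.} Neither direction involves a real difficulty. The one point that needs care is that the source-to-middle edges carry only $n$ symbols, so a middle node cannot simply be assumed to know the full $\ell$-symbol message $\bx_j$. Letting each source forward exactly its $n$-dimensional contribution $M_{j,i}\bx_j$, as described in Remark~\ref{rem:source-to-middle}, removes this obstacle.
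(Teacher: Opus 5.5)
Your proposal is correct and follows the paper's proof essentially step for step. For necessity, you stack the local encoders into a $\Sigma$-support-constrained $E$ with $DE=I_\ell\otimes T$. For sufficiency, you take $D$ from the row-space containment and let each source send its column-block contribution $E_{i,j}\bx_j$, with $B_{j,i}=I_n$. Your explicit step from the identity holding for all $\bx_S$ to the matrix identity $DE=I_\ell\otimes T$ is a small detail that the paper leaves implicit.
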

\begin{pf} 
We first prove necessity. Suppose a valid linear $(\ell,n)$ computation code exists. As discussed above, the vector transmitted by middle node $v_i$ can be written as  $\by_i=E_i \bx_S$  for some matrix $E_i\in\F_q^{n\times s\ell}$ supported only on $J_{\Sigma_i}^{(\ell)}$. 
Stacking all middle-node transmissions gives \[ \by= \begin{bmatrix} \by_1\\ \vdots\\ \by_m \end{bmatrix} = E \bx_S, \text{ with } E= \begin{bmatrix} E_1\\ \vdots\\ E_m \end{bmatrix} \] being $\Sigma$-support-constrained. Since the sink can linearly decode the target function, there exists $D_{\rm dec}$ such that \[ DE=I_\ell\otimes T. \] 
Therefore, $\Row(I_\ell\otimes T)\subseteq \Row(E).$  

We now prove sufficiency. Suppose that a $\Sigma$-support-constrained matrix $E$ satisfies \[ \Row(I_\ell\otimes T)\subseteq \Row(E). \] Then there exists a decoding matrix $D$ such that $DE=I_\ell\otimes T.$  
It remains to show that $E$ can be implemented by a linear code over the three-layer network. 
For each middle node $v_i$, write the block $E_i$ as \[ E_i \bx_S = \sum_{j\in\Sigma_i} E_{i,j}\bx_j, \] where $E_{i,j}\in\F_q^{n\times\ell}$ is the submatrix of $E_i$ corresponding to the coordinates of source $\sigma_j$. 
This expression contains only $j\in\Sigma_i$ because $E_i$ is $\Sigma$-support-constrained. 
For each edge $\sigma_j\to v_i$ with $j\in\Sigma_i$, let source $\sigma_j$ send $\bz_{j,i}=E_{i,j}\bx_j\in\F_q^n.$  
This uses exactly $n$ symbols on the edge $\sigma_j\to v_i$. 
Middle node $v_i$ then computes \[ \by_i=\sum_{j\in\Sigma_i} \bz_{j,i} = E_i \bx_S \] and sends $\by_i$ to the sink. The sink applies $D$ and obtains \[ D\by = DE \bx_S = (I_\ell\otimes T)\bx_S. \] Hence a valid linear $(\ell,n)$ computation code exists. 
\end{pf}

Theorem~\ref{thm:realization-implies-code} shows that the source-to-middle layer creates no further algebraic constraint. Once a supported row block \(E_i\) is chosen, its contribution from each adjacent source can be sent on the corresponding source-to-middle edge. Hence the code design problem is fully described by two conditions on \(E\):
\begin{itemize}
    \item for every $i\in[m]$, $\Row(E_i)\subseteq\mathscr E_{\Sigma_i}^{(\ell)}$;
    \item $\mathscr C_T^{(\ell)}\subseteq\Row(E).$
\end{itemize}

This representation also identifies the global row space as the main design object. Let $\mathscr{W}\triangleq\Row(E)$ and $\mathscr U_i\triangleq\Row(E_i)$ for $i\in[m]$.
Then, 
\begin{equation}
\label{eq:global-local-space-conditions}
    \mathscr C_T^{(\ell)}\subseteq\mathscr W,
    \qquad\mathscr
    U_i\subseteq \mathscr W\cap\mathscr E_{\Sigma_i}^{(\ell)},
    \qquad
    \dim\mathscr U_i\le n,
    \qquad\mathscr
    W=\sum_{i=1}^m\mathscr U_i.
\end{equation}
Conversely, if a space \(\mathscr W\) and subspaces \(\mathscr U_1,\ldots,\mathscr U_m\) satisfy \eqref{eq:global-local-space-conditions}, choose at most \(n\) rows that span each \(\mathscr U_i\), and pad the corresponding block with zero rows when needed. The resulting matrix \(\mathscr E\) satisfies the conditions of Theorem~\ref{thm:realization-implies-code}. Thus, choosing a code is equivalent to choosing a global space \(W\) that contains the target space and can be generated from its locally supported parts.

We next give a normalized form that will be useful for deterministic constructions. If a code exists, then \(k\ell\le mn\), because the target space has dimension \(k\ell\) and \(E\) has only \(mn\) rows. For such \(\ell,n\), let
\[
    Q(V)
    \triangleq
    \begin{bmatrix}
        I_\ell\otimes T\\
        V
    \end{bmatrix}, 
\]
where $ V\in\F_q^{(mn-k\ell)\times s\ell}.$ The first \(k\ell\) rows are the target rows. The rows of \(V\) are auxiliary rows that may enlarge the global row space.

\begin{dfn}
\label{def:row-space-realization}
A pair \((V,P)\) with $V\in\F_q^{(mn-k\ell)\times s\ell}$ and $P\in\GL_{mn}(\F_q)$\footnote{$\GL_{mn}(\F_q)$ is the set of invertible matrices in $\F_q^{mn\times mn}.$} is called an \((\ell,n)\) \emph{support-constrained row-space realization} for \((T,\Sigma)\) if
\begin{equation}
\label{eq:auxiliary-row_form}
    E=PQ(V)
\end{equation}
is \(\Sigma\)-support-constrained.
\end{dfn}

\begin{cor}
\label{cor:normalized-row-space-realization}
For fixed \(\ell,n\) with \(k\ell\le mn\), a linear \((\ell,n)\) computation code exists if and only if there exists an \((\ell,n)\) support-constrained row-space realization \((V,P)\).
\end{cor}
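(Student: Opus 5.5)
The plan is to reduce Corollary~\ref{cor:normalized-row-space-realization} to Theorem~\ref{thm:realization-implies-code}. It then suffices to show that, for $k\ell\le mn$, a $\Sigma$-support-constrained $E\in\F_q^{mn\times s\ell}$ with $\mathscr C_T^{(\ell)}\subseteq\Row(E)$ exists if and only if there is a pair $(V,P)$ with $PQ(V)$ $\Sigma$-support-constrained.

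\emph{Sufficiency.} Suppose $(V,P)$ is a realization and set $E=PQ(V)$. By definition $E$ is $\Sigma$-support-constrained. Since $P$ is invertible, $\Row(E)=\Row(Q(V))$. This space contains $\Row(I_\ell\otimes T)=\mathscr C_T^{(\ell)}$, because those are the first $k\ell$ rows of $Q(V)$. Theorem~\ref{thm:realization-implies-code} then gives a linear $(\ell,n)$ code.

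\emph{Necessity.} Suppose a code exists. Theorem~\ref{thm:realization-implies-code} gives a support-constrained $E$ with $\mathscr C_T^{(\ell)}\subseteq\Row(E)$. We need to write $E=PQ(V)$ with $P$ invertible, and it is enough to find $V$ and $P$ with $P^{-1}E=Q(V)$. Set $r=\Rank(E)$, so that $k\ell\le r\le mn$.

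First we choose $V$. Because $I_\ell\otimes T$ has full row rank $k\ell$ and its rows lie in $\Row(E)$, we can extend its rows by $r-k\ell$ rows of $\Row(E)$ to a basis of $\Row(E)$. We then pad with $mn-r$ zero rows. This gives $V\in\F_q^{(mn-k\ell)\times s\ell}$ with $\Row(Q(V))=\Row(E)$ and $\Rank(Q(V))=r$. The row count works exactly because $k\ell\le mn$.

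Next we choose $P$, using the standard fact that two $mn\times s\ell$ matrices with the same row space are left-equivalent by an invertible matrix. Concretely, both $E$ and $Q(V)$ can be brought by invertible row operations to the same reduced row echelon form, since the RREF is determined by the row space, padded with zero rows. Hence $E=P_1R$ and $Q(V)=P_2R$ for invertible $P_1,P_2$. Setting $P=P_1P_2^{-1}$ gives $E=PQ(V)$, which is $\Sigma$-support-constrained.

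The only point needing care is this last step. $E$ may have repeated or dependent rows, so we cannot just take $P$ to be a change of basis; the dimensions must match, which is exactly why we pad with zero rows up to $mn$. The RREF argument handles this cleanly.
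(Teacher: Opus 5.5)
Your proposal is correct and follows the paper's proof essentially step for step: sufficiency uses the invertibility of $P$ together with Theorem~\ref{thm:realization-implies-code}, and necessity extends the rows of $I_\ell\otimes T$ to a basis of $\Row(E)$, pads $V$ with zero rows, and uses the fact that two $mn\times s\ell$ matrices with the same row space are row equivalent. Your RREF argument only spells out why that last row-equivalence step holds, which the paper states without proof.
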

\begin{pf}
If \((V,P)\) is a support-constrained row-space realization, then \(E=PQ(V)\) is \(\Sigma\)-support-constrained. Moreover,
\[
    \mathscr C_T^{(\ell)}
    \subseteq\Row(Q(V))
    =\Row(E),
\]
where the equality follows from the invertibility of \(P\). Theorem \ref{thm:realization-implies-code} therefore gives a linear code.

Conversely, let \(E\) be a matrix given by Theorem~\ref{thm:realization-implies-code}, and let \(\mathscr W=\Row(E)\). Since \(\mathscr C_T^{(\ell)}\subseteq\mathscr W\), extend the \(k\ell\) rows of \(I_\ell\otimes T\) to a basis of \(\mathscr W\). Place the added basis vectors in \(V\), and fill any remaining rows of \(V\) with zeros. Then
\[
    \Row(Q(V))=\mathscr W=\Row(E).
\]
The matrices \(Q(V)\) and \(E\) have the same number of rows and the same row space, so they are row equivalent. Hence there is an invertible matrix \(P\in\GL_{mn}(\F_q)\) such that \(E=PQ(V)\).
\end{pf}

The two matrices in \eqref{eq:auxiliary-row_form} have different roles. The matrix \(V\) selects the global space
\[
    \mathscr W=\Row(Q(V))
    =\mathscr C_T^{(\ell)}+\Row(V).
\]
The matrix \(P\) then selects the rows sent by the middle nodes. More precisely, let \(P_i\) be the \(n\)-row block of \(P\) associated with \(v_i\). The support requirement is
\begin{equation}
\label{eq:block-support-realization}
    P_iQ(V)
    \bigl(:,J_{[s]\setminus\Sigma_i}^{(\ell)}\bigr)
    =0,
    \qquad i\in[m].
\end{equation}
Thus, a deterministic construction may be described by first choosing auxiliary rows \(V\), and then finding an invertible \(P\) whose row blocks satisfy \eqref{eq:block-support-realization}.

The same form gives the decoding matrix directly. If the sink receives \(E\bx_S=PQ(V)\bx_S\), it first applies \(P^{-1}\) and then keeps the first \(k\ell\) coordinates. Thus, we may choose
\begin{equation}
\label{eq:normalized-realization-decoder}
    D
    =
    \begin{bmatrix}
        I_{k\ell}&\Zero_{k\ell\times(mn-k\ell)}
    \end{bmatrix}
    P^{-1}.
\end{equation}
Indeed,
\[
    DE
    =
    \begin{bmatrix}
        I_{k\ell}&\Zero_{k\ell\times(mn-k\ell)}
    \end{bmatrix}
    P^{-1}PQ(V)
    =I_\ell\otimes T.
\]
The auxiliary coordinates are discarded at this final step. In particular, taking all rows of \(V\) to be zero gives a direct realization using only the target row space. Nonzero auxiliary rows are useful when the target space alone does not contain enough locally supported directions.

The framework now separates the capacity problem into two parts. The outer
problem is to choose a global space \(\mathscr W\) containing
\(\mathscr C_T^{(\ell)}\). The inner problem is to decide whether \(\mathscr W\) can be
generated by at most \(n\) vectors from each local section $\mathscr W\cap \mathscr E_{\Sigma_i}^{(\ell)},$ for $i\in[m].$
The next section solves this inner problem for every fixed \(\mathscr W\) and then optimizes over the admissible global row spaces.

\section{Linear computing capacity through local row-space sections}\label{sec:linear}

Section~\ref{sec:framework} represents every linear computation code by a support-constrained received matrix \(E\). Its row space $\mathscr W=\Row(E)$ must contain the lifted target space \(\mathscr C_T^{(\ell)}\), while the rows sent by middle node \(v_i\) must be supported on \(\Sigma_i\). This section fixes the global space \(\mathscr W\) and asks a basic question: how many rows must each middle node send in order to generate \(\mathscr W\) from its locally supported parts?

We first solve this fixed-space problem exactly. We then apply the result to the smallest possible choice \(\mathscr W=\mathscr C_T^{(\ell)}\). This gives the best rate obtainable without auxiliary rows and shows when a larger global space is needed. The last subsection gives a rank-based rule for choosing such auxiliary rows.

\subsection{Exact realization of a fixed global row space}\label{subsec:fixed_auxiliary_rows}

Fix an instance length \(\ell\) and a subspace $\mathscr{W}$ with $\mathscr C_T^{(\ell)}\subseteq\mathscr  W\subseteq\F_q^{s\ell}.$ The rows available to middle node \(v_i\) are precisely the vectors in \(\mathscr W\) supported on the sources in \(\Sigma_i\). For every $v_i$, $i\in[m]$, define its \emph{local section} by
\begin{equation}
    \mathscr L_i(W)\triangleq\mathscr W\cap\mathscr E_{\Sigma_i}^{(\ell)}.
    \label{eq:local-row-space-section}
\end{equation}
For \(B\subseteq[m]\), define
\begin{equation}
    \rho_{\mathscr W}(B)\triangleq\dim\left(\sum_{i\in B}\mathscr L_i(\mathscr W)\right).\label{eq:local-row-space-rank}
\end{equation}
Thus, \(\rho_{\mathscr W}(B)\) is the number of independent directions in \(\mathscr W\) that the nodes in \(B\) can supply. We call it the \emph{local rank supply} of \(B\). Denote
\[
    d_{\mathscr W}\triangleq\dim\mathscr W.
\]

Suppose that every middle node sends at most \(n\) rows. The nodes in \(B\) can supply at most \(\rho_{\mathscr W}(B)\) dimensions, and the remaining \(m-|B|\) nodes can supply at most \((m-|B|)n\) dimensions. Hence every realization of \(\mathscr W\) must satisfy that for every $B\subseteq[m]$,
\begin{equation}
    \rho_{\mathscr W}(B)+(m-|B|)n\ge d_{\mathscr W}.
    \label{eq:fixed-space-necessary}
\end{equation}
The main point of this section is that these natural inequalities are also sufficient. Although local sections may overlap, they create no additional obstruction.

\begin{thm}[Fixed-space realization]\label{thm:fixed_auxiliary_space}
Let \(\mathscr W\subseteq\F_q^{s\ell}\) satisfy \(\mathscr C_T^{(\ell)}\subseteq \mathscr W\), and let \(d_{\mathscr W}=\dim\mathscr W\). For a positive integer \(n\), there exists a \(\Sigma\)-support-constrained matrix $ E\in\F_q^{mn\times s\ell}$ with \(\Row(E)=\mathscr W\) if and only if for every $B\subseteq[m]$,
\begin{equation}
    \rho_{\mathscr W}(B)+(m-|B|)n\ge d_{\mathscr W}.
    \label{eq:fixed-auxiliary-row}
\end{equation}
Whenever such a matrix exists, it defines a linear \((\ell,n)\) computation code for \((\cN_{s,m,\Sigma},T)\).
\end{thm}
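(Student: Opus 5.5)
The necessity direction is the counting argument that precedes the statement. If $E$ is $\Sigma$-support-constrained with $\Row(E)=\mathscr W$, then every row of $E_i$ lies in $\mathscr W\cap\mathscr E_{\Sigma_i}^{(\ell)}=\mathscr L_i(\mathscr W)$. Hence for any $B\subseteq[m]$ the rows of $\{E_i\}_{i\in B}$ span a subspace of dimension at most $\rho_{\mathscr W}(B)$, and the remaining blocks contribute at most $(m-|B|)n$ further dimensions, which gives \eqref{eq:fixed-auxiliary-row}. The final claim, that such an $E$ defines a linear $(\ell,n)$ code, is immediate from Theorem~\ref{thm:realization-implies-code}, since $\mathscr C_T^{(\ell)}\subseteq\mathscr W=\Row(E)$. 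The real content is sufficiency, and I plan to derive it from Edmonds' matroid intersection theorem; this is the capacitated form of Rado's theorem.

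For sufficiency, I build a finite ground set $\mathcal G=\bigsqcup_{i=1}^m \mathcal G_i$, where $\mathcal G_i=\{(i,\bu):\bu\in\mathscr L_i(\mathscr W)\}$ is a disjoint copy of the local section. This set is finite because $\F_q$ is finite. On $\mathcal G$ I take two matroids.
\begin{itemize}
\item $M_1$ is the linear matroid with rank function $r_1(X)=\dim\Span\{\bu:(i,\bu)\in X\}$.
\item $M_2$ is the partition matroid with capacity $n$ on each block, so $r_2(X)=\sum_i\min\{n,|X\cap\mathcal G_i|\}$.
\end{itemize}
A common independent set of size $d_{\mathscr W}$ is exactly a choice of at most $n$ vectors from each $\mathscr L_i(\mathscr W)$ that are linearly independent and number $d_{\mathscr W}$. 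These vectors therefore span $\mathscr W$, because they all lie in $\mathscr W$. Taking them as the rows of $E_i$ and padding each block with zero rows up to $n$ rows gives the required $\Sigma$-support-constrained $E$ with $\Row(E)=\mathscr W$.

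By the matroid intersection theorem, the maximum size of a common independent set equals $\min_{X\subseteq\mathcal G}\bigl(r_1(X)+r_2(\mathcal G\setminus X)\bigr)$. The key step, and the main obstacle, is to show that this minimum is attained at some $X$ that is a union of whole blocks $\bigcup_{i\in B}\mathcal G_i$. For such an $X$ the value is exactly $\rho_{\mathscr W}(B)+(m-|B|)n$, provided each unused block has $|\mathcal G_i|\ge n$. I will normalize an arbitrary $X$ block by block.
\begin{itemize}
\item If $|\mathcal G_i\setminus X|\ge n$, I delete $X\cap\mathcal G_i$ from $X$. This does not increase $r_1$, and the block still contributes $n$ to $r_2$.
\item If $|\mathcal G_i\setminus X|<n$, I add all of $\mathcal G_i$ to $X$. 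This raises $r_1$ by at most $|\mathcal G_i\setminus X|$, while the contribution of block $i$ to $r_2$ drops from $|\mathcal G_i\setminus X|$ to $0$.
\end{itemize}
Neither operation increases the objective.

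Two boundary cases need separate care. If some $\mathscr L_i(\mathscr W)$ is so small that $|\mathcal G_i|<n$, the second rule applies and block $i$ is placed in $B$, so the formula above is still correct. The case $B=[m]$ needs $\rho_{\mathscr W}([m])\ge d_{\mathscr W}$, which is exactly \eqref{eq:fixed-auxiliary-row} for $B=[m]$. Hence the minimum is at least $\min_{B}\bigl(\rho_{\mathscr W}(B)+(m-|B|)n\bigr)\ge d_{\mathscr W}$, a common independent set of size $d_{\mathscr W}$ exists, and the construction above completes the proof.
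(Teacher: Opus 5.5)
Your proof is correct, but you reach sufficiency by a different route from the paper.

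The paper's argument stays entirely within linear algebra:
\begin{enumerate}
\item It first proves the linear Rado lemma (one independent vector from each of $t$ subspaces whenever $\dim\sum_{i\in J}\mathscr W_i\ge|J|$ for all $J$). The proof is by induction, splitting on whether some proper subfamily is tight and passing to a quotient in either case.
\item It then absorbs the load bound by taking $n$ labeled copies of each $\mathscr L_i(\mathscr W)$. Every copy is enlarged by a common complement $\mathscr W'$ of dimension $mn-d_{\mathscr W}$. This turns the covering requirement into the existence of a basis transversal of $\mathscr W\oplus\mathscr W'$.
\item Finally, it projects the chosen vectors back onto $\mathscr W$.
\end{enumerate}

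You instead apply Edmonds' matroid intersection theorem to a linear matroid and a partition matroid. The capacity $n$ is encoded directly, so the copy-and-pad trick is not needed. The only real work is the block-by-block normalization of the minimizing set, and you justify it correctly. Deletion uses monotonicity of $r_1$. Addition uses the fact that $r_1$ grows by at most the number of added elements. The cases $|\mathcal G_i|<n$ and $B=[m]$ are also handled properly.

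The trade-off is that your argument rests on a substantial external theorem, while the paper's is self-contained and elementary. In exchange, yours is shorter and algorithmic.

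Your normalization also goes through verbatim if $\mathcal G_i$ is taken to be a basis of $\mathscr L_i(\mathscr W)$ rather than all of its vectors, because $r_1\bigl(\bigcup_{i\in B}\mathcal G_i\bigr)$ still equals $\rho_{\mathscr W}(B)$. This variant has three benefits:
\begin{itemize}
\item it removes any dependence on the field being finite;
\item it shrinks the ground set to at most $m\,d_{\mathscr W}$ elements, so a matroid intersection algorithm actually computes $E$ efficiently;
\item it shows that the rows of each $E_i$ can be drawn from any prescribed basis of $\mathscr L_i(\mathscr W)$.
\end{itemize}
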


Necessity follows from the dimension count above. To prove sufficiency, we use two linear-algebraic selection lemmas. The first is the linear form of Hall's condition.

\begin{lem}\label{lem:linear-rado}
Let $\mathscr{W}_1,\mathscr{W}_2,\ldots,\mathscr{W}_t$ be subspaces of a finite-dimensional vector space $\mathscr{W}$. There exist vectors $\bw_i\in \mathscr{W}_i,\forall i\in[t],$ such that $\bw_1,\bw_2,\ldots,\bw_t$ are linearly independent if and only if for all $J\subseteq[t]$,
\begin{equation}\label{eq:linear_rado}
    \dim\left(\sum_{i\in J}\mathscr{W}_i\right)\ge |J|.
\end{equation}
\end{lem}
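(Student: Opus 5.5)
Necessity is immediate. If $\bw_1,\ldots,\bw_t$ are linearly independent with $\bw_i\in\mathscr W_i$, then for every $J\subseteq[t]$ the $|J|$ independent vectors $\{\bw_i:i\in J\}$ lie in $\sum_{i\in J}\mathscr W_i$. Hence this sum has dimension at least $|J|$, which is \eqref{eq:linear_rado}.

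For sufficiency I would use induction on $t$, following the classical Halmos--Vaughan proof of Hall's theorem with dimensions in place of cardinalities. The case $t=1$ is trivial: condition \eqref{eq:linear_rado} gives $\dim\mathscr W_1\ge1$, so we may pick any nonzero $\bw_1\in\mathscr W_1$. For $t\ge2$ I would split into two cases.

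\emph{Case 1: no critical family.} Suppose every nonempty proper $J\subsetneq[t]$ satisfies the strict inequality $\dim\sum_{i\in J}\mathscr W_i\ge|J|+1$. Choose $\bw_t\in\mathscr W_t$ so that it lies outside as few of the spaces $\sum_{i\in J}\mathscr W_i$ as possible. Work in the quotient $\overline{\mathscr W}=\mathscr W/\langle\bw_t\rangle$ with images $\overline{\mathscr W_i}$ for $i<t$. Quotienting lowers each dimension by at most one, so for $J\subseteq[t-1]$,
\[
\dim\sum_{i\in J}\overline{\mathscr W_i}\ge|J|.
\]
Induction then gives independent $\overline{\bw_i}\in\overline{\mathscr W_i}$. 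Lifting them to $\bw_i\in\mathscr W_i$ and adjoining $\bw_t$ produces an independent family, because a dependence among the lifts would project to a dependence in the quotient. In fact any nonzero $\bw_t\in\mathscr W_t$ works here, so the minimality choice is unnecessary.

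\emph{Case 2: a critical family exists.} Suppose some nonempty proper $J_0$ satisfies $\dim\sum_{i\in J_0}\mathscr W_i=|J_0|$, and set $U=\sum_{i\in J_0}\mathscr W_i$. The subfamily $\{\mathscr W_i\}_{i\in J_0}$ inherits \eqref{eq:linear_rado}, so induction gives independent vectors $\bw_i$, $i\in J_0$. There are $|J_0|=\dim U$ of them, so they form a basis of $U$. For the complementary indices, pass to the quotient $\mathscr W/U$ and check the condition for $K\subseteq[t]\setminus J_0$:
\[
\dim\Bigl(\sum_{i\in K}\mathscr W_i+U\Bigr)-\dim U=\dim\sum_{i\in K\cup J_0}\mathscr W_i-|J_0|\ge|K\cup J_0|-|J_0|=|K|.
\]
Induction then gives vectors $\bw_i\in\mathscr W_i$ for $i\notin J_0$ whose images in $\mathscr W/U$ are independent. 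Together with the basis of $U$, the whole family is independent: project a dependence to $\mathscr W/U$ to see that the coefficients outside $J_0$ vanish, and then the remaining coefficients vanish by independence of the basis of $U$.

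The only step needing care is the quotient bookkeeping in Case 2 and the lifting arguments in both cases. Everything else is a routine translation of Hall's theorem.
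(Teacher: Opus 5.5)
Your proof is correct and is essentially the paper's argument: induction on $t$, split on whether some nonempty proper subfamily is tight, quotient either by the span of that subfamily or by $\langle \bw_t\rangle$ for an arbitrary nonzero $\bw_t\in\mathscr W_t$, and then lift. One cleanup: drop the vestigial ``lies outside as few of the spaces as possible'' choice in Case 1, since read literally it would select $\bw_t=\Zero$; simply take any nonzero $\bw_t$, as you note yourself, which exists by the condition for $J=\{t\}$ and whose nonzeroness is exactly what the final lifting step uses.
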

\begin{pf}
The necessity is immediate. If $\bw_i\in \mathscr{W}_i$, $i\in[t]$, are linearly independent, then for every $J\subseteq[t]$ the space $\sum_{i\in J}\mathscr{W}_i$ contains the $|J|$ linearly independent vectors $\{\bw_i:i\in J\}$. Hence, \eqref{eq:linear_rado} holds.

For sufficiency, we use induction on \(t\). The claim is clear for \(t=1\). Assume that it holds for every family with fewer than \(t\) subspaces.

First suppose that there exists a nonempty proper subset $J\subsetneq[t]$ such that \[\dim\left(\sum_{i\in J}\mathscr{W}_i\right)=|J|.\]
Let $\mathscr{W}_J\triangleq \sum_{i\in J}\mathscr{W}_i.$ By induction, we can select linearly independent vectors $\bw_i\in \mathscr{W}_i$, $i\in J$. Since $\dim \mathscr{W}_J=|J|$, these vectors form a basis of $\mathscr{W}_J$.
Now consider the quotient space $\mathscr{W}/\mathscr{W}_J$. For each $i\in[t]\setminus J$, let $\overline{\mathscr{W}}_i\triangleq (\mathscr{W}_i+\mathscr{W}_J)/\mathscr{W}_J.$ For every $H\subseteq[t]\setminus J$, we have
\[
    \dim\left(\sum_{i\in H}\overline{\mathscr{W}}_i\right)
    =
    \dim\left(\mathscr{W}_J+\sum_{i\in H}\mathscr{W}_i\right)-\dim \mathscr{W}_J.
\]
By the assumed dimension condition for the original subspaces,
\[
    \dim\left(\mathscr{W}_J+\sum_{i\in H}\mathscr{W}_i\right)
    =
    \dim\left(\sum_{i\in J\cup H}\mathscr{W}_i\right)
    \ge |J|+|H|.
\]
Since $\dim \mathscr{W}_J=|J|$, it follows that
\[
    \dim\left(\sum_{i\in H}\overline{\mathscr{W}}_i\right)\ge |H|.
\]
Thus the quotient subspaces satisfy the same condition. By the induction hypothesis, we can choose linearly independent vectors $\overline \bw_i\in \overline{\mathscr{W}}_i, \forall i\in[t]\setminus J.$
For each such $i$, choose a representative $\bw_i\in \mathscr{W}_i$ of $\overline \bw_i$. Then the vectors $\{\bw_i:i\in J\}$ together with the lifted vectors $\{\bw_i:i\in[t]\setminus J\}$ are linearly independent in $\mathscr W$.

It remains to consider the case in which no nonempty proper subset $J$ is tight. Then, for every nonempty proper subset $J\subsetneq[t]$,
\[
    \dim\left(\sum_{i\in J}\mathscr{W}_i\right)\ge |J|+1.
\]
Since the condition applied to $\{t\}$ gives $\dim \mathscr{W}_t\ge1$, choose a nonzero vector $\bw_t\in \mathscr{W}_t.$
Consider the quotient space $\mathscr{W}/\langle \bw_t\rangle$. For $i\in[t-1]$, define
\[
    \overline{\mathscr{W}}_i\triangleq (\mathscr{W}_i+\langle \bw_t\rangle)/\langle \bw_t\rangle.
\]
For any $J\subseteq[t-1]$, let $\mathscr{W}_J=\sum_{i\in J}W_i$. If $J=\emptyset$, the desired inequality is trivial. If $J\neq\emptyset$, then
\[
    \dim \mathscr{W}_J\ge |J|+1.
\]
If $\bw_t\in \mathscr{W}_J$, then
\[
    \dim\left((\mathscr{W}_J+\langle \bw_t\rangle)/\langle \bw_t\rangle\right)
    =
    \dim \mathscr{W}_J-1
    \ge |J|.
\]
If $\bw_t\notin \mathscr{W}_J$, then
\[
    \dim\left((\mathscr{W}_J+\langle \bw_t\rangle)/\langle \bw_t\rangle\right)
    =
    \dim \mathscr{W}_J
    \ge |J|+1
    \ge |J|.
\]
Therefore the quotient subspaces $\overline{\mathscr{W}}_1,\ldots,\overline{\mathscr{W}}_{t-1}$ satisfy the required dimension condition. By induction, choose linearly independent vectors $\overline \bw_i\in\overline{\mathscr{W}}_i,$ $\forall i\in[t-1]$.
Lift each $\overline \bw_i$ to a vector $\bw_i\in \mathscr{W}_i$. Since the images of $\bw_1,\ldots,\bw_{t-1}$ are linearly independent in $\mathscr{W}/\langle \bw_t\rangle$, the vectors $\bw_1,\bw_2,\ldots,\bw_{t-1},\bw_t$ are linearly independent in $\mathscr{W}$. This completes the induction.
\end{pf}

The next lemma allows each local space to contribute several vectors.

\begin{lem}\label{lem:capacitated-covering}
Let $\mathscr{W}$ be a $d$-dimensional vector space in $\F_q^N$ with $d\leq N$, and let $\mathscr{W}_1,\mathscr{W}_2,\ldots,\mathscr{W}_m\subseteq \mathscr{W}$ be subspaces. Let $u_1,u_2,\ldots,u_m$ be nonnegative integers. For $B\subseteq[m]$, define
\[
    \rho(B)\triangleq
    \dim\left(\sum_{i\in B}\mathscr{W}_i\right).
\]
Then there exist $m$ subspaces $\mathscr{U}_i\subseteq \mathscr{W}_i$ with $\dim (\mathscr{U}_i)\le u_i, i\in[m]$, such that $\mathscr{U}_1+\mathscr{U}_2+\cdots+\mathscr{U}_m=\mathscr{W}$ if and only if for every $B\subseteq[m]$, 
\begin{equation}\label{eq:capacitated-covering}
    \rho(B)+\sum_{i\notin B}u_i\ge d.
\end{equation}
\end{lem}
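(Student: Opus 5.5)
Necessity is a direct dimension count. Suppose $\mathscr U_i\subseteq\mathscr W_i$ with $\dim\mathscr U_i\le u_i$ and $\sum_i\mathscr U_i=\mathscr W$. Fix $B\subseteq[m]$. Then
\[
\mathscr W=\sum_{i\in B}\mathscr U_i+\sum_{i\notin B}\mathscr U_i\subseteq\sum_{i\in B}\mathscr W_i+\sum_{i\notin B}\mathscr U_i ,
\]
so $d\le\rho(B)+\sum_{i\notin B}u_i$, which is \eqref{eq:capacitated-covering}.

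For sufficiency, the plan is to reduce to Lemma~\ref{lem:linear-rado} by replicating each space according to its capacity. We form the multiset family consisting of $u_i$ copies of $\mathscr W_i$ for each $i\in[m]$, indexed by pairs $(i,a)$ with $a\in[u_i]$. We want $d$ linearly independent vectors chosen from $d$ distinct members of this family. If we get them, we let $\mathscr U_i$ be the span of the vectors taken from copies of $\mathscr W_i$. Then $\dim\mathscr U_i\le u_i$, and $\sum_i\mathscr U_i$ contains $d$ independent vectors of $\mathscr W$, hence equals $\mathscr W$.

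The difficulty is that Lemma~\ref{lem:linear-rado} selects one vector from every member of a family, whereas we only want a transversal of size $d$ from some subfamily. I would handle this with the standard deficiency trick. Let $\mathscr Z$ be a complementary space of dimension $M-d$, where $M=\sum_i u_i$, and embed everything in $\mathscr W\oplus\mathscr Z$; this is legitimate because Lemma~\ref{lem:linear-rado} only needs an abstract finite-dimensional ambient space. The new family is $\mathscr W_{i,a}'=\mathscr W_i\oplus\mathscr Z$. If $M<d$, the condition with $B=\emptyset$ already gives a contradiction, so we may assume $M\ge d$.

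We then verify Hall's condition \eqref{eq:linear_rado} for this family. Take an index set $J$ of pairs and let $B$ be the set of $i$ appearing in $J$. Then
\[
|J|\le\sum_{i\in B}u_i ,\qquad
\dim\Bigl(\sum_{(i,a)\in J}\mathscr W'_{i,a}\Bigr)=\rho(B)+M-d\quad\text{for }J\neq\emptyset .
\]
By \eqref{eq:capacitated-covering}, $\rho(B)+M-d\ge\sum_{i\in B}u_i\ge|J|$. Lemma~\ref{lem:linear-rado} therefore gives $M$ linearly independent vectors $\bw'_{i,a}=\bw_{i,a}+\bz_{i,a}$ in $\mathscr W\oplus\mathscr Z$.

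The main step, and the one to check carefully, is extracting the covering. Project these $M$ vectors onto the $\mathscr W$ coordinate. The $M$ vectors span a space of dimension $M$ inside $\mathscr W\oplus\mathscr Z$, and the kernel of the projection is $\mathscr Z$, of dimension $M-d$. Hence the projections $\bw_{i,a}\in\mathscr W_i$ span a space of dimension at least $M-(M-d)=d$, so they span $\mathscr W$. Setting $\mathscr U_i=\Span\{\bw_{i,a}:a\in[u_i]\}$ gives $\dim\mathscr U_i\le u_i$ and $\sum_i\mathscr U_i=\mathscr W$. The cases $u_i=0$ are harmless, since such an $i$ contributes no copies and $\mathscr U_i=0$.
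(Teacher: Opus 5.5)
Your proposal is correct and follows essentially the same route as the paper. Both make $u_i$ copies of each $\mathscr W_i$, pad with an auxiliary space of dimension $\sum_i u_i-d$, check the linear Rado condition via \eqref{eq:capacitated-covering}, and project the resulting basis back onto $\mathscr W$. Your kernel-of-the-projection count is a slightly more direct way of stating the paper's argument that $\mathscr P\oplus\mathscr W'=\widehat{\mathscr W}$ forces $\dim\mathscr P\ge d$.
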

\begin{pf}
For necessity, fix \(B\subseteq[m]\). The spaces \(\mathscr U_i\), \(i\in B\), contribute at most \(\rho(B)\) dimensions, while the remaining spaces contribute at most \(\sum_{i\notin B}u_i\) dimensions. Their sum cannot be all of \(\mathscr W\) unless \eqref{eq:capacitated-covering} holds.

For sufficiency, assume that for every $B\subseteq[m]$, \eqref{eq:capacitated-covering} holds. Let
\[
    u_{\rm tot}\triangleq\sum_{i=1}^m u_i.
\] The condition for \(B=\emptyset\) gives \(u_{\rm tot}\ge d\), while the condition for \(B=[m]\) gives \(\sum_i\mathscr W_i=\mathscr W\). Let \(\mathscr W'\) be an auxiliary space of dimension \(u_{\rm tot}-d\), and define \(\widehat {\mathscr W}=\mathscr W\oplus\mathscr  W'\).

For each $i\in[m]$, make $u_i$ labeled copies of $\mathscr{W}_i$. Denote these copies by $\mathscr{W}_{i,a}=\mathscr{W}_i,a\in[u_i].$ For every copy, define the extended subspace
\[
    \widehat{\mathscr{W}}_{i,a}\triangleq \mathscr{W}_i\oplus \mathscr{W}'\subseteq \widehat{\mathscr{W}}.
\]
There are $u_{\rm tot}$ extended subspaces in total.

Consider any collection $J\subseteq\{(i,a):i\in[m],a\in[u_i]\}$ of these copies, and let \(B(J)\) be the set of indices represented in \(J\). If \(J\neq\emptyset\), then
\[
    \sum_{(i,a)\in J}\widehat{\mathscr{W}}_{i,a}
    =
    \left(\sum_{i\in B( J)}\mathscr{W}_i\right)\oplus \mathscr{W}'.
\]
Thus,
\[
    \dim\left(\sum_{(i,a)\in J}\widehat{\mathscr{W}}_{i,a}\right)
    =
    \rho(B(J))+u_{\rm tot}-d.
\]
By \eqref{eq:capacitated-covering},
\[
    \rho(B(J))+u_{\rm tot}-d
    \ge
    \sum_{i\in B(J)}u_i
    \ge |J|.
\]
Thus, by Lemma~\ref{lem:linear-rado}, we can choose one vector $\widehat \bw_{i,a}\in \widehat{\mathscr{W}}_{i,a}$ from each copy such that all chosen vectors are linearly independent in $\widehat{\mathscr{W}}$.

Since the number of chosen vectors is $u_{\rm tot}$ and $\dim \widehat{\mathscr{W}}
    =
    \dim \mathscr{W}+\dim \mathscr{W}'
    =
    d+(u_{\rm tot}-d)
    =
    u_{\rm tot},$
these chosen vectors form a basis of $\widehat{\mathscr{W}}$.
Write each chosen vector as $\widehat \bw_{i,a}=\bw_{i,a}+\bw'_{i,a},$ where $\bw_{i,a}\in \mathscr{W}_i,\bw'_{i,a}\in \mathscr{W}'.$
Let $\mathscr{P}\triangleq \Span\{\bw_{i,a}:i\in[m],\,a\in[u_i]\}\subseteq \mathscr{W}.$
All chosen vectors $\widehat \bw_{i,a}$ lie in $\mathscr{P}\oplus \mathscr{W}'$. Since they span
$\widehat{\mathscr{W}}$, we must have $\mathscr{P}\oplus \mathscr{W}'=\widehat{\mathscr{W}}.$
Thus,
\[
    \dim \mathscr{P}+\dim \mathscr{W}'\ge \dim \widehat{\mathscr{W}}=u_{\rm tot}.
\]
Using $\dim \mathscr{W}'=u_{\rm tot}-d$, we get $\dim \mathscr{P}\ge d.$
But $\mathscr{P}\subseteq \mathscr{W}$ and $\dim \mathscr{W}=d$, so $\mathscr{P}=\mathscr{W}$.
For each $i\in[m]$, define $\mathscr{U}_i\triangleq \Span\{\bw_{i,a}:a\in[u_i]\}.$
Then $\mathscr{U}_i\le W_i,\dim \mathscr{U}_i\le u_i,$
and the sum of the $\mathscr{U}_i$'s contains $\mathscr{P}=\mathscr{W}$. Hence
\[
    \mathscr{U}_1+\mathscr{U}_2+\cdots+\mathscr{U}_m=\mathscr{W}.
\]
This proves sufficiency.
\end{pf}

We now apply Lemma~\ref{lem:capacitated-covering} to the local sections and complete the proof of Theorem~\ref{thm:fixed_auxiliary_space}.

\begin{pf}
First suppose that such a matrix \(E\) exists.  Let \(\mathscr{U}_i\triangleq \Row(E_i)\), where $E_i$ is the submatrix of $E$ corresponding to the $i$-th middle node as defined in Definition~\ref{def:support-constrained-matrix}.   The support-constraint gives
\[
        \mathscr{U}_i\subseteq \mathscr{W}\cap \mathscr{E}_{\Sigma_i}^{(\ell)}=\mathscr{L}_i(\mathscr{W}),
        \text{ and } \dim \mathscr{U}_i\le n .
\]
Moreover, \(\sum_{i=1}^m \mathscr{U}_i=\Row(E)=\mathscr{W}\).  For any \(B\subseteq[m]\), the subspaces indexed by \(B\) can contribute at most \(\rho_\mathscr{W}(B)\) dimensions, and the remaining \(m-|B|\) blocks can contribute at most \((m-|B|)n\) dimensions. Thus \eqref{eq:fixed-auxiliary-row} is necessary.

Conversely, assume that \eqref{eq:fixed-auxiliary-row} holds for every \(B\subseteq[m]\).  Apply the
capacitated subspace covering lemma to the ambient space \(\mathscr{W}\), the subspaces
\(\mathscr{L}_i(\mathscr{W})\), and the uniform capacities \(u_i=n\).  We obtain subspaces
\(\mathscr{U}_i\subseteq \mathscr{L}_i(\mathscr{W})\) with \(\dim \mathscr{U}_i\le n\) such that
\[
        \mathscr{U}_1+\mathscr{U}_2+\cdots+\mathscr{U}_m=\mathscr{W} .
\]
Choose a basis of each \(\mathscr{U}_i\), pad it by zero rows to obtain an \(n\times s\ell\) matrix \(E_i\), and stack these blocks to form \(E\). Then \(E\) is \(\Sigma\)-support-constrained and \(\Row(E)=\mathscr{W}\).  Since \(\mathscr{C}_T^{(\ell)}\subseteq \mathscr{W}=\Row(E)\), the support-constrained received-matrix characterization gives a valid linear \((\ell,n)\) computation code.
\end{pf}

The case \(B=[m]\) in \eqref{eq:fixed-auxiliary-row} has a special role. It requires
\[\rho_{\mathscr W}([m])=d_{\mathscr W},\]
which means that the local sections together generate \(W\). This condition does not depend on \(n\). If it fails, the fixed space \(W\) cannot be realized at any load. Once it holds, the remaining inequalities determine the minimum load exactly.

\begin{cor}\label{cor:fixed-space-rate}
    For a fixed instance length $\ell$ and a subspace $\mathscr{W}$, the following statements hold:
    \begin{enumerate}
        \item If $ \rho_\mathscr{W}([m])<d_\mathscr{W},$ the maximum achievable rate $R_\ell(\mathscr{W})=0$;
        \item If $ \rho_\mathscr{W}([m])=d_\mathscr{W},$ then the maximum achievable rate
        \[R_\ell(\mathscr{W})=\frac{\ell}{n^\star(\mathscr{W})},\]
        where $n^\star(\mathscr{W})$ is defined as \begin{equation}\label{eq:requiredl-load}
    n^\star(\mathscr{W})
        =
        \left\lceil
        \max_{B\subsetneq[m]}
        \frac{d_\mathscr{W}-\rho_\mathscr{W}(B)}{m-|B|}
        \right\rceil .  
\end{equation}
    \end{enumerate}
\end{cor}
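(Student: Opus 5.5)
The plan is to read the corollary directly off Theorem~\ref{thm:fixed_auxiliary_space}. Fix $\ell$ and $\mathscr W$. For a given $n$, a realization of $\mathscr W$ by a $\Sigma$-support-constrained $E\in\F_q^{mn\times s\ell}$ exists exactly when \eqref{eq:fixed-auxiliary-row} holds for all $B\subseteq[m]$. Here $R_\ell(\mathscr W)$ is the largest $\ell/n$ over those $n$ for which such a realization exists, with value $0$ if no $n$ works. So the task is to describe the set of feasible $n$, and I will split the inequalities into the case $B=[m]$ and the cases $B\subsetneq[m]$.

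For $B=[m]$, the inequality reads $\rho_{\mathscr W}([m])\ge d_{\mathscr W}$, and it does not involve $n$. Each local section $\mathscr L_i(\mathscr W)$ lies in $\mathscr W$, so $\rho_{\mathscr W}([m])\le d_{\mathscr W}$ always. The $B=[m]$ condition therefore amounts to equality. If $\rho_{\mathscr W}([m])<d_{\mathscr W}$, then no $n$ is feasible, and under the stated convention $R_\ell(\mathscr W)=0$. This proves item 1.

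Now suppose $\rho_{\mathscr W}([m])=d_{\mathscr W}$. For $B\subsetneq[m]$ we have $m-|B|\ge1$, so the inequality rearranges to $n\ge (d_{\mathscr W}-\rho_{\mathscr W}(B))/(m-|B|)$. Since $n$ is an integer, $n$ is feasible if and only if $n\ge n^\star(\mathscr W)$ as defined in \eqref{eq:requiredl-load}. The set of feasible $n$ is therefore upward closed with least element $n^\star(\mathscr W)$.

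I also need $n^\star(\mathscr W)\ge1$, so that the rate is finite and $n$ is a positive integer. The term $B=\emptyset$ gives $d_{\mathscr W}/m$. This quantity is positive because $d_{\mathscr W}\ge k\ell\ge1$, which follows from $\mathscr C_T^{(\ell)}\subseteq\mathscr W$. The ceiling is then at least $1$.

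Since $\ell/n$ is decreasing in $n$, the maximum rate is $\ell/n^\star(\mathscr W)$. The last part of Theorem~\ref{thm:fixed_auxiliary_space} states that the realization at $n=n^\star(\mathscr W)$ yields a valid linear $(\ell,n^\star(\mathscr W))$ code, so this rate is achieved.

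The only point needing care is interpretive. I will state explicitly that $R_\ell(\mathscr W)$ is defined via codes whose received matrix has row space exactly $\mathscr W$, as in Theorem~\ref{thm:fixed_auxiliary_space}. I will also note that the ceiling arises only from the integrality of $n$. Beyond this, the argument is a routine rearrangement.
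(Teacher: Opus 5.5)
Your proposal is correct and follows the paper's own argument: treat the $B=[m]$ case of Theorem~\ref{thm:fixed_auxiliary_space} as the $n$-independent condition $\rho_{\mathscr W}([m])=d_{\mathscr W}$, then rearrange the inequalities for proper subsets $B$ and take the ceiling because $n$ is an integer. Your extra checks fill in details the paper leaves implicit: that $\rho_{\mathscr W}([m])\le d_{\mathscr W}$ always holds, and that $n^\star(\mathscr W)\ge 1$ by the $B=\emptyset$ term, using $d_{\mathscr W}\ge k\ell\ge 1$.
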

\begin{pf}
    When $B=[m]$, the inequality \eqref{eq:fixed-auxiliary-row} implies $ \rho_\mathscr{W}([m])=d_\mathscr{W},$ which says that \(\mathscr{W}\) must be generated by its locally supported sections.  If this global condition fails, no finite value of \(n\) can realize the fixed row space \(\mathscr{W}\).  If it holds, the minimum required per-middle-node load is $n^\star(\mathscr{W})$. Hence the largest rate certified by this fixed row space is $R_\ell(\mathscr{W})=\ell /n^\star(\mathscr{W}).$
\end{pf}

The outer optimization over \(\mathscr W\) now gives the linear capacity.

\begin{cor}[Linear-capacity characterization]
\label{cor:linear-capacity-row-spaces}
The linear computing capacity is
\begin{equation}
    C_{\rm lin}(\cN_{s,m,\Sigma},T)
    =
    \sup_{\ell\ge1}
    \max_{\substack{\mathscr W:
        \mathscr C_T^{(\ell)}\subseteq \mathscr W\subseteq\F_q^{s\ell}\\
        \rho_{\mathscr W}([m])=\dim\mathscr W}}
    \frac{\ell}{n^\star(\mathscr W)},
    \label{eq:linear-capacity-row-space-form}
\end{equation}
where \(n^\star(\mathscr W)\) is given by \eqref{eq:requiredl-load}.
\end{cor}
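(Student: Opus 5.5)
The plan is to prove \eqref{eq:linear-capacity-row-space-form} by two inequalities, with Theorem~\ref{thm:realization-implies-code}, Theorem~\ref{thm:fixed_auxiliary_space} and Corollary~\ref{cor:fixed-space-rate} doing essentially all the work. Denote the right-hand side of \eqref{eq:linear-capacity-row-space-form} by $R^\star$. Before comparing, I will check that the inner maximum is well posed: for each fixed $\ell$, the admissible $\mathscr W$ range over finitely many subspaces of $\F_q^{s\ell}$. The family is also nonempty, since $\mathscr W=\F_q^{s\ell}$ satisfies $\rho_{\mathscr W}([m])=s\ell$ provided every source lies in some $\Sigma_i$. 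Any source accessed by no middle node makes both sides degenerate, with value $0$ whenever $T$ depends on it; I will treat this case separately or assume it away.

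For the inequality $R^\star\le C_{\rm lin}$, fix $\ell$ and an admissible $\mathscr W$ with $\mathscr C_T^{(\ell)}\subseteq\mathscr W$ and $\rho_{\mathscr W}([m])=d_{\mathscr W}$. Set $n=n^\star(\mathscr W)$ from \eqref{eq:requiredl-load}. By its definition as a ceiling of a maximum over $B\subsetneq[m]$, this $n$ satisfies \eqref{eq:fixed-auxiliary-row} for every proper $B$. For $B=[m]$ the inequality is exactly the admissibility condition. I also need $n\ge1$. If the maximum is $\le 0$, then already $B=\emptyset$ gives $\rho_{\mathscr W}(\emptyset)=0$, so the maximum is $d_{\mathscr W}/m$, which is positive when $k\ge1$. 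Theorem~\ref{thm:fixed_auxiliary_space} then produces a valid linear $(\ell,n)$ code of rate $\ell/n^\star(\mathscr W)$. Taking the supremum gives $R^\star\le C_{\rm lin}$.

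For the reverse inequality, take any valid linear $(\ell,n)$ code. By Theorem~\ref{thm:realization-implies-code} (necessity) it yields a $\Sigma$-support-constrained $E$ with $\mathscr C_T^{(\ell)}\subseteq\mathscr W:=\Row(E)$. The necessity half of Theorem~\ref{thm:fixed_auxiliary_space} shows that $\mathscr W$ satisfies \eqref{eq:fixed-auxiliary-row} with this $n$. The case $B=[m]$ gives admissibility, since $\rho_{\mathscr W}([m])\le d_{\mathscr W}$ trivially. The proper subsets give $n\ge (d_{\mathscr W}-\rho_{\mathscr W}(B))/(m-|B|)$ for all $B\subsetneq[m]$. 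Because $n$ is an integer, it follows that $n\ge n^\star(\mathscr W)$. Hence $\ell/n\le \ell/n^\star(\mathscr W)\le R^\star$, and taking the supremum over codes gives $C_{\rm lin}\le R^\star$.

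No step here is a real obstacle, since the substance lives in Theorem~\ref{thm:fixed_auxiliary_space}. The points needing care are the integrality argument that lets the ceiling in \eqref{eq:requiredl-load} be compared with an integer $n$, and the check that $n^\star(\mathscr W)\ge1$. The other delicate point is the handling of the edge case in which the admissible family could be empty or the rate zero. There I will adopt the convention that the maximum over an empty family is $0$, matching item~1 of Corollary~\ref{cor:fixed-space-rate}.
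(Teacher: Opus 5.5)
Your proposal is correct and follows the route the paper takes implicitly. The paper gives no separate proof of this corollary: it presents it as an immediate consequence of Theorem~\ref{thm:realization-implies-code}, under which every code yields an admissible $\mathscr W=\Row(E)$, and of Theorem~\ref{thm:fixed_auxiliary_space} with Corollary~\ref{cor:fixed-space-rate}, which give the minimum load $n^\star(\mathscr W)$ for a fixed $\mathscr W$. Your two-inequality argument spells out what the paper leaves unstated, including the integrality comparison $n\ge n^\star(\mathscr W)$ and the observation that $B=\emptyset$ forces $n^\star(\mathscr W)\ge\lceil d_{\mathscr W}/m\rceil\ge 1$.
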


The inner maximum in \eqref{eq:linear-capacity-row-space-form} is understood as zero if no locally generated space \(W\) exists for that \(\ell\).
This characterization separates two tasks. The inner problem, now solved, computes the best load for a prescribed \(\mathscr W\). The outer problem chooses a space \(\mathscr W\) that contains the target rows and has rich local sections. The rest of the paper uses this separation: direct constructions take the smallest possible \(\mathscr W\), while auxiliary constructions enlarge it in a controlled way.

\subsection{The direct target row space}

The smallest admissible global space is $\mathscr W=\mathscr C_T^{(\ell)}=\Row(I_\ell\otimes T)$. We first define the relevant local ranks for one function instance. Let $\mathscr C_T\triangleq\Row(T)\subseteq\F_q^s$ and $\mathscr L_i(T)\triangleq\mathscr C_T\cap\mathscr E_{\Sigma_i}^{(1)}$. For \(B\subseteq[m]\), let
\begin{equation}
    \rho_T(B)
    \triangleq
    \dim\left(\sum_{i\in B}\mathscr L_i(T)\right).
    \label{eq:direct-local-rank}
\end{equation}
Because both the target space and the support spaces split across the \(\ell\) instances,
\begin{equation}
    \mathscr L_i(\mathscr C_T^{(\ell)})
    =\bigoplus_{a=1}^{\ell}\mathscr L_i(T),
    \qquad
    \rho_{\mathscr C_T^{(\ell)}}(B)=\ell\rho_T(B).
    \label{eq:lifted-direct-local-rank}
\end{equation}
This identity is the reason that the direct rate can be stated entirely in terms of the single-instance spaces \(\mathscr L_i(T)\).

\begin{dfn}[Direct target-row construction]\label{def:direct-realization}
A support-constrained realization is called \emph{direct} if the received matrix
$E$ can be chosen so that
\[
    \Row(E)=\mathscr C_T^{(\ell)}=\Row(I_\ell\otimes T).
\]
Equivalently, every transmitted row can be selected from the lifted target row
space $\mathscr{C}_T^{(\ell)}=\Row(I_\ell\otimes T).$
\end{dfn}

Theorem~\ref{thm:fixed_auxiliary_space} now gives the exact direct rate.

\begin{thm}\label{thm:direct-rate} For positive integers \(\ell\) and \(n\), a direct target-row \((\ell,n)\) construction exists if and only if for every $B\subseteq[m]$, 
\begin{equation}\label{eq:direct-rate-condition}
    \ell\rho_T(B)+(m-|B|)n\ge k\ell.
\end{equation}
Consequently, the largest rate achievable by direct target-row constructions is 
\begin{equation}\label{eq:direct-rate}
    R_{\rm direct}(T,\Sigma) = \min_{\substack{B\subseteq[m]\\ \rho_T(B)<k}} \frac{m-|B|}{k-\rho_T(B)}.
\end{equation}
\end{thm}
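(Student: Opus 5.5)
The plan is to specialize Theorem~\ref{thm:fixed_auxiliary_space} to the smallest admissible space $\mathscr W=\mathscr C_T^{(\ell)}$. Here $d_{\mathscr W}=k\ell$ because $T$ has full row rank, and \eqref{eq:lifted-direct-local-rank} gives $\rho_{\mathscr W}(B)=\ell\rho_T(B)$. By Definition~\ref{def:direct-realization}, a direct $(\ell,n)$ construction is exactly a $\Sigma$-support-constrained $E\in\F_q^{mn\times s\ell}$ with $\Row(E)=\mathscr C_T^{(\ell)}$. Substituting the two quantities into \eqref{eq:fixed-auxiliary-row} yields \eqref{eq:direct-rate-condition} verbatim, which settles the first claim.

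The one step I would verify carefully is \eqref{eq:lifted-direct-local-rank}. A vector of $\mathscr C_T^{(\ell)}$ has the form $(\bu_1,\ldots,\bu_\ell)$ with each $\bu_a\in\mathscr C_T$. Its support lies in $J_{\Sigma_i}^{(\ell)}$ if and only if each block $\bu_a$ is supported on $\Sigma_i$. Hence the local section is the direct sum of $\ell$ copies of $\mathscr L_i(T)$ placed in the $\ell$ instance blocks. Sums over $B$ then also decompose blockwise, which gives $\ell\rho_T(B)$.

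For the rate formula, first treat $B=[m]$: the condition reads $\rho_T([m])\ge k$. If $\rho_T([m])<k$, the set $B=[m]$ enters the minimum in \eqref{eq:direct-rate} with numerator $0$, so the formula gives $0$, consistent with infeasibility. Now assume $\rho_T([m])=k$. Sets with $\rho_T(B)=k$ impose no constraint. For every $B$ with $\rho_T(B)<k$ we have $B\neq[m]$, and the condition becomes
\[
\frac{\ell}{n}\le \frac{m-|B|}{k-\rho_T(B)}.
\]
So every feasible pair has rate at most $R^\ast\triangleq\min_{B:\rho_T(B)<k}\frac{m-|B|}{k-\rho_T(B)}$. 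Conversely, $R^\ast$ is a positive rational; write $R^\ast=\ell/n$. With this pair every inequality \eqref{eq:direct-rate-condition} holds, so the rate $R^\ast$ is attained exactly rather than only as a supremum.

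The main obstacle is interpreting ``largest rate'' when the ratio $\ell/n$ must be realized with integer parameters, which is why I would take $\ell,n$ to be the numerator and denominator of $R^\ast$. I would also handle the degenerate case in which the minimum is empty: this cannot happen, since $B=\emptyset$ satisfies $\rho_T(\emptyset)=0<k$. It contributes the term $m/k$, so the minimum is always well defined.
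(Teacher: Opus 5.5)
Your proposal is correct and follows the paper's proof essentially step for step. You specialize Theorem~\ref{thm:fixed_auxiliary_space} to $\mathscr W=\mathscr C_T^{(\ell)}$ using $d_{\mathscr W}=k\ell$ and \eqref{eq:lifted-direct-local-rank}, rewrite each inequality with $\rho_T(B)<k$ as $\ell/n\le (m-|B|)/(k-\rho_T(B))$, and realize the positive rational minimum by a suitable pair $(\ell,n)$. You also check the blockwise decomposition behind \eqref{eq:lifted-direct-local-rank} and note that the minimum is over a nonempty set because $B=\emptyset$ always qualifies; the paper leaves both points implicit.
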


\begin{pf}
For \(\mathscr W=\mathscr C_T^{(\ell)}\), we have \(d_{\mathscr W}=k\ell\). Substituting \eqref{eq:lifted-direct-local-rank} into Theorem~\ref{thm:fixed_auxiliary_space} gives \eqref{eq:direct-rate-condition}.

For every \(B\) with \(\rho_T(B)<k\), this condition is equivalent to
\[\frac{\ell}{n}\le\frac{m-|B|}{k-\rho_T(B)}.\]
Hence, every direct construction has rate at most the right-hand side of \eqref{eq:direct-rate}. If this minimum is zero, then \(\rho_T([m])<k\), so the local sections do not generate the target space and the direct rate is zero. Otherwise, the minimum is a positive rational number. Choosing \(\ell\) and \(n\) so that \(\ell/n\) equals this number makes all inequalities in \eqref{eq:direct-rate-condition} hold. This proves \eqref{eq:direct-rate}.
\end{pf}

The formula shows exactly why direct rows may fail. A set \(B\) with a small rank supply forces the nodes outside \(B\) to carry the missing target-space directions. We next compare this requirement with the capacity upper bounds in Section~\ref{sec:prelim}.

Let
\[
    \overline{C}_{\rm cut}(T,\Sigma)
    \triangleq
    \min_{\emptyset\neq A\subseteq[s]}
    \frac{|\Gamma(A)|}{\Rank(T_A)}
\]
be the simple cut set upper bound, and let
\[
    \overline{C}_{\rm sp}(T,\Sigma)
    \triangleq
    \min_{\substack{B\subseteq[m]\\ I_B\neq\emptyset}}
    \frac{|B|}
    {\max_{\cP_B\in\mathcal P(B)} R_T(\cP_B)}
\]
be the strong-partition upper bound. In the following, let \(\overline{C}\) denote either one of these bounds. Since the source subset \(A=[s]\) gives
\[
    \overline{C}_{\rm cut}(T,\Sigma)
    \le
    \frac{|\Gamma([s])|}{\Rank(T)}
    \le
    \frac{m}{k},
\]
we also have \(\overline{C}\le m/k\). 
Note that the quantity
\[\lambda_{\overline C}\triangleq m-k\overline C\]
is nonnegative. We call it the \emph{cut slack}. 
At rate \(\overline{C}\), the set \(B\) must supply enough target-row dimensions so that the remaining \(m-|B|\) middle-to-sink edges can provide the rest. In particular, if \(|B|>\lambda_{\overline{C}}\), the required rank contribution of \(B\) is
\[
    \frac{|B|-\lambda_{\overline{C}}}{\overline{C}}.
\]
We call this quantity the \emph{cut-adjusted rank demand} of \(B\). If \(|B|\le \lambda_{\overline{C}}\), the demand is nonpositive and hence no rank requirement is imposed on \(B\). Thus the natural optimality condition is that every set of middle nodes has rank supply at least its cut-adjusted rank demand.

\begin{thm}[Cut-adjusted rank-demand criterion]
\label{thm:cut-adjusted-rank-demand}
Let \(\overline C>0\) be either \(\overline C_{\rm cut}(T,\Sigma)\) or \(\overline C_{\rm sp}(T,\Sigma)\). The direct target-row construction achieves \(\overline C\) if and only if $\forall B\subseteq[m]\text{ with } |B|>\lambda_{\overline{C}}$, 
\begin{equation}
    \rho_T(B)
    \ge
    \frac{|B|-\lambda_{\overline{C}}}{\overline{C}}.
    \label{eq:rank-demand-condition}
\end{equation}
\end{thm}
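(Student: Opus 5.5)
The plan is to reduce the statement to the exact direct-rate formula of Theorem~\ref{thm:direct-rate} and rewrite the condition $R_{\rm direct}(T,\Sigma)\ge \overline C$ as the family of inequalities \eqref{eq:rank-demand-condition}. "Achieves $\overline C$" means $R_{\rm direct}(T,\Sigma)\ge\overline C$. Because $\overline C$ is an upper bound on the capacity and direct constructions are valid linear codes, this is equivalent to $R_{\rm direct}=\overline C$. Since $\overline C$ is rational, Theorem~\ref{thm:direct-rate} lets us pick $\ell,n$ with $\ell/n=\overline C$. So the claim becomes: $\ell\rho_T(B)+(m-|B|)n\ge k\ell$ holds for all $B\subseteq[m]$ if and only if \eqref{eq:rank-demand-condition} holds.

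The key step is a per-$B$ algebraic equivalence. Divide \eqref{eq:direct-rate-condition} by $n$ and set $\overline C=\ell/n$. This gives $\overline C\rho_T(B)+m-|B|\ge k\overline C$. Substituting $m-k\overline C=\lambda_{\overline C}$ turns it into
\[
\overline C\rho_T(B)\ge |B|-\lambda_{\overline C},
\]
and dividing by $\overline C>0$ gives exactly \eqref{eq:rank-demand-condition}.

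It then remains to handle the sets with $|B|\le\lambda_{\overline C}$. For these, the right-hand side $(|B|-\lambda_{\overline C})/\overline C$ is nonpositive while $\rho_T(B)\ge 0$, so the inequality holds automatically. This explains why the statement only quantifies over $|B|>\lambda_{\overline C}$.

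Alternatively, one can argue directly with \eqref{eq:direct-rate}. We need $(m-|B|)/(k-\rho_T(B))\ge\overline C$ for every $B$ with $\rho_T(B)<k$, which rearranges to the same inequality. For $B$ with $\rho_T(B)\ge k$, i.e.\ $\rho_T(B)=k$, the inequality $k\ge(|B|-m+k\overline C)/\overline C$ reduces to $|B|\le m$, which is true. So both directions are a matter of bookkeeping.

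The only point that needs care is confirming that rates are attained exactly at the rational value $\overline C$, and not merely approached as a supremum. Theorem~\ref{thm:direct-rate} handles this, since its minimum is attained by choosing $\ell/n$ equal to it. I would also note explicitly that the argument never uses which of the two bounds $\overline C$ is, beyond $0<\overline C\le m/k$ (so $\lambda_{\overline C}\ge0$). Hence it covers both $\overline C_{\rm cut}$ and $\overline C_{\rm sp}$ simultaneously.
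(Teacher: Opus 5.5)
Your proposal is correct and follows essentially the paper's argument: reduce to $R_{\rm direct}\ge\overline C$ via Theorem~\ref{thm:direct-rate}, rearrange each inequality using $\lambda_{\overline C}=m-k\overline C$, and note that sets with $|B|\le\lambda_{\overline C}$ (and sets with $\rho_T(B)=k$) satisfy the condition automatically. Your observation that the $\rho_T(B)=k$ case reduces to $|B|\le m$ is, if anything, a more accurate justification than the paper's appeal to $\overline C\le m/k$.
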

\begin{pf}
Because \(\overline C\) is an upper bound and \(R_{\rm direct}\) is achievable, equality holds if and only if \(R_{\rm direct}\ge\overline C\). By \eqref{eq:direct-rate}, this is equivalent to
\[
    \frac{m-|B|}{k-\rho_T(B)}\ge\overline C
\]
for every \(B\) with \(\rho_T(B)<k\). Rearranging gives
\[
    \rho_T(B)
    \ge
    k-\frac{m-|B|}{\overline C}
    =
    \frac{|B|-\lambda_{\overline C}}{\overline C}.
\]
When \(|B|\le\lambda_{\overline C}\), the right-hand side is nonpositive, so the inequality holds automatically. Sets with \(\rho_T(B)=k\) also satisfy it automatically because \(\overline C\le m/k\). Hence it is enough, and necessary, to check the sets stated in the theorem.
\end{pf}

The criterion has a simple interpretation: after the cut slack is removed, the rank supplied by \(B\) must grow at least at rate \(1/\overline C\) with \(|B|\). In the special case \(\overline C=m/k\), the slack is zero and the condition reduces to
\[
    \rho_T(B)\ge\frac{k}{m}|B|,
    \qquad B\subseteq[m].
\]
Thus every set of middle nodes must supply at least its proportional share of the \(k\)-dimensional target space.

The following basis-block condition gives a direct way to verify this balanced case. A set \(S\subseteq[m]\) is called a \emph{basis block} if \(|S|=k\) and one can choose \(\bc_i\in \mathscr L_i(T)\), \(i\in S\), such that \(\{\bc_i:i\in S\}\) is a basis of \(\mathscr C_T\).

\begin{thm}
\label{thm:uniform-cover-basis-blocks}
Suppose there is a multiset of basis blocks $\mathcal S=\{S_1,S_2,\ldots,S_\ell\}$ such that every middle node $i\in[m]$ appears in exactly $n$ blocks of $\mathcal S$. Then there exists a direct target-row $(\ell,n)$ construction. In particular,
\[R_{\rm direct}(T,\Sigma)=\frac{m}{k}.\]
Consequently, the direct construction attains the global cut set bound.
\end{thm}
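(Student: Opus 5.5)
The plan is to build the received matrix explicitly, one instance per basis block, and then read off the rate from Theorem~\ref{thm:direct-rate} together with the cut-set bound.

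\emph{Construction.} For each block $S_a\in\mathcal S$, fix vectors $\bc^{(a)}_i\in\mathscr L_i(T)$, $i\in S_a$, that form a basis of $\mathscr C_T$. Lift each such vector to instance $a$: let $\bc^{(a)}_i\otimes \be_a$ denote the vector of $\F_q^{s\ell}$ that equals $\bc^{(a)}_i$ on the coordinates of instance $a$ and is zero elsewhere. This lifted vector lies in $\mathscr E^{(\ell)}_{\Sigma_i}$ because $\supp(\bc^{(a)}_i)\subseteq\Sigma_i$, and it lies in $\mathscr C_T^{(\ell)}$ by the block structure of $I_\ell\otimes T$. Node $v_i$ transmits the rows $\bc^{(a)}_i\otimes\be_a$ for all $a$ with $i\in S_a$. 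Since $i$ lies in exactly $n$ blocks, this is exactly $n$ rows, so $E_i\in\F_q^{n\times s\ell}$ and $E$ is $\Sigma$-support-constrained.

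\emph{Spanning.} For each $a$, the rows coming from $S_a$ span $\mathscr C_T\otimes\be_a$, the copy of $\mathscr C_T$ on instance $a$. Summing over $a\in[\ell]$ gives $\Row(E)=\bigoplus_a \mathscr C_T\otimes\be_a=\mathscr C_T^{(\ell)}$. This is a direct $(\ell,n)$ construction, and Theorem~\ref{thm:realization-implies-code} turns it into a valid code.

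\emph{Rate.} Double counting incidences gives $k\ell=\sum_a|S_a|=mn$, so $\ell/n=m/k$. Hence $R_{\rm direct}\ge m/k$. For the reverse inequality, the cut-set bound with $A=[s]$ gives $\overline C_{\rm cut}\le m/k$, and $R_{\rm direct}\le C_{\rm lin}\le\overline C_{\rm cut}$. Therefore $R_{\rm direct}=m/k=\overline C_{\rm cut}$, i.e.\ the direct construction attains the global cut-set bound.

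The only delicate points are bookkeeping: the lifted vectors must be in the right coordinate subspace, and the per-instance bases must combine into a direct sum over instances. The latter is immediate, since the instance coordinate sets are disjoint.
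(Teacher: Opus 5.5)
Your proposal is correct and follows essentially the same approach as the paper: one basis block per instance, $n$ rows per node from the uniform cover, $k\ell=mn$ by double counting, and optimality from $\overline C_{\rm cut}\le m/k$. Your matrix-level phrasing via $\Row(E)=\mathscr C_T^{(\ell)}$ only makes explicit what the paper states at the symbol level. One small notational point: with the paper's instance-major ordering of $\bx_S$, the lifted vector would conventionally be written $\be_a\otimes\bc^{(a)}_i$ rather than $\bc^{(a)}_i\otimes\be_a$, but since you define it in words this does no harm.
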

\begin{pf}
    For each \(h\in[\ell]\), choose vectors \(\bc_i^{(h)}\in \mathscr L_i(T)\), \(i\in S_h\), that form a basis of \(\mathscr C_T\). For the \(h\)-th instance, node \(v_i\) sends \(\bc_i^{(h)}\bx^{(h)}\) when \(i\in S_h\). This is locally computable because \(\bc_i^{(h)}\) is supported on \(\Sigma_i\). The \(k\) symbols associated with \(S_h\) determine all \(k\) target values for that instance because their coefficient rows form a basis of \(\Row(T)\).
    
    Every node appears in \(n\) blocks and therefore sends \(n\) symbols. Also, counting block--node incidences gives \(k\ell=mn\), so the achieved rate is \(\ell/n=m/k\). The simple cut-set bound is at most \(m/k\); hence the construction is optimal.
\end{pf}

We next specialize the direct-rate formula to three basic target classes. These cases also explain when auxiliary rows are necessary.

\paragraph{Identity targets}

The identity target $T=I_s$ corresponds to full recovery of all source messages. Although this is not a genuine computation task, it is an important benchmark because it reduces the three-layer computation problem to a classical reconstruction problem. More importantly, it clarifies the meaning of direct target-row constructions.

For \(T=I_s\), the target space is the whole ambient space. Hence,
\begin{equation*}
   \mathscr L_i=\mathscr{C}_T\cap\mathscr E_{\Sigma_i}=\mathscr E_{\Sigma_i},
\end{equation*}
and for every $B\subseteq[m]$,
\begin{equation*}
    \rho_T(B)
    =
    \dim\left(\sum_{i\in B}\mathscr E_{\Sigma_i}\right)
    =
    \left|\bigcup_{i\in B}\Sigma_i\right|.
\end{equation*}
Every supported row is already a target row, so the direct restriction loses nothing. 
The following result shows that the exact direct rate reduces to the usual simple cut/rank bound. 

\begin{cor}[Identity targets]
For $T=I_s$,
\begin{equation*}
    R_{\rm direct}(I_s,\Sigma)
    =
    \overline C_{\rm cut}(I_s,\Sigma)
    =
    \min_{\emptyset\neq A\subseteq[s]}
    \frac{|\Gamma(A)|}{|A|}.
\end{equation*}
\end{cor}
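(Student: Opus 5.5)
We need to show $R_{\rm direct}(I_s,\Sigma)=\min_{\emptyset\neq A}|\Gamma(A)|/|A|$. The plan is to prove the two inequalities separately, starting from the direct-rate formula \eqref{eq:direct-rate} of Theorem~\ref{thm:direct-rate} with $k=s$ and $\rho_T(B)=|\bigcup_{i\in B}\Sigma_i|=|K_B|$. This gives
\[
R_{\rm direct}(I_s,\Sigma)=\min_{B\subseteq[m]:\,|K_B|<s}\frac{m-|B|}{s-|K_B|}.
\]

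\emph{Upper bound.} Since $R_{\rm direct}$ is an achievable linear rate, it is at most $C(\cN_{s,m,\Sigma},I_s)$, which is at most the simple cut bound $\overline C_{\rm cut}(I_s,\Sigma)$ from \eqref{eq:three-layer-cut-bound-B}.

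\emph{Lower bound.} We show that every term in the minimum is at least $\overline C_{\rm cut}$. Fix $B$ with $|K_B|<s$ and set $A\triangleq[s]\setminus K_B\neq\emptyset$. For $i\in\Gamma(A)$, the set $\Sigma_i$ meets $A$, so $\Sigma_i\not\subseteq K_B$ and hence $i\notin B$. Thus $\Gamma(A)\subseteq[m]\setminus B$, which gives $|\Gamma(A)|\le m-|B|$. Also $|A|=s-|K_B|$ and $\Rank(T_A)=|A|$ for $T=I_s$. Therefore
\[
\frac{m-|B|}{s-|K_B|}\ \ge\ \frac{|\Gamma(A)|}{|A|}\ \ge\ \overline C_{\rm cut}(I_s,\Sigma).
\]
Taking the minimum over $B$ yields $R_{\rm direct}\ge\overline C_{\rm cut}$, and combining with the upper bound gives equality.

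The degenerate case needs separate care. If $K_{[m]}\neq[s]$, then $B=[m]$ appears in the minimum with value $0$. In that case $A=[s]\setminus K_{[m]}$ has $\Gamma(A)=\emptyset$, so the cut bound is also $0$ and the equality still holds. Otherwise everything is positive and the argument above applies unchanged.

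The only real step is choosing $A=[s]\setminus K_B$ and checking the containment $\Gamma(A)\subseteq[m]\setminus B$. One could also show directly that the map $A\mapsto B=[m]\setminus\Gamma(A)$ gives the reverse comparison, but this is unnecessary because the cut bound already supplies the upper inequality.
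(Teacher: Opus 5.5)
Your proof is correct. The inequality $R_{\rm direct}(I_s,\Sigma)\ge\overline C_{\rm cut}(I_s,\Sigma)$ is argued exactly as in the paper: you take $A=[s]\setminus\bigcup_{i\in B}\Sigma_i$ and use the containment $\Gamma(A)\subseteq[m]\setminus B$.

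The difference is in the reverse inequality. The paper proves it combinatorially. For each nonempty $A$ it takes $B_A=[m]\setminus\Gamma(A)$ and notes that $\bigcup_{i\in B_A}\Sigma_i\subseteq[s]\setminus A$ and $m-|B_A|=|\Gamma(A)|$. Hence the $B_A$ term of the direct-rate minimum is at most $|\Gamma(A)|/|A|$.

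You argue operationally instead. $R_{\rm direct}$ is attained by an actual linear code (Theorem~\ref{thm:direct-rate}), so it is at most the capacity, and hence at most the cut-set bound \eqref{eq:three-layer-cut-bound-B}. Your route is shorter and valid, including your handling of the zero case. However, it makes the identity depend on the achievability half of Theorem~\ref{thm:direct-rate} and on the imported converse, Theorem~\ref{thm:cut-rank-bound}.

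The paper's route shows that the two min-expressions are equal as a purely set-theoretic fact, independent of any coding theorem. The two set maps $B\mapsto[s]\setminus\bigcup_{i\in B}\Sigma_i$ and $A\mapsto[m]\setminus\Gamma(A)$ play symmetric roles there. You mention this second map yourself as the self-contained alternative.
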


\begin{pf}
For $T=I_s$, the direct rate formula gives 
\begin{equation}\label{eq:identity-direct-rate-middle-form}
    R_{\rm direct}(I_s,\Sigma)
    =
    \min_{\substack{B\subseteq[m]\\
    \cup_{i\in B}\Sigma_i\neq [s]}}
    \frac{m-|B|}
    {s-\left|\cup_{i\in B}\Sigma_i\right|}.
\end{equation}
We prove that the right-hand side of \eqref{eq:identity-direct-rate-middle-form} is equal to the cut set bound $\min_{\emptyset\neq A\subseteq[s]}
    \frac{|\Gamma(A)|}{|A|}$.

First fix $B\subseteq[m]$ such that $\cup_{i\in B}\Sigma_i\neq[s]$, and define $A_B\triangleq [s]\setminus\bigcup_{i\in B}\Sigma_i.$
Then $A_B\neq\emptyset$. Moreover, no middle node in $B$ observes any source in $A_B$, and hence $\Gamma(A_B)\subseteq [m]\setminus B.$
Therefore,
\begin{equation*}
    \frac{|\Gamma(A_B)|}{|A_B|}
    \le
    \frac{m-|B|}
    {s-\left|\cup_{i\in B}\Sigma_i\right|}.
\end{equation*}
Taking the minimum over $B$ gives
$\overline C_{\rm cut}(I_s,\Sigma)
    \le
    R_{\rm direct}(I_s,\Sigma).$

Conversely, fix a nonempty source subset $A\subseteq[s]$, and define $ B_A\triangleq [m]\setminus\Gamma(A).$
Then every middle node in $B_A$ is disjoint from $A$, so $ \bigcup_{i\in B_A}\Sigma_i\subseteq [s]\setminus A.$
Thus,
\begin{equation*}
    s-\left|\bigcup_{i\in B_A}\Sigma_i\right|
    \ge
    |A|.
\end{equation*}
Also, $m-|B_A|=|\Gamma(A)|$. Hence
\begin{equation}
    R_{\rm direct}(I_s,\Sigma)
    \le
    \frac{m-|B_A|}
    {s-\left|\cup_{i\in B_A}\Sigma_i\right|}
    \le
    \frac{|\Gamma(A)|}{|A|}.
\end{equation}
Taking the minimum over all nonempty $A\subseteq[s]$ gives
\begin{equation}
    R_{\rm direct}(I_s,\Sigma)
    \le
    \overline C_{\rm cut}(I_s,\Sigma).
\end{equation}
Combining the two inequalities proves the claim.
\end{pf}

\paragraph{Algebraic sum targets}
It is useful to contrast the identity target with the scalar sum target
\(T=\mathbf 1_s^T\). In this case \(\mathscr{C}_T=\Span\{\mathbf 1_s\}\). Hence
\[
    \mathscr L_i=\mathscr{C}_T\cap\mathscr E_{\Sigma_i}
    =
    \begin{cases}
        \mathscr{C}_T, & \Sigma_i=[s],\\
        \{0\}, & \Sigma_i\neq[s].
    \end{cases}
\]
Thus, a nonzero direct row is available only at a node that observes every source. This does not mean that sum computation itself is difficult. Instead, it shows that ordinary sum computation generally relies on rows outside the target row space. For instance, if \(s=m=2\), \(\Sigma_1=\{1\}\), and \(\Sigma_2=\{2\}\), then \(v_1\) can send \(x_1\) and \(v_2\) can send \(x_2\), and the sink computes \(x_1+x_2\). The received rows \((1,0)\) and \((0,1)\) are not target rows, but their span contains the target row \((1,1)\). The scalar sum therefore gives the simplest example in which an auxiliary global space is useful.

\paragraph{Groupwise sum targets}

Suppose that \([s]=G_1\sqcup G_2\sqcup\cdots\sqcup G_k\), and that the sink wants one sum from each group. Then
\[
    \mathscr C_T
    =
    \Span\{\mathbf 1_{G_1},\ldots,\mathbf 1_{G_k}\}.
\]
A vector in \(\mathscr C_T\) is constant on every group on which it is nonzero. Hence node \(v_i\) can directly use the row \(\mathbf 1_{G_a}\) exactly when it observes the whole group \(G_a\). Define
\[
    Q_i
    \triangleq
    \{a\in[k]:G_a\subseteq\Sigma_i\}.
\]
Then,
\begin{equation}
    \mathscr L_i(T)
    =
    \Span\{\mathbf 1_{G_a}:a\in Q_i\},
    \qquad
    \rho_T(B)
    =
    \left|\bigcup_{i\in B}Q_i\right|.
    \label{eq:groupwise-local-space}
\end{equation}
Consequently,
\begin{equation}
    R_{\rm direct}(T,\Sigma)
    =
    \min_{\substack{B\subseteq[m]\\\cup_{i\in B}Q_i\neq[k]}}
    \frac{m-|B|}
    {k-\left|\cup_{i\in B}Q_i\right|}.
    \label{eq:groupwise-direct-rate}
\end{equation}
Thus the direct problem depends only on the derived group-level access sets \(Q_1,\ldots,Q_m\). It is exactly the identity-recovery problem for the \(k\) group sums. In particular, if every \(\Sigma_i\) is a union of whole groups, then the original computation problem itself reduces to this group-level identity problem, and direct rows achieve its simple cut-set bound.

Partial group observations are different. They may help compute the group
sums, but they produce no direct target row. For example, let
\[
    T=
    \begin{bmatrix}
        1&1&0&0\\
        0&0&1&1
    \end{bmatrix},
    \qquad
    \Sigma_i=\{i\},\quad i\in[4].
\]
Here every \(Q_i\) is empty, so the direct rate is zero. Nevertheless, the
four nodes can send \(x_1,x_2,x_3,x_4\), and the sink can recover both group
sums at rate one. The coordinate rows enlarge the target space and give an
auxiliary realization.

\subsection{Choosing auxiliary row spaces}
The direct space is only one candidate in \eqref{eq:linear-capacity-row-space-form}. We now give a rank-based rule for deciding whether an added direction improves a current choice \(\mathscr W\).

For \(\mathscr W\supseteq\mathscr C_T^{(\ell)}\), define the \emph{global uncovered
dimension}
\[
    g(\mathscr W)
    \triangleq
    d_{\mathscr W}-\rho_{\mathscr W}([m]).
\]
As illustrated in Section~\ref{subsec:fixed_auxiliary_rows}, \(g(\mathscr W)=0\) is exactly the requirement that \(\mathscr W\) be generated by its locally supported sections.
If \(g(\mathscr W)>0\), then the fixed row space \(W\) cannot be realized for any finite value of \(n\).
For every proper \(B\subsetneq[m]\), define
\[
    \theta_B(\mathscr W)
        \triangleq
        \frac{d_{\mathscr W}-\rho_{\mathscr W}(B)}{m-|B|},
    \qquad
    \Phi(\mathscr W)\triangleq \max_{B\subsetneq[m]}\theta_B(\mathscr W).
\]
When \(g(W)=0\), Corollary~\ref{cor:fixed-space-rate} gives
\[n^\star(W)=\lceil\Phi(W)\rceil.\]
Thus \(g(\mathscr W)\) tests whether \(W\) is realizable at all, while \(\Phi(\mathscr W)\) measures the load once this global obstruction has been removed.

%If each middle node is allowed to transmit at most \(n\) rows, then the fixed-\(W\) criterion in
%Theorem~\ref{thm:fixed_auxiliary_space} can be written as
%\[
%        \rho_{\mathscr W}(B)\ge d_{\mathscr W}-(m-|B|)n.
%\]
%Denote the  rank demand of \(B\) as $\mathsf D_n(B;W)\triangleq d_{\mathscr W}-(m-|B|)n$,
%and denote the corresponding rank deficit as
%\[
%        \delta_n(B;W)
%        \mathsf D_n(B;W)-\rho_{\mathscr W}(B)
%        =
%        d_{\mathscr W}-\rho_{\mathscr W}(B)-(m-|B|)n .      
%\]
%The fixed row space \(W\) is realizable with load \(n\) if and only if
%\[
%        \delta_n(B;W)\le 0,\qquad \forall B\subseteq[m].      
%\]
%In particular, for \(B=[m]\), this deficit becomes $\delta_n([m];W)=d_{\mathscr W}-\rho_{\mathscr W}([m]).$
%This quantity is independent of \(n\).  We call it the global uncovered
%dimension and write
%\[
%        g(W)\triangleq d_{\mathscr W}-\rho_{\mathscr W}([m]).   
%\]
%As illustrated in Section~\ref{subsec:fixed_auxiliary_rows},
%\(g(W)=0\) is exactly the requirement that \(W\) be generated by
%its locally supported sections.  

Now choose a direction \(\ba\notin \mathscr W\), and let
\[
        \mathscr W^+(\ba)\triangleq \mathscr W+\langle \ba\rangle .
\]
For \(B\subseteq[m]\), define the local-rank gain
\[
        \Delta_B(\ba;\mathscr W)
        \triangleq
        \rho_{\mathscr W^+(\ba)}(B)-\rho_{\mathscr W}(B).       
\]
Since \(\ba\notin \mathscr W\), we have \(\dim \mathscr W^+(\ba)=d_{\mathscr W}+1\).  Therefore, for every
\(B\subseteq[m]\),
\begin{equation}\label{eq:basic-accounting1}
    \theta_B(\mathscr W^+(\ba))
        =
        \theta_B(\mathscr W)
        +
        \frac{1-\Delta_B(\ba;\mathscr W)}{m-|B|}.
\end{equation}
For the global obstruction,
\begin{equation}\label{eq:basic-accounting2}
    g(\mathscr W^+(\ba))
        =
        g(\mathscr W)+1-\Delta_{[m]}(\ba;\mathscr W).
\end{equation}

Equations \eqref{eq:basic-accounting1} and \eqref{eq:basic-accounting2} show the two effects of an auxiliary direction. It raises \(\dim\mathscr W\) by one, but it may unlock several new locally supported directions. For a particular set \(B\), its normalized deficit decreases exactly when \(\Delta_B(\ba;\mathscr W)>1\). Thus a good auxiliary row is not merely a new row, it should create several useful local directions for the current bottleneck sets.

This gives the following one-step greedy rule.  At the current row space \(\mathscr W\), choose an auxiliary direction \(\ba\notin \mathscr W\) that minimizes
\begin{equation}\label{eq:priority-rule}
    \left(
        g(\mathscr W^+(\ba)),\;
        \Phi(\mathscr W^+(\ba))
        \right) 
\end{equation}
in lexicographic order.  Equivalently, the first priority is to make the enlarged row space locally generated, i.e., to reduce the global uncovered dimension \(g(\mathscr W)\).  Once the global obstruction is removed, the second priority is to reduce the largest normalized rank deficit \(\Phi(\mathscr W)\), and hence the load is certified by Theorem~\ref{thm:fixed_auxiliary_space}.

The rule in \eqref{eq:priority-rule} is only a one-step criterion.  It does not claim to solve the global problem of selecting a multi-dimensional auxiliary space.  Nevertheless, it gives a systematic rank-testable principle: a useful auxiliary row should be chosen so that it increases the rank supply of the currently deficient middle-node subsets by more than the one-dimensional demand that it adds.

\begin{figure}
    \subfigure[An example of direct target-row construction.]{\begin{minipage}[b]{0.49\linewidth}\includegraphics[width=1\linewidth]{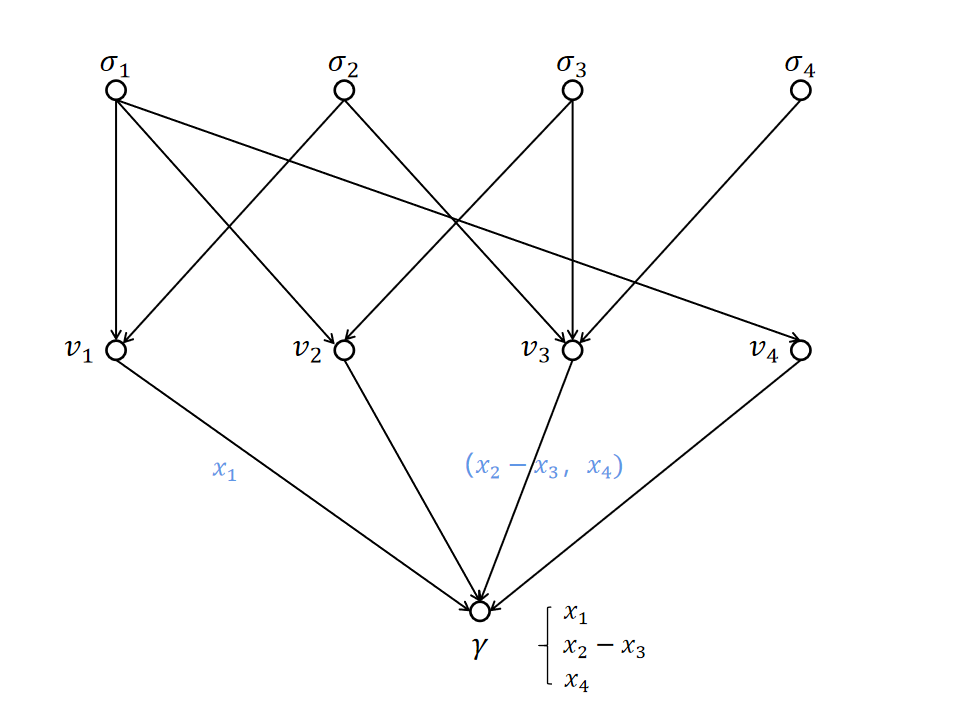}\end{minipage}}
    \vspace{4pt}
    \subfigure[Adding an auxiliary row.]{\begin{minipage}[b]{0.49\linewidth}\includegraphics[width=1\linewidth]{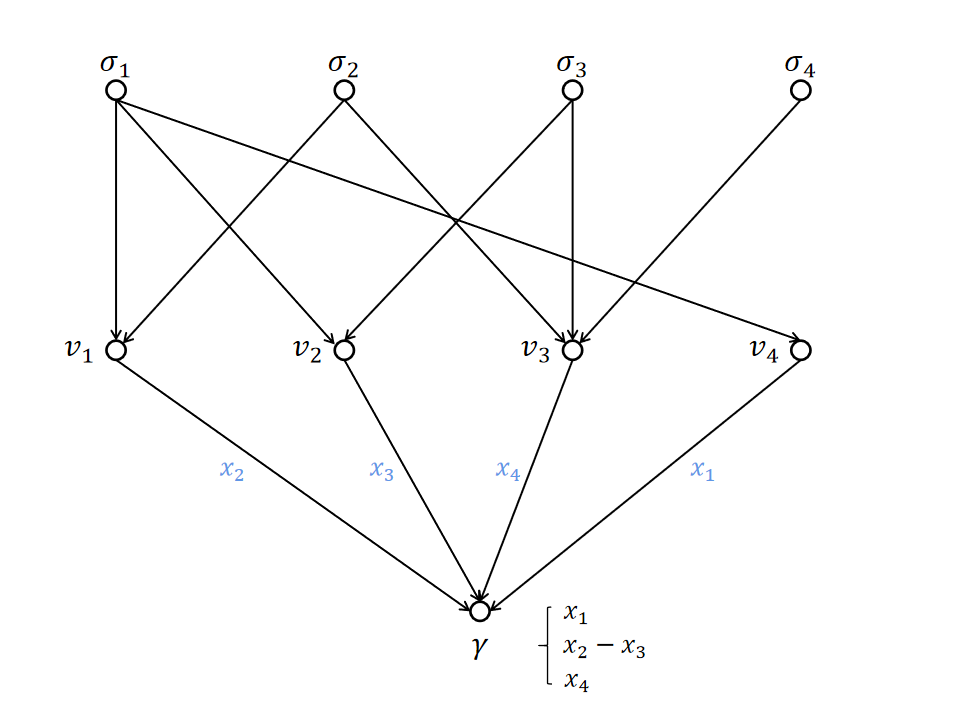}\end{minipage}}
    \caption{Adding an auxiliary row improves the computing capacity.}
    \label{fig:add_row}
\end{figure}

\begin{example}\label{ex:one_row_bottleneck}
Consider the network depicted in Fig.~\ref{fig:add_row}. Let \(s=m=4\) and \(\ell=1\).  Consider the target row space
\[
        \mathscr{C}_T
        =
        \Span\{\be_1,\ \be_2-\be_3,\ \be_4\}
        \subseteq \mathbb F_q^4,
\]
or equivalently,
\[
        T=
        \begin{bmatrix}
        1&0&0&0\\
        0&1&-1&0\\
        0&0&0&1
        \end{bmatrix}.
\]
Let the source-access sets be
\[
        \Sigma_1=\{1,2\},\qquad
        \Sigma_2=\{1,3\},\qquad
        \Sigma_3=\{2,3,4\},\qquad
        \Sigma_4=\{1\}.
\]
First take the direct row space $\mathscr W=\mathscr{C}_T .$
The local sections are
\[
        \mathscr L_1(\mathscr W)=\Span\{\be_1\},\qquad
        \mathscr L_2(\mathscr W)=\Span\{\be_1\}, \qquad
        \mathscr L_3(\mathscr W)=\Span\{\be_2-\be_3,\ \be_4\},\qquad
        \mathscr L_4(\mathscr W)=\Span\{\be_1\}.
\]
They generate \(\mathscr W\), so \(g(\mathscr W)=0\). However, for \(B=\{1,2,4\}\),
\[
        \rho_{\mathscr W}(B)
        =
        \dim\Span\{\be_1\}
        =
        1,\qquad
        \theta_B(\mathscr W)
        =
        \frac{d_{\mathscr W}-\rho_{\mathscr W}(B)}{m-|B|}
        =
        \frac{3-1}{1}
        =
        2.
\]
The nodes in \(B\) provide only \(e_1\), so node \(v_3\) must supply both remaining target directions. Hence \(n^\star(W)=2\).
In words, if we insist on using only rows from the target row space \(\mathscr{C}_T\), then the nodes \(v_1,v_2,v_4\) collectively provide only the direction \(\be_1\).  The two remaining target-row directions \(\be_2-\be_3\) and \(\be_4\) must both be supplied by \(v_3\), forcing load \(2\).

Now add one auxiliary direction $\ba=\be_2$ and define $\mathscr W^+(\ba)=\mathscr W+\langle \be_2\rangle $.
Since $\be_3=\be_2-(\be_2-\be_3)\in\mathscr  W^+(\ba),$
we have
\[
        \mathscr W^+(\ba)=\mathbb F_q^4.
\]
The new local sections are
\begin{align*}
    &\mathscr L_1(\mathscr W^+(\ba))=\Span\{\be_1,\be_2\},\qquad \mathscr L_2(\mathscr W^+(\ba))=\Span\{\be_1,\be_3\},\\
    &\mathscr L_3(\mathscr W^+(\ba))=\Span\{\be_2,\be_3,\be_4\},\qquad \mathscr L_4(\mathscr W^+(\ba))=\Span\{\be_1\}.
\end{align*}
For the same set \(B=\{1,2,4\}\), the local sections now supply
\[
        \rho_{\mathscr W^+(\ba)}(B)
        =
        \dim\Span\{\be_1,\be_2,\be_3\}
        =
        3.
\]
Thus, $\Delta_B(\ba;\mathscr W)=2.$
Although only one auxiliary row is added, it unlocks two new local directions for this bottleneck subset.  The rank demand also increases by one, since $\dim \mathscr W^+(\ba)=\dim \mathscr W+1.$
Hence the net deficit change is $1-\Delta_B(\ba;\mathscr W)=-1.$
Equivalently,
\[
        \theta_B(\mathscr W^+(\ba))
        =
        \frac{4-3}{1}
        =
        1.
\]
One checks that every proper subset \(B'\subsetneq[4]\) satisfies
\[
        \rho_{\mathscr W^+(\ba)}(B')+(4-|B'|)\ge 4.
\]
Therefore $ n^\star(\mathscr W^+(\ba))=1.$
The auxiliary row \(\be_2\) improves the certified rate from \(1/2\) to \(1\).
\end{example}
The example captures the main role of auxiliary rows. Enlarging \(\mathscr W\) is useful only when the added global dimension creates a larger gain in the local sections.

\section{MDS target under cyclic access}\label{sec:mds_target}
In this section, we focus on maximum distance separable (MDS) target functions and cyclic access systems. We first identify the common converse bound and the local threshold. We then give direct constructions in the dense regime. For the sparse regime, we construct a larger global row space through a sliding parity-check design. This design is deterministic, applies to every prescribed MDS target, and has a complete algebraic decodability proof.

A matrix $T\in\F_q^{k\times s}$ with $\Rank(T)=k$ is called an \emph{MDS target matrix} if for every $A\subseteq[s]$, $\Rank(T_A)=\min\{k,|A|\}$. MDS target matrices are a natural class of vector-linear targets for three reasons. First, over a sufficiently large field, they represent generic dense full-rank linear transformations. Second, for MDS targets, all rank terms appearing in cut-based converses depend only on the cardinalities of source subsets. Third, several familiar targets are boundary cases of MDS targets. In particular, the identity function, the algebraic sum, and the nondegenerate rank-$(s-1)$ target function have MDS coefficient matrices with $k=s,1$, and $s-1$, respectively.

\subsection{The MDS converse and the local threshold}
For an MDS target matrix, the simple cut set bound becomes the purely combinatorial quantity
\begin{equation}
    \overline{C}_{\rm MDS}(\Sigma)
    \triangleq
    \min_{\emptyset\neq A\subseteq[s]}
    \frac{|\Gamma(A)|}{\min\{k,|A|\}}.
    \label{eq:Cbar-MDS}
\end{equation}
The following lemma shows that, in the MDS case, the middle-layer specialization of the strong-partition bound does not improve this simple bound on three-layer source-access networks.

\begin{lem}\label{lem:mds-strong-equals-simple}
Assume that $T\in\F_q^{k\times s}$ is MDS and that every source is adjacent to at least one middle node. Then the middle-layer specialization of the strong-partition bound satisfies
\begin{equation}
    \min_{\substack{B\subseteq[m]\\ I_B\neq\emptyset}}
    \frac{|B|}
    {\max_{P_B\in\mathcal P(B)} R_T(P_B)}
    =
    \min_{\emptyset\neq A\subseteq[s]}
    \frac{|\Gamma(A)|}{\Rank(T_A)}
    =
    \overline{C}_{\rm MDS}(\Sigma).
    \label{eq:mds-two-bounds-equal}
\end{equation}
\end{lem}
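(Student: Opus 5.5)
The plan is to prove the two equalities in \eqref{eq:mds-two-bounds-equal} separately. The easy direction of the first equality follows from the trivial partition. The reverse direction uses the MDS rank formula $\Rank(T_A)=\min\{k,|A|\}$, which reduces $R_T(\cP_B)$ to a cardinality expression.

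\emph{Step 1: the two forms of the simple bound agree.} For nonempty $A\subseteq[s]$, set $B=\Gamma(A)$. Every source in $A$ has all its neighbours in $B$, so $A\subseteq I_B$, and in particular $I_B\neq\emptyset$. Hence
\[
\frac{|B|}{\Rank(T_{I_B})}\le \frac{|\Gamma(A)|}{\Rank(T_A)}.
\]
Conversely, for $B$ with $I_B\neq\emptyset$, take $A=I_B$. Then $\Gamma(A)\subseteq B$, which gives the opposite inequality. The hypothesis that every source has a neighbour guarantees that $\Gamma(A)\neq\emptyset$, so all ratios are finite and positive. Since $\{B\}\in\mathcal P(B)$ and $R_T(\{B\})=\Rank(T_{I_B})$, the strong-partition bound is at most $\min_B |B|/\Rank(T_{I_B})=\overline C_{\rm MDS}(\Sigma)$. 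The second equality in \eqref{eq:mds-two-bounds-equal} is just the MDS rank formula.

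\emph{Step 2: the reverse inequality.} Write $C\triangleq\overline C_{\rm MDS}(\Sigma)$. Fix $B$ with $I_B\neq\emptyset$ and a strong partition $\cP_B=\{B_1,\ldots,B_t\}$, and put $I_a=I_{B_a}$. The goal is to show $|B|\ge C\,R_T(\cP_B)$, and I would first record three structural facts.
\begin{itemize}
\item The sets $I_a$ are pairwise disjoint, because $I_a\subseteq K_{B_a}$ and $I_a\cap K_{B_b}=\emptyset$ for $a\neq b$.
\item $I_{\cP_B}\subseteq I_B$.
\item $\Gamma(I_a)\subseteq B_a$, so the definition of $C$ gives $|B_a|\ge C\min\{k,|I_a|\}$. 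Likewise $|B|\ge|\Gamma(I_B)|\ge C\min\{k,|I_B|\}$.
\end{itemize}
By the MDS property,
\[
R_T(\cP_B)=\min\{k,|I_B|\}+\sum_{a}\min\{k,|I_a|\}-\min\Bigl\{k,\sum_a|I_a|\Bigr\}.
\]

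\emph{Step 3: two cases.} If $\sum_a|I_a|\le k$, the last two terms cancel. Then $R_T(\cP_B)=\min\{k,|I_B|\}$, and $|B|\ge C R_T(\cP_B)$ follows from the third fact. If $\sum_a|I_a|>k$, the subtracted term equals $k\ge\min\{k,|I_B|\}$, so $R_T(\cP_B)\le\sum_a\min\{k,|I_a|\}$. Summing $|B_a|\ge C\min\{k,|I_a|\}$ over the partition gives $|B|\ge C R_T(\cP_B)$.

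Maximizing over $\cP_B$ and minimizing over $B$ then yields the reverse inequality. The only step requiring care is the disjointness of the $I_a$ in Step 2, since it is what turns $\Rank(T_{I_{\cP_B}})$ into $\min\{k,\sum_a|I_a|\}$. I expect it to follow directly from the strong-partition axioms as indicated.
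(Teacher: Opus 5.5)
Your proof is correct and follows essentially the same route as the paper. The trivial partition gives one inequality, and the reverse inequality comes from the disjointness of the $I_{B_a}$ together with the same two-case split on whether $\bigl|\bigcup_a I_{B_a}\bigr|\le k$.

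The differences are minor. You explicitly check that the $B$-form and $A$-form of the simple cut bound coincide, which the paper takes from Section~\ref{sec:prelim}, and you use an inequality rather than an equality in the second case. Note also that your claim $I_a\subseteq K_{B_a}$ is where the hypothesis $\Gamma(j)\neq\emptyset$ is actually used, not only in guaranteeing that the ratios are positive.
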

\begin{pf}
Let $\alpha
    \triangleq
    \min_{\emptyset\neq A\subseteq[s]}
    \frac{|\Gamma(A)|}{\Rank(T_A)}.$
Since the trivial partition is always a strong partition, the strong-partition bound is no larger than the simple cut-set bound. Thus it remains to prove that, for every middle-layer cut $B\subseteq[m]$ and every strong partition $P_B=\{B_1,B_2,\ldots,B_t\}\in\mathcal P(B)$,
\begin{equation}
    |B|\ge \alpha R_T(P_B).
    \label{eq:proof-need-alpha}
\end{equation}

Fix such $B$ and $P_B$. 
Since $B_a\subseteq B$, we have $I_{B_a}\subseteq I_B$ for every $a$, and hence $\bigcup_{a=1}^t I_{B_a}\subseteq I_B$. Moreover, the strong-partition condition implies that the sets $I_{B_1},I_{B_2},\ldots,I_{B_t}$ are pairwise disjoint. 

If $|\bigcup_{a=1}^t I_{B_a}|\le k$, then, since the sets $I_{B_1},I_{B_2},\ldots,I_{B_t}$ are disjoint, we have
\begin{equation*}
    \Rank(T_{I_{P_B}})=\left|\bigcup_{a=1}^t I_{B_a}\right|=\sum_{a=1}^t |I_{B_a}|=\sum_{a=1}^t \Rank(T_{I_{B_a}}).
\end{equation*}
Therefore, from \eqref{eq:RTPC}, we obtain $R_T(P_B)=\Rank(T_{I_{B}})$. Since $\Gamma(I_{B})\subseteq B$, by the definition of $\alpha$ we get
\begin{equation*}
    |B|\ge |\Gamma(I_B)|\ge\alpha R_T(P_B).
\end{equation*}

Next, consider the case when $|\bigcup_{a=1}^t I_{B_a}|>k$. Then $\Rank(T_{I_{P_B}})=k$. Since $\bigcup_{a=1}^t I_{B_a}\subseteq I_{B}$, we also have $\Rank(T_{I_B})=k$. Hence, from \eqref{eq:RTPC}, we have $R_T(P_B)=\sum_{a=1}^t \Rank(T_{I_{B_a}}).$
For each $a\in[t]$, we have $\Gamma(I_{B_a})\subseteq B_a$. Thus
\begin{equation*}
    |B_a|\ge |\Gamma(I_{B_a})|\ge \alpha \Rank(T_{I_{B_a}}).
\end{equation*}
Summing over $a$ gives
\begin{equation*}
    |B|=\sum_{a=1}^t |B_a|
    \ge
    \alpha\sum_{a=1}^t \Rank(T_{I_{B_a}})
    =
    \alpha R_T(P_B).
\end{equation*}
This proves \eqref{eq:proof-need-alpha} in both cases. Therefore the strong-partition bound is at least $\alpha$, and hence it equals the simple cut/rank bound. Finally, since $T$ is MDS, $\Rank(T_A)=\min\{k,|A|\}$, so the common value is $\overline{C}_{\rm MDS}(\Sigma)$.
\end{pf}

\begin{rmk}\label{rem:complete-strong-partition-bound}
Lemma~\ref{lem:mds-strong-equals-simple} is stated in terms of the middle-layer specialization \eqref{eq:three-layer-strong-bound}. Under the MDS assumption, the same conclusion also holds for the complete strong-partition bound in Theorem~2, which permits arbitrary edge cuts. 
To see this, first observe that every edge cut set \(C\in\Lambda(\mathcal N_{s,m,\Sigma})\) satisfies $|C|\geq|\Gamma(I_C)|.$ Indeed, for every \(i\in\Gamma(I_C)\), choose a source \(j\in I_C\cap\Sigma_i\). The two-edge path $\sigma_j\rightarrow v_i\rightarrow\gamma$ must contain an edge of \(C\). The paths selected for distinct middle nodes contain disjoint edges, which implies $|C|\geq|\Gamma(I_C)|.$

Now let \(C\) be an arbitrary edge cut and let $\mathcal P_C=\{C_1,C_2,\ldots,C_t\}$ be an arbitrary strong partition of \(C\). As in the proof of Lemma~\ref{lem:mds-strong-equals-simple}, the sets \(I_{C_1},\ldots,I_{C_t}\) are pairwise disjoint. If $\left|\bigcup_{a=1}^{t}I_{C_a}\right|\leq k,$ then $R_T(\mathcal P_C)=\Rank(T_{I_C}).$
Consequently,
\[|C|\geq|\Gamma(I_C)|\geq\overline C_{\rm MDS}(\Sigma)R_T(\mathcal P_C).\]
If $\left|\bigcup_{a=1}^{t}I_{C_a}\right|>k,$ then, $R_T(\mathcal P_C)=\sum_{a=1}^{t}\Rank(T_{I_{C_a}}).$ Therefore, $|C_a|\geq|\Gamma(I_{C_a})|$  yields
\[|C|=\sum_{a=1}^{t}|C_a|\geq\overline C_{\rm MDS}(\Sigma)\sum_{a=1}^{t}\Rank(T_{I_{C_a}})=\overline C_{\rm MDS}(\Sigma)R_T(\mathcal P_C).\]
Therefore, every term in the complete strong-partition bound is at least \(\overline C_{\rm MDS}(\Sigma)\). On the other hand, the complete bound is no larger than its middle-layer specialization, because it minimizes over a larger collection of cuts. Consequently, under the assumptions of Lemma~\ref{lem:mds-strong-equals-simple}, the complete strong-partition bound, its middle-layer specialization, and the simple cut-set bound are all equal to \(\overline C_{\rm MDS}(\Sigma)\).
\end{rmk}

Lemma~\ref{lem:mds-strong-equals-simple} allows us to use $\overline C_{\rm MDS}(\Sigma)$ as the converse benchmark throughout the MDS part. We next determine when the target row space contains a nonzero vector with a prescribed support.
\begin{lem}[MDS shortening threshold]\label{lem:mds-shortening-threshold}
Let $T\in\F_q^{k\times s}$ be MDS. Then, for every $A\subseteq[s]$,
\begin{equation}
    \dim(\mathscr{C}_T\cap\mathscr E_A)=\bigl(|A|-s+k\bigr)^+,
    \label{eq:mds-shortening-formula}
\end{equation}
where $x^+=\max\{x,0\}$. 
In particular, $\mathscr L_i\neq\{0\}$ if and only if $|\Sigma_i|\ge s-k+1.$
\end{lem}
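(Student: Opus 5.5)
We identify $\mathscr C_T\cap\mathscr E_A$ with the kernel of a linear map and count dimensions with rank--nullity. Every vector of $\mathscr C_T$ is uniquely $\bu^T T$ with $\bu\in\F_q^k$, because $T$ has full row rank $k$. The vector $\bu^T T$ lies in $\mathscr E_A$ exactly when its coordinates outside $A$ vanish. This means $\bu^T T_{\bar A}=\Zero$, where $\bar A\triangleq[s]\setminus A$. Hence
\[
\dim(\mathscr C_T\cap\mathscr E_A)=\dim\{\bu\in\F_q^k:\bu^T T_{\bar A}=\Zero\}=k-\Rank(T_{\bar A}).
\]

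We then invoke the MDS property for the complementary set: $\Rank(T_{\bar A})=\min\{k,s-|A|\}$. This gives
\[
\dim(\mathscr C_T\cap\mathscr E_A)=k-\min\{k,s-|A|\}=\max\{0,\;k-s+|A|\}=(|A|-s+k)^+,
\]
which is \eqref{eq:mds-shortening-formula}. The case $A=[s]$ needs the convention $\Rank(T_\emptyset)=0$, and the formula then correctly returns $k$.

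For the last assertion, take $A=\Sigma_i$ with $\ell=1$, so that $\mathscr L_i=\mathscr C_T\cap\mathscr E_{\Sigma_i}$. This space is nonzero iff $|\Sigma_i|-s+k\ge1$, i.e., $|\Sigma_i|\ge s-k+1$.

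No step is a real obstacle. The only care needed is to apply the MDS rank condition to the complement $\bar A$ rather than to $A$ itself, and to handle the empty complement.
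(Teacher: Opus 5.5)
Your proof is correct and follows essentially the same argument as the paper: you parametrize $\mathscr C_T$ injectively by $\bu^T T$, reduce the support condition to $\bu^T T_{[s]\setminus A}=\Zero$, and apply the MDS rank formula to the complement. Your explicit note on the convention $\Rank(T_\emptyset)=0$ for $A=[s]$ covers an edge case that the paper leaves implicit.
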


\begin{pf}
A vector in $\mathscr{C}_T$ has the form $\mathbf aT$ for some $\mathbf a\in\F_q^k$. It is supported in $A$ if and only if $\mathbf aT_{[s]\setminus A}=0.$
Since $T$ is MDS,
\begin{equation*}
    \Rank(T_{[s]\setminus A})=\min\{k,s-|A|\}.
\end{equation*}
Hence the dimension of the solution space for $\mathbf a$ is $k-\min\{k,s-|A|\}
    =
    \bigl(|A|-s+k\bigr)^+.$
Because $T$ has row rank $k$, the map $\mathbf a\mapsto\mathbf aT$ is injective. This proves \eqref{eq:mds-shortening-formula} and the remaining statement follows.
\end{pf}
The threshold in Lemma~\ref{lem:mds-shortening-threshold} is a local condition. Above it, each middle node has at least one nonzero target row that it can send. Direct optimality still depends on how these local rows fit together, as described by Theorem~\ref{thm:cut-adjusted-rank-demand}. Below the threshold, the situation is sharper, that is, every local section is zero, so no direct target-row realization is possible.

We now specialize to the cyclic support system.  Let $m=s$ and let all indices be cyclic in $[s]$.  For a fixed support size $r$, for every $i\in[s]$, define
\begin{equation}\label{eq:cyclic-Sigma}
        \Sigma_i=[i:i+r-1]_{\rm s}.
\end{equation}
Thus, source \(j\) is observed by
\begin{equation}
    \Gamma(j)=[j-r+1:j]_{\rm s}.
    \label{eq:cyclic-Gamma}
\end{equation}

\begin{lem}\label{lem:cyclic-bound}
    For the cyclic support system \eqref{eq:cyclic-Sigma} and an MDS target matrix $T\in\F_q^{k\times s}$,
\begin{equation}\label{eq:cyclic-CMDS}
        \overline C_{\rm MDS}(\Sigma)=\frac{\min\{r+k-1,s\}}{k}.
\end{equation}
\end{lem}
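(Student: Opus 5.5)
We evaluate the combinatorial expression \eqref{eq:Cbar-MDS}, namely $\min_{A}|\Gamma(A)|/\min\{k,|A|\}$, for the cyclic system \eqref{eq:cyclic-Sigma}. The key quantity is the size of the cyclic neighborhood $\Gamma(A)=\bigcup_{j\in A}[j-r+1:j]_{\rm s}$. We prove a matching upper bound, by exhibiting one set $A$, and a matching lower bound valid for every $A$.

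\emph{Upper bound.} If $k\le s-r+1$, take $A=[1:k]_{\rm s}$, a cyclic interval of length $k$. Its neighborhood is the cyclic interval $[2-r:k]_{\rm s}$ of length $r+k-1\le s$, so the ratio is $(r+k-1)/k$. If $k>s-r+1$, then $r+k-1>s$. Taking $A=[s]$ gives $|\Gamma(A)|\le s$ and $\min\{k,s\}=k$, so the ratio is at most $s/k$. In both cases the minimum is at most $\min\{r+k-1,s\}/k$.

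\emph{Lower bound.} Fix a nonempty $A$ and write $a=|A|$. We show $|\Gamma(A)|\ge\min\{a+r-1,s\}$. Decompose $A$ into its maximal cyclic runs. The neighborhood of each run of length $t$ is an interval of length $t+r-1$, truncated at $s$. The neighborhoods of consecutive runs may overlap, but each run's neighborhood contributes at least the run itself together with the $r-1$ positions preceding it, minus any positions already covered by the previous run's neighborhood. A cleaner route is a sweep argument. Going cyclically, every position of $A$ lies in $\Gamma(A)$ because $j\in\Gamma(j)$. In addition, the $r-1$ positions immediately before the first element of some run that are not in $A$ are also covered. If $\Gamma(A)\neq[s]$, choose $p\notin\Gamma(A)$ and order $A$ starting after $p$. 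Then the first element $j_1$ of $A$ after $p$ has its $r-1$ predecessors $j_1-r+1,\dots,j_1-1$ all in $\Gamma(A)$, and they lie strictly between $p$ and $j_1$ (since $p\notin\Gamma(j_1)$). These positions are not in $A$, by the minimality of $j_1$. Hence $|\Gamma(A)|\ge a+r-1$. Otherwise $|\Gamma(A)|=s$. Consequently
\[
\frac{|\Gamma(A)|}{\min\{k,a\}}\ge\frac{\min\{a+r-1,s\}}{\min\{k,a\}}.
\]

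\emph{Minimizing over $a$.} It remains to show that $\min\{a+r-1,s\}/\min\{k,a\}\ge\min\{r+k-1,s\}/k$ for every $1\le a\le s$. For $a\ge k$ the denominator equals $k$ and the numerator is nondecreasing in $a$, so the inequality holds. For $a<k$ we split into two cases. If the numerator is $s$, then $s/a\ge s/k$. Otherwise we need $(a+r-1)/a\ge\min\{r+k-1,s\}/k$; since $(a+r-1)/a=1+(r-1)/a\ge 1+(r-1)/k$, this holds. The only delicate step is the sweep argument, in particular that the $r-1$ predecessors of $j_1$ lie outside $A$ and inside $\Gamma(A)$; this needs $r\le s$, which is implicit in the cyclic setup.
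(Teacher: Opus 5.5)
Your proof is correct and follows the paper's route: bound $|\Gamma(A)|\ge\min\{|A|+r-1,s\}$, with equality for cyclic intervals, and then show the ratio is minimized at $|A|=k$; your sweep argument also proves the neighborhood bound, which the paper only asserts. The caveat $r\le s$ that you flag is automatic, because a point $p\notin\Gamma(j_1)$ exists only when $\Gamma(j_1)$ is a proper cyclic interval of $r<s$ distinct elements.
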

\begin{pf}
    For any nonempty $A\subseteq[s]$, the middle-node neighborhood $\Gamma(A)$ is the set of cyclic intervals of length $r$ that intersect $A$.  If $|A|=a$, then
\[
        |\Gamma(A)|\ge \min\{a+r-1,s\},
\]
with equality when $A$ is a cyclic interval of length $a$.  Therefore
\[
        \overline C_{\rm MDS}(\Sigma)
        =\min_{1\le a\le s}\frac{\min\{a+r-1,s\}}{\min\{a,k\}}.
\]
For \(1\leq a\leq k\), define
\[
    f(a)\triangleq
    \frac{\min\{a+r-1,s\}}{a}.
\]
Before the numerator reaches \(s\), we have \(f(a)=1+(r-1)/a\), and after it reaches \(s\), we have \(f(a)=s/a\). Hence \(f(a)\) is nonincreasing in \(a\), and the minimum over \(1\leq a\leq k\) is attained at \(a=k\). For \(a\geq k\), the denominator is fixed at \(k\), while the numerator is nondecreasing in \(a\), so the minimum over this range is also attained at \(a=k\). Consequently,
\[
    \overline C_{\rm MDS}(\Sigma) =\frac{\min\{r+k-1,s\}}{k},
\]
as claimed.
%For $a<k$, the ratio is at least its value at $a=k$, because the denominator is smaller.  For $a\ge k$, the denominator is $k$ and the numerator is minimized at $a=k$.  Hence, the minimum is achieved by any cyclic interval $A$ of length $k$, which gives \eqref{eq:cyclic-CMDS}.
\end{pf}

The converse has two regimes. If \(r\ge s-k+1\), its value is \(s/k\), and nonzero direct rows are available. If \(r\le s-k\), its value is \((r+k-1)/k\), while all local sections of the target space are zero. The constructions below match this division.

\subsection{Achievability results}
We first state the resulting bounds. The rest of the section gives the constructions and their proofs. 

\begin{thm}\label{thm:MDS_cyclic_lowerbound}
    Let \(T\in\F_q^{k\times s}\) be an MDS matrix, let \(m=s\), and consider the cyclic source-access system. Then the lower bound for the computing capacity is
    \begin{equation}
        C(\cN_{s,s,\Sigma},T)\ge \begin{cases}
            \frac{s}{k}, & \text{if $r\ge s-k+1$;}\\
            \lfloor\frac{k+r-1}{k}\rfloor, &\text{if $r<s-k+1$}.
        \end{cases}
    \end{equation}
    Hence, the converse is attained in the dense regime, while its floor is achieved in the sparse regime.
\end{thm}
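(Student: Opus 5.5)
In the dense regime $r\ge s-k+1$ I would use the direct target-row construction and invoke Theorem~\ref{thm:uniform-cover-basis-blocks}. By Lemma~\ref{lem:mds-shortening-threshold}, the space of vectors of $\mathscr C_T$ supported on a cyclic interval of length $s-k+1$ is one-dimensional. For each $j\in[s]$, I would let $\bc_j$ span this space for the interval $[j:j+s-k]_{\rm s}$. Because $s-k+1\le r$, this interval lies inside $\Sigma_j$, so $\bc_j\in\mathscr L_j(T)$.

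Since a nonzero codeword of an MDS code has weight at least $s-k+1$, the support of $\bc_j$ is exactly this interval. For each $h\in[s]$, I would take the window $S_h=[h:h+k-1]_{\rm s}$ of middle nodes. For $0\le t\le k-1$, the vector $\bc_{h+t}$ vanishes on coordinates $h,\dots,h+t-1$, because $t+(s-k+1)\le s$. It is also nonzero at coordinate $h+t$. Restricted to the coordinates $[h:h+k-1]_{\rm s}$, the $k$ vectors therefore form a triangular matrix with nonzero diagonal, so they are independent and form a basis of $\mathscr C_T$. Hence each $S_h$ is a basis block. The $s$ windows cover every node exactly $k$ times. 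Theorem~\ref{thm:uniform-cover-basis-blocks} with $\ell=s$ and $n=k$ then gives rate $s/k$, which matches Lemma~\ref{lem:cyclic-bound}.

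In the sparse regime $r\le s-k$, let $p\triangleq\lfloor (k+r-1)/k\rfloor$, so that $r\ge (p-1)k+1$. I would aim for an $(\ell,n)=(p,1)$ code, in which each node sends one symbol for $p$ instances. The case $p=1$ is trivial: node $i$ sends $x_i$, so $\mathscr W=\F_q^s$, and the sink applies $T$. For $p\ge2$, Lemma~\ref{lem:mds-shortening-threshold} shows that all direct sections vanish, so auxiliary rows are needed. Via Theorem~\ref{thm:fixed_auxiliary_space}, I need $s$ vectors $\be_i\in\mathscr E^{(p)}_{\Sigma_i}$, one per node, whose span contains $\mathscr C_T^{(p)}$. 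The dimension count $kp\le k+r-1\le s$ leaves room for this.

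I would work in the dual. The span $\mathscr W$ must satisfy $\mathscr W^{\perp}\subseteq(\ker T)^{\oplus p}$, where $\ker T$ is the dual MDS code of dimension $s-k$. I would make the design cyclic. Index each instance-component of node $i$'s row by an offset window inside $\Sigma_i$: the part for instance $a$ is supported on a sub-interval of $\Sigma_i$ of length about $k$, shifted by $(a-1)k$. This gives $p$ staggered windows, which fit because $r\ge(p-1)k+1$. The coefficients would be chosen from a sliding parity-check (GRS/Vandermonde-type) matrix over a field with $q>r+k-1$. The key claim is that the $s\times sp$ matrix $E$ with these rows has row space containing $\Row(I_p\otimes T)$. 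Equivalently, every vector in the right kernel of $E$ lies in the kernel of $I_p\otimes T$.

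I expect this decodability step to be the main obstacle. My first approach would be to write $\ker E$ explicitly using the staggered structure. A kernel vector $(\bu^{(1)},\dots,\bu^{(p)})$ would be forced, window by window, to satisfy local parity relations. I would then show, using the MDS property that any $k$ columns of $T$ are independent together with the nonvanishing of the relevant Vandermonde minors (this is where $q>r+k-1$ enters), that each $\bu^{(a)}$ must lie in $\ker T$. If a direct kernel computation becomes unwieldy, I would instead check condition~\eqref{eq:fixed-auxiliary-row} for the constructed $\mathscr W$ by bounding the local rank supply $\rho_{\mathscr W}(B)$ on cyclic intervals $B$ and reducing general $B$ to unions of intervals.
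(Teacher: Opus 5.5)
Your dense-regime argument is correct and coincides with the paper's: shortened codewords $\bc_j$ supported exactly on $[j:j+s-k]_{\rm s}$, triangular $k$-windows, and a direct $(s,k)$ code. Your case $p=1$ is also fine.

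The gap is the sparse regime with $p\ge 2$, which is the substance of the theorem. No matrix $E$ is actually specified, and decodability is only announced as the ``main obstacle''. Worse, the staggered design you describe cannot work.

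Take $r=(p-1)k+1$, i.e.\ $k\mid r+k-1$. Windows shifted by $(a-1)k$ leave each node an instance-$p$ window of length $r-(p-1)k=1$. So column $(p,j)$ of $E$ is supported on the single row $i_0=j-(p-1)k$. Since $\bt_j\neq\Zero$, the identity $DE=I_p\otimes T$ forces $\bd_{i_0}$ to be a nonzero multiple of $\be_p\otimes\bt_j$. As $j$ runs over $[s]$, so does $i_0$. Hence every decoder column lies in the $k$-dimensional space $\be_p\otimes\F_q^k$, contradicting $\Rank(I_p\otimes T)=kp$.

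Overlapping windows fail too. For $k=2$, $r=3$ with instance windows $[i:i+1]_{\rm s}$ and $[i+1:i+2]_{\rm s}$, one gets $\Span\{\bd_i,\bd_{i+1}\}=\Span\{\be_1\otimes\bt_{i+1},\be_2\otimes\bt_{i+2}\}$ for every $i$. Two consecutive such planes intersect only in $\{\Zero\}$, because any two columns of $T$ are independent. This forces $\bd_{i+1}=\Zero$.

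Your fallback through \eqref{eq:fixed-auxiliary-row} does not help either. That condition certifies that a given $\mathscr W$ is generated by its local sections. It does not certify $\mathscr W\supseteq\mathscr C_T^{(p)}$, which is exactly what is in doubt.

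The missing idea is to design the decoder first and to let every instance of source $j$ use all $r$ nodes of $\Gamma(j)$.

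\begin{itemize}
\item The paper rewrites $DE=I_\ell\otimes T$ as the local conditions $\mathscr U_j=\Span\{\be_b\otimes\bt_j:b\in[\ell]\}\subseteq\Span\{\bd_i:i\in\Gamma(j)\}$.
\item It meets these with a sliding parity-check sequence. First, $\bq_i\perp\bt_{i-k+2},\dots,\bt_i$; any $k$ consecutive $\bq_i$ are independent by the MDS property.
\item Next, $\bh_i\in\F_q^\ell\otimes\Span(\bq_i)$ is chosen so that every $N=k\ell$ cyclically consecutive $\bh_i$ form a basis. This uses linear Rado for a single window, then a nonvanishing-polynomial argument with $q>N$ for all windows at once.
\item Finally, $\bd_i\perp\bh_i,\dots,\bh_{i+N-2}$. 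Then $\{\bd_i:i\in\Gamma(j)\}$ is a basis of $\ker H_j$ with $H_j=[\bh_j,\dots,\bh_{j+k-2}]^T$, which contains $\mathscr U_j$. The entries of $E$ are read off from these local representations.
\end{itemize}

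The nondivisible case then follows by shrinking the supports to $r_0=k\ell_0-k+1$. Lastly, the theorem is stated over an arbitrary $\F_q$, so your standing assumption $q>r+k-1$ needs the extension-field step: an $\F_{q^t}$-linear $(\ell,1)$ code with $q^t>N$ yields an $\F_q$-linear $(t\ell,t)$ code of the same rate.
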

%In the dense regime $r\ge s-k+1$, since $\mathscr{W}=\mathscr{C}_T^{\ell}$ already has sufficiently rich local sections, and no auxiliary direction is needed. In the sparse regime, however, for every $i\in[s]$, 
%\[\mathscr{C}_T^{\ell}\cap\mathscr{E}_{\Sigma_i}=\{\Zero\},\]
%so a direct target-row realization cannot transmit any nonzero target direction locally. Capacity achievement therefore requires a strict enlargement of the lifted target row space.

%For the divisible sparse regime, we construct such an enlargement indirectly through a sliding parity-check design. The construction first produces a family of decoder columns whose local windows span the target fibers and then derives a support-constrained received matrix. The representation theorem subsequently converts this matrix into the auxiliary-row normal form. This argument applies to arbitrary \(r<s-k+1\). For nondivisible parameters, we restrict each cyclic access set to a smaller divisible one and apply the same construction, which yields the integer-part rate in Theorem~\ref{thm:MDS_cyclic_lowerbound}.

%The remainder of this section proves the three parts in order. We first treat the dense regime using locally shortened MDS codewords. We then establish the sliding parity-check construction for the divisible sparse regime and finally derive the nondivisible result by support restriction.
In the dense regime, we use the smallest global space \(\mathscr C_T^{(\ell)}\), so no auxiliary rows are needed. In the divisible sparse regime, we build a support-constrained matrix \(E\) whose row space strictly contains \(\mathscr C_T^{(\ell)}\). The construction is based on overlapping parity-check windows. In the nondivisible sparse regime, we reduce the support size to the largest smaller value covered by the divisible construction.

\subsubsection{Dense regime $r\ge s-k+1$.}
For each $i\in[s]$, let $\Delta_i=[i:i+s-k]_{\rm s}.$ Since \(|\Delta_i|=s-k+1\), we have \(\Delta_i\subseteq\Sigma_i\).   Lemma~\ref{lem:mds-shortening-threshold} shows that the shortened space $\mathscr \mathscr{C}_T\cap \mathscr E_{\Delta_i}$ is one-dimensional. Choose a nonzero row vector \begin{equation}
    \bc_i\in\cC_T\cap\cE_{\Delta_i}^{(1)}.
    \label{eq:dense-ci}
\end{equation}
Since $T$ is MDS, every nonzero vector in $\mathscr{C}_T$ has Hamming weight at least $s-k+1$.  Thus, $\supp(\bc_i)=\Delta_i.$

Every \(k\) cyclically consecutive vectors
\begin{equation}
    \bc_a,\bc_{a+1},\ldots,\bc_{a+k-1}
\label{eq:dense-consecutive-basis}
\end{equation}
form a basis of \(\cC_T\). To see this, restrict them to the coordinates \([a:a+k-1]_{\rm s}\). For \(t=0,\ldots,k-1\), the vector \(\bc_{a+t}\) is zero in the first \(t\) coordinates and nonzero in coordinate \(a+t\). The restricted \(k\times k\) matrix is triangular with a nonzero diagonal.

We now use one such basis for each of \(s\) target instances. In instance \(a\), the nodes \(a,a+1,\ldots,a+k-1\) send the target rows in \eqref{eq:dense-consecutive-basis}. The sink recovers that instance because these rows form a basis of \(\cC_T\). Each middle node appears in exactly \(k\) of the \(s\) basis blocks. Thus, every middle node sends \(k\) symbols, giving a direct \((s,k)\) code of rate \(s/k\). This matches the converse in Lemma~\ref{lem:cyclic-bound}.

The main idea of the coding scheme is illustrated by the following example.

\begin{figure}
    \centering
    \includegraphics[width=0.5\linewidth]{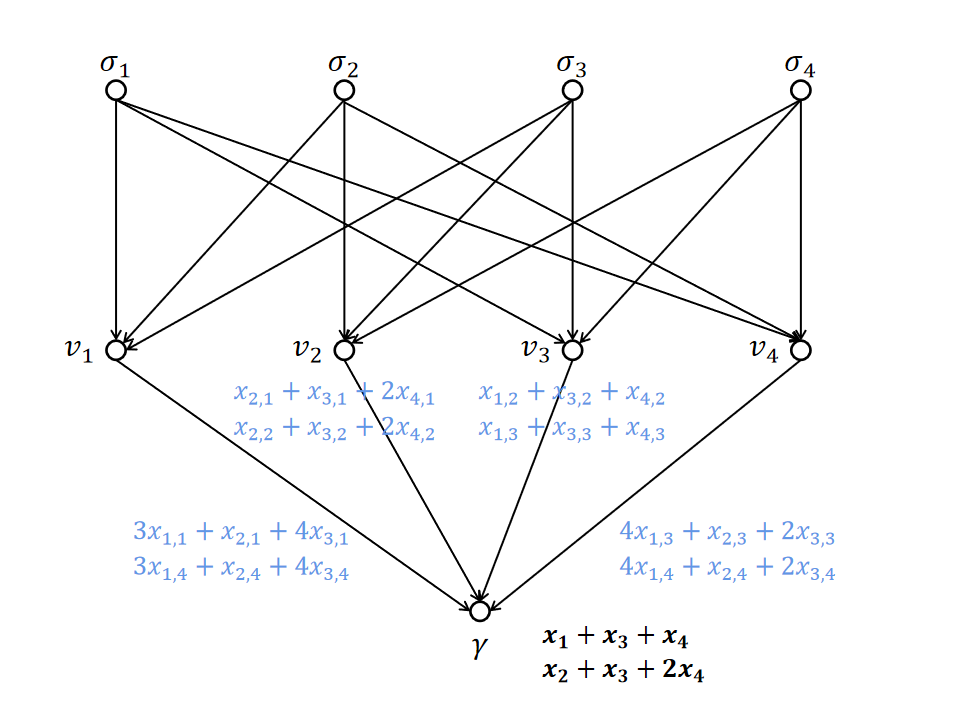}
    \caption{An example of the dense cyclic construction.}
    \label{fig:dense_cyclic}
\end{figure}

\begin{example}[Dense cyclic construction]\label{ex:dense-cyclic}
Let \(s=m=4\), \(k=2\), \(r=3\), and $\F_q=\F_5$.  
Take
\[
        T=\begin{bmatrix}
        1&0&1&1\\
        0&1&1&2
        \end{bmatrix},
\]
and 
\[
\Sigma_1=\{1,2,3\},\quad
\Sigma_2=\{2,3,4\},\quad
\Sigma_3=\{3,4,1\},\quad
\Sigma_4=\{4,1,2\}.
\]
The four vectors selected by \eqref{eq:dense-ci} can be chosen as
\[
\begin{aligned}
    \bc_1&=(3,1,4,0),&
    \bc_2&=(0,1,1,2),\\
    \bc_3&=(1,0,1,1),&
    \bc_4&=(1,4,0,4).
\end{aligned}
\]
Their supports are \(\Sigma_1,\ldots,\Sigma_4\), respectively, and every two consecutive vectors form a basis of \(\cC_T\). For instance \(a\), nodes \(a\) and \(a+1\) send the combinations defined by \(\bc_a\) and \(\bc_{a+1}\). Each node sends in two instances. This gives a direct \((4,2)\) code of rate \(2\) as shown in Fig.~\ref{fig:dense_cyclic}.
\end{example}

\subsubsection{Divisible sparse regime $r<s-k+1$ and $k\mid (r+k-1)$} Denote
\begin{equation}
    \ell=\frac{r+k-1}{k},
    \qquad
    N=k\ell=r+k-1.
    \label{eq:sparse-N}
\end{equation}
Since \(r\le s-k\), we have \(N<s\). The converse equals \(N/k=\ell\), so it is enough to construct an \((\ell,1)\) code.

From the support-constrained matrix representation defined in Section~\ref{sec:framework}, we need to construct a \(\Sigma\)-support-constrained received matrix \(E\in\mathbb F_q^{s\times s\ell}\) and a decoder $D\in\mathbb F_q^{N\times s}$ such that
\begin{equation}
    DE=I_\ell\otimes T.
    \label{eq:sparse-decoder-factorization}
\end{equation}
Let \(\bt_j=T(:,j)\). The \(N\times\ell\) block of target columns associated with source \(j\) spans the target fiber
\begin{equation}
    \mathscr U_j
    \triangleq
    \Span\{\be_b\otimes\bt_j:b\in[\ell]\}
    \subseteq\mathbb F_q^N,
    \label{eq:sparse-target-fiber}
\end{equation}
where $\be_b\in\F_q^{\ell}$ is the unit vector  with the $b$-th coordinate $1$. Eq.~\eqref{eq:sparse-decoder-factorization} implies that the vector $\be_a\otimes\bt_j$ is a linear combination of the columns of $D$. Denote 
\[D=\begin{bmatrix}
    \bd_1&\bd_2&\cdots&\bd_s
\end{bmatrix}.\]
Since $\Gamma(j)=[j-r+1:j]_{\rm s}$ is the set of middle nodes that observe source \(j\), the columns of \(E\) associated with source \(j\) can be supported only on the rows indexed by
\(\Gamma(j)\). Therefore, the required local condition is
\begin{equation}
    \mathscr U_j
    \subseteq
    \Span\{\bd_i:i\in\Gamma(j)\},
    \label{eq:sparse-local-decoding-goal}
\end{equation}
for every $j\in[s]$.
Indeed, once~\eqref{eq:sparse-local-decoding-goal} holds, the coefficients of these local linear representations give the entries of \(E\).

The parity-check viewpoint explains how we enforce all containments at once. Since $N-r=k-1,$ we seek, for each source \(j\), a rank-\((k-1)\) parity-check matrix \(H_j\). Its kernel has dimension \(r\). We will arrange that \(\mathscr U_j\subseteq\ker H_j\) and that the \(r\) decoder columns indexed by \(\Gamma(j)\) form a basis of this kernel. The main difficulty is that the same decoder column belongs to several such local bases. The sliding construction below makes these overlapping requirements compatible.

\begin{figure}
    \centering
    \includegraphics[width=0.5\linewidth]{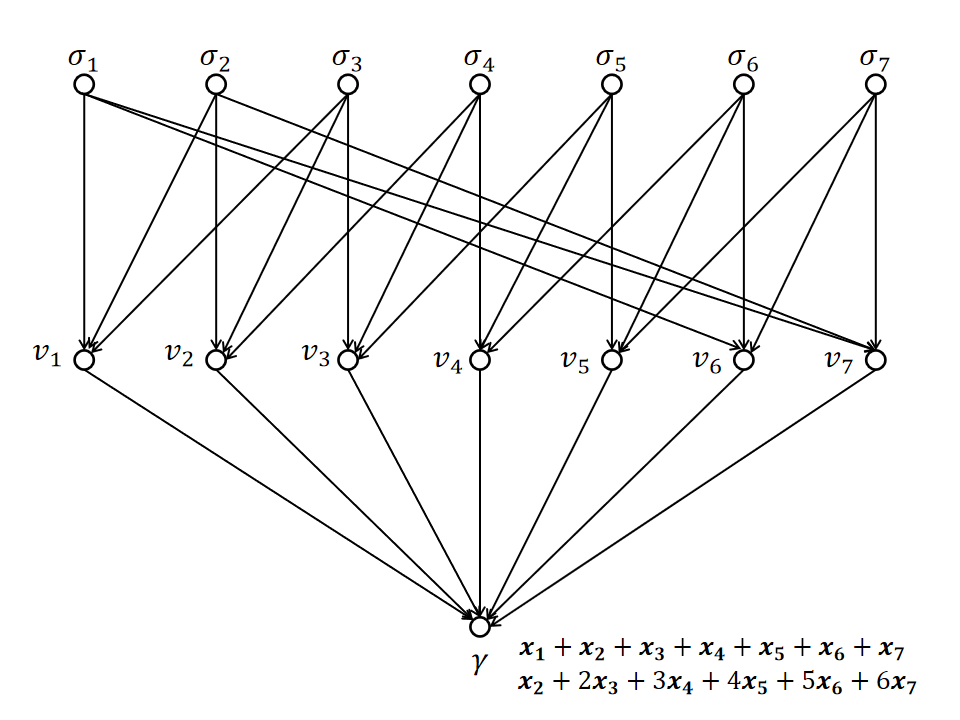}
    \caption{The network in Example~\ref{ex:sparse-parity-check}.}
    \label{fig:sparse_cyclic}
\end{figure}

\begin{example}\label{ex:sparse-parity-check}
Consider the cyclic network in Fig.~\ref{fig:sparse_cyclic}. Let $k=2, s=7$ and $r=3$, then $\ell=(r+k-1)/k=2$. 
Denote \(N=k\ell=4=r+k-1\). Assume that $T\in\F_7^{2\times 7}$ and
\[
    T=
    \begin{bmatrix}
        1&1&1&1&1&1&1\\
        0&1&2&3&4&5&6
    \end{bmatrix},
\]
then \(\bt_j=(1,j-1)^T\). 

For each \(i\), we first choose a parity-check direction \(\bq_i\in\F_7^2\) orthogonal to \(\bt_i\):
\[
\begin{aligned}
\bq_1&=(0,1)^T,&
\bq_2&=(6,1)^T,&
\bq_3&=(5,1)^T,&
\bq_4&=(4,1)^T,\\
\bq_5&=(3,1)^T,&
\bq_6&=(2,1)^T,&
\bq_7&=(1,1)^T.
\end{aligned}
\]

Next, we lift each direction to the two-dimensional space
\[
    \mathcal H_i=\F_7^2\otimes\Span(\bq_i)\subseteq\F_7^4.
\]
Moreover, we select $\bh_i,i\in[7]$ such that  every four cyclically consecutive \(\bh_i\)'s form a basis of \(\F_7^4\). For instance, we choose
\[
\begin{aligned}
\bh_1&=(0,0,0,1)^T,&
\bh_2&=(0,0,6,1)^T,&
\bh_3&=(5,1,0,0)^T,\\
\bh_4&=(4,1,0,0)^T,&
\bh_5&=(0,0,3,1)^T,&
\bh_6&=(2,1,2,1)^T,\\
\bh_7&=(1,1,0,0)^T.
\end{aligned}
\]

To determine $\bd_i,i\in[7]$, consider the $i$-th middle node. Since $i\in\Gamma(i),\Gamma(i+1)$ and $\Gamma(i+2)$, \(\bd_i\ne0\) should be orthogonal to \(\bh_i,\bh_{i+1},\bh_{i+2}\). One possible choice is
\[
\begin{aligned}
\bd_1&=(4,1,0,0)^T,&
\bd_2&=(0,0,1,1)^T,&
\bd_3&=(0,0,2,1)^T,\\
\bd_4&=(6,4,2,1)^T,&
\bd_5&=(2,5,2,1)^T,&
\bd_6&=(5,2,1,0)^T,\\
\bd_7&=(6,1,0,0)^T.
\end{aligned}
\]
Every three cyclically consecutive \(\bd_i\)'s are linearly independent.

Now, we should check \eqref{eq:sparse-local-decoding-goal} is satisfied for every $j\in[7]$.
Take $j=1$ for example, source \(1\) is observed by the nodes \(\Gamma(1)=\{6,7,1\}\). Its parity-check matrix has the single row \(H_1=\bh_1^T\), and hence
\[
    \ker H_1
    =\{(x_1,x_2,x_3,x_4)^T:x_4=0\}.
\]
The three vectors \(\bd_6,\bd_7,\bd_1\) belong to this nullspace and are independent, so they form a basis of \(\ker H_1\). Thus, the target fiber
\[
    \mathscr U_1
    =\Span\{(1,0,0,0)^T,(0,0,1,0)^T\}
    \subseteq\ker H_1=\Span\{\bd_6,\bd_7,\bd_1\}.
\]
More explicitly,
\[
    (1,0,0,0)^T=4\bd_7+3\bd_1,
    \qquad
    (0,0,1,0)^T=\bd_6+5\bd_7.
\]
The same calculation works for all sources. Then, we obtain
\[D=\begin{bmatrix}
    4&0&0&6&2&5&6\\1&0&0&4&5&2&1\\0&1&2&2&2&1&0\\0&1&1&1&1&0&0
\end{bmatrix},\qquad E=\addtocounter{MaxMatrixCols}{10}\begin{bmatrix}
    3&6&2&0&0&0&0&0&0&0&0&0&0&0\\
    0&0&0&0&0&0&0&0&1&3&5&0&0&0\\
    0&0&0&1&4&0&0&0&0&6&5&4&0&0\\
    0&0&0&6&4&2&0&0&0&0&0&0&0&0\\
    0&0&0&0&6&5&0&0&0&0&0&0&5&6\\
    0&0&0&0&0&0&0&1&0&0&0&0&5&3\\
    4&2&0&0&0&0&6&5&0&0&0&0&0&6
\end{bmatrix}.\]
The matrix $E$ is $\Sigma$-support-constrained and $DE=I_2\otimes T$. Thus, the seven middle nodes each transmit one symbol and the sink recovers two
instances, giving rate two, equal to the converse.
\end{example}

We now give the general construction. The first lemma extracts a cyclic family of parity-check directions from the MDS target.

\begin{lem}\label{lem:consecutive-mds-parity-directions}
Let \(T\in\F_q^{k\times s}\) be MDS and write \(\bt_j=T(:,j)\). There exist \(\bq_i\in\mathbb F_q^k\setminus\{\Zero\}\) such that for every $j\in[i-k+2:i]_{\rm s}$,
\begin{equation}
    \bq_i^T\bt_j=0.
    \label{eq:sparse-h-definition}
\end{equation}
When \(k=1\), the interval is empty and any nonzero \(\bq_i\) may be used. Moreover, every \(k\) cyclically consecutive vectors among the \(\bq_i\)'s are linearly independent.
\end{lem}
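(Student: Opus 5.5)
For each $i$, the index set $[i-k+2:i]_{\rm s}$ consists of $k-1$ distinct columns of $T$. Because $T$ is MDS, $\Rank(T_{[i-k+2:i]_{\rm s}})=k-1$, since $k-1\le s$. The orthogonal complement of these $k-1$ columns in $\F_q^k$ is therefore one-dimensional. I will take $\bq_i$ to be any nonzero vector spanning it, so that \eqref{eq:sparse-h-definition} holds by construction. When $k=1$ the interval is empty and there is nothing to impose. Lemma~\ref{lem:mds-shortening-threshold} gives the same conclusion in dual form, but a direct rank count suffices. In this language, $\bq_i$ spans the left kernel of $T_{[i-k+2:i]_{\rm s}}$.

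An additional fact will be needed: $\bq_i^T\bt_j\neq 0$ for every $j\notin[i-k+2:i]_{\rm s}$. If $\bq_i^T\bt_j=0$ for such a $j$, then $\bq_i$ would be orthogonal to $k$ distinct columns of $T$. By the MDS property these $k$ columns have rank $k$, which forces $\bq_i=\Zero$, a contradiction. In the same way, the vector $\bq_i^TT\in\mathscr C_T$ is supported exactly on $[s]\setminus[i-k+2:i]_{\rm s}$. This is simply the minimum-weight codeword of $\mathscr C_T$ vanishing on that window.

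For linear independence, I will consider $\bq_a,\bq_{a+1},\ldots,\bq_{a+k-1}$ and evaluate them against the $k$ consecutive columns $\bt_a,\ldots,\bt_{a+k-1}$. This mirrors the triangular argument used for \eqref{eq:dense-consecutive-basis} in the dense regime. Form the $k\times k$ matrix $M$ with entries $M_{u,v}=\bq_{a+u}^T\bt_{a+v}$ for $u,v\in[0:k-1]$. The vector $\bq_{a+u}$ annihilates $\bt_j$ for $j\in[a+u-k+2:a+u]_{\rm s}$, and among $a,\ldots,a+k-1$ these are exactly the indices $a,\ldots,a+u$ once wrap-around is handled. The only index in the window that $\bq_{a+u}$ does not annihilate is $a+u+1$, whenever $u+1\le k-1$. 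Hence $M_{u,v}=0$ for $v\le u$, and $M_{u,u+1}\ne 0$ by the nonvanishing fact above.

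The wrap-around claim needs $s\ge k$ so that the $k$ columns are distinct. It also needs $s\ge 2k-1$, or at least careful handling, so that the cyclic intervals do not alias. I expect this bookkeeping to be the main obstacle. In the sparse regime $r\le s-k$ we have $s\ge k+r\ge k+1$, and only the listed inclusions are used. The structure of $M$ is still not triangular enough: row $k-1$ vanishes on all of $\bt_a,\ldots,\bt_{a+k-1}$. To fix this I will test against the shifted columns $\bt_{a+1},\ldots,\bt_{a+k}$ instead. Set $M'_{u,v}=\bq_{a+u}^T\bt_{a+v+1}$ for $u,v\in[0:k-1]$. Then $M'_{u,v}=0$ for $v<u$, because $a+v+1\in[a+u-k+2:a+u]_{\rm s}$. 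The diagonal entry $M'_{u,u}=\bq_{a+u}^T\bt_{a+u+1}$ is nonzero because $a+u+1$ lies outside the window, which requires only $s\ge k$. So $M'$ is upper triangular with nonzero diagonal. If $\sum_u\lambda_u\bq_{a+u}=\Zero$, applying this combination to the columns gives $\lambda^TM'=\Zero$, hence $\lambda=\Zero$, and the $k$ consecutive vectors $\bq_a,\ldots,\bq_{a+k-1}$ are linearly independent.
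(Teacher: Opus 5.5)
Your proof is correct and is essentially the paper's argument. Existence comes from the one-dimensional left kernel of $k-1$ columns of an MDS matrix. Independence comes from pairing $\bq_a,\ldots,\bq_{a+k-1}$ with the shifted columns $\bt_{a+1},\ldots,\bt_{a+k}$, which gives a triangular matrix whose diagonal entries $\bq_{a+u}^T\bt_{a+u+1}$ are nonzero by the MDS property.

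One small slip occurs in your discarded first attempt: row $k-1$ of $M$ does not vanish entirely, because $\bq_{a+k-1}$ does not annihilate $\bt_a$. In fact $M$ becomes triangular with nonzero diagonal after moving its first column to the end. This does not affect your final argument, which needs only $s\ge k$.
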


\begin{pf}
For \(k\ge2\), the \(k-1\) columns $\bt_j$ indexed by \([i-k+2:i]_{\rm s}\) are linearly independent. Their left nullspace is one-dimensional, so \(\bq_i\) exists and is unique up to a nonzero scalar.

Fix a cyclic starting index \(a\). Pair the rows
\(\bq_a^T,\ldots,\bq_{a+k-1}^T\) with the columns
\(\bt_{a+1},\ldots,\bt_{a+k}\). If \(0\le u<t\le k-1\), then \(\bq_{a+t}^T\bt_{a+u+1}=0\) by~\eqref{eq:sparse-h-definition}. Moreover \(\bq_{a+t}^T\bt_{a+t+1}\ne0\); otherwise \(\bq_{a+t}\) would be orthogonal to \(k\) columns of the MDS matrix \(T\), which are linearly independent. The resulting pairing matrix is triangular with nonzero diagonal. Hence the \(k\) consecutive vectors are independent.
\end{pf}

For each \(i\in[s]\), define the \(\ell\)-dimensional lifted parity-check space
\begin{equation}
    \mathcal H_i
    \triangleq
    \mathbb F_q^\ell\otimes\Span(\bq_i)
    \subseteq\mathbb F_q^N.
    \label{eq:sparse-H-space}
\end{equation}
The next lemma selects one vector from each lifted space while making every sliding window nonsingular.

\begin{lem}\label{lem:simultaneous-sliding-representatives}
Assume \(N=k\ell<s\), and let the spaces \(\mathcal H_i\) be defined by \eqref{eq:sparse-H-space}. If \(q>N\), then there exist vectors \(\bh_i\in\mathcal H_i\), \(i\in[s]\), such that every \(N\) cyclically consecutive vectors $\bh_a,\bh_{a+1},\ldots,\bh_{a+N-1}$ form a basis of \(\mathbb F_q^N\).
\end{lem}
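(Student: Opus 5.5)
We take $\bh_i=\bu_i\otimes\bq_i$ with $\bu_i\in\F_q^\ell$ left free. This parametrizes $\mathcal H_i$ exactly, and $\bh_i$ depends linearly on the $\ell$ coordinates of $\bu_i$. The plan has two steps. First we show that each individual sliding window can be made nonsingular by some choice of the $\bu_i$'s. Then a polynomial (Combinatorial Nullstellensatz / Schwartz--Zippel type) argument makes all $s$ windows nonsingular simultaneously, using $q>N$.

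\emph{Step 1: each window separately.} Fix a cyclic start $a$ and the window $\bh_a,\ldots,\bh_{a+N-1}$. Since $N<s$, these $N$ indices are distinct. Split them into $\ell$ consecutive groups of size $k$: group $b\in[\ell]$ consists of the indices $a+(b-1)k,\ldots,a+bk-1$. For every $i$ in group $b$, set $\bu_i=\be_b$, so that $\bh_i=\be_b\otimes\bq_i$. By Lemma~\ref{lem:consecutive-mds-parity-directions}, the $k$ cyclically consecutive vectors $\bq_i$ in group $b$ are linearly independent. Hence group $b$ spans $\be_b\otimes\F_q^k$. Since $\F_q^N=\bigoplus_{b=1}^\ell \be_b\otimes\F_q^k$, the whole window is a basis. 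Consequently the polynomial
\[
f_a(\bu_1,\ldots,\bu_s)\triangleq\det\begin{bmatrix}\bh_a&\bh_{a+1}&\cdots&\bh_{a+N-1}\end{bmatrix},
\]
viewed in the $s\ell$ indeterminate entries of the $\bu_i$'s, is not the zero polynomial.

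\emph{Step 2: all windows simultaneously.} Consider $F=\prod_{a=1}^{s}f_a$. It is a nonzero polynomial because $\F_q[\cdot]$ is an integral domain. Each $f_a$ is multilinear in the blocks: it involves only $\bu_i$ with $i$ in window $a$, each such $\bh_i$ occupies one column, and every column entry is linear in $\bu_i$. So $f_a$ has degree at most $1$ in each individual coordinate of $\bu_i$. An index $i$ lies in exactly $N$ of the $s$ windows (again because $N<s$). Therefore $F$ has degree at most $N$ in every single variable.

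Since $q>N$, the Combinatorial Nullstellensatz (or the standard fact that a nonzero polynomial with all individual degrees $<q$ does not vanish identically on $\F_q^{s\ell}$) yields a point with $F\neq0$. At that point every $f_a\neq 0$, so every window of $N$ cyclically consecutive $\bh_i$'s is a basis of $\F_q^N$.

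The main obstacle is Step 1: we need nonvanishing of each $f_a$ as a polynomial, and this requires exhibiting an explicit good assignment. The group-by-$k$ block assignment combined with Lemma~\ref{lem:consecutive-mds-parity-directions} settles it. Care is also needed in the degree count, namely that each variable appears in at most $N$ windows. This is what ties the field-size condition $q>N$ to the statement.
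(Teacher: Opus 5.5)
Your proof is correct, and its second half is the same as the paper's: the product $\prod_a f_a$ of window determinants, individual degree at most $N$ because each index lies in exactly $N$ windows, and nonvanishing on $\F_q^{s\ell}$ when $q>N$.

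The difference is in how you show that each single window determinant is a nonzero polynomial. The paper does this non-constructively. It checks the Hall-type condition $\dim\bigl(\sum_{i\in J}\mathcal H_i\bigr)=\ell\dim\Span\{\bq_i:i\in J\}\ge |J|$ for every $J$ in the window. It does so using the same partition into $\ell$ blocks of $k$ consecutive indices and Lemma~\ref{lem:consecutive-mds-parity-directions}. It then invokes the linear Rado lemma, Lemma~\ref{lem:linear-rado}, to obtain a transversal basis. You instead write that transversal down explicitly, $\bh_i=\be_b\otimes\bq_i$ on the $b$-th block, and observe that block $b$ spans $\be_b\otimes\F_q^k$.

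Your version is shorter and self-contained, and it shows directly why $k$-wise consecutive independence of the $\bq_i$ is exactly what is needed. The paper's version is more robust: it works for any family of subspaces satisfying the dimension inequality, even without a clean block-diagonal witness. It also reuses the Rado lemma from the framework section.

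In both approaches the per-window witness depends on the starting point $a$, since the block boundaries shift with the window. This is why the polynomial step is still needed to handle all $s$ windows simultaneously.
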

\begin{pf}
We first prove that for every cyclic interval $[a:a+N-1]_{\rm s}$, there exist $N$ linearly independent vectors $\bh_a,\bh_{a+1},\ldots,\bh_{a+N-1}$ such that $\bh_i\in\cH_i$, and then show that $\{\bh_i:i\in[s]\}$ can be chosen for all cyclic windows at the same time.

The first statement is proved by leveraging Lemma~\ref{lem:linear-rado}. For a fixed cyclic interval \(I=[a:a+N-1]_{\rm s}\) and any subset \(J\subseteq I\), it suffices to show 
\[\dim\left(\sum_{i\in J}\cH_i\right)\ge |J|.\]
Let $d_J\eqdef\dim\Span\{\bq_i:i\in J\}$, and partition \(I\) into \(\ell\) consecutive blocks \(I_1,\ldots,I_\ell\), each of length \(k\). From Lemma~\ref{lem:consecutive-mds-parity-directions}, the vectors \(\bq_i\) in each block are independent. Hence, for every $b\in[\ell]$,
\[d_J\ge |J\cap I_b|.\]
It follows that
\begin{equation}
    \dim\left(\sum_{i\in J}\mathcal H_i\right)=\ell d_J\ge|J|.
    \label{eq:sparse-rado-condition}
\end{equation}

It remains to make the choices work for all cyclic windows at the same time. Since $\bh_i\in\cH_i$, we can write
\[\bh_i=\sum_{b=1}^{\ell}z_{i,b}(\be_b\otimes\bq_i),\]
where the \(z_{i,b}\)'s are variables. Let \(\Delta_a\) be the determinant of the window starting at \(a\). The fixed-window argument shows that every \(\Delta_a\) is a nonzero polynomial. Therefore
\[
    \Delta\triangleq\prod_{a=1}^{s}\Delta_a
\]
is also nonzero. Each variable \(z_{i,b}\) occurs in exactly \(N\) cyclic windows and occurs with degree at most one in each determinant. Its individual degree in \(\Delta\) is therefore at most \(N\).

A nonzero polynomial over \(\mathbb F_q\) whose degree in each individual variable is smaller than \(q\) cannot vanish at every point of \(\mathbb F_q^{s\ell}\). This follows by induction on the number of variables from the usual univariate root bound. Since \(q>N\), there is an evaluation for which \(\Delta\ne0\). At this evaluation every cyclic length-\(N\) window is a basis.
\end{pf}

We next turn the sliding parity-check sequence into decoder columns.

\begin{lem}\label{lem:sliding-parity-check-realization}
Let \(N=k\ell=r+k-1\), and let \(\bh_1,\ldots,\bh_s\) satisfy
Lemma~\ref{lem:simultaneous-sliding-representatives}. For every \(i\in[s]\),
choose
\begin{equation}
    \bd_i\in
    \Span\{\bh_i,\bh_{i+1},\ldots,\bh_{i+N-2}\}^{\perp}
    \setminus\{\Zero\}.
    \label{eq:sparse-p-definition}
\end{equation}
Then every \(r\) cyclically consecutive \(\bd_i\)'s are linearly independent
and, for every source \(j\),
\begin{equation}
    \mathscr U_j\subseteq\Span\{\bd_i:i\in\Gamma(j)\}.
    \label{eq:sparse-fiber-containment}
\end{equation}
\end{lem}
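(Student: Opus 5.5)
The plan is to work entirely with the orthogonality pattern between the $\bd_i$'s and the $\bh_i$'s, and to reduce both claims to the window property of Lemma~\ref{lem:simultaneous-sliding-representatives}.

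First I will record the two basic facts about each $\bd_i$.
\begin{itemize}
\item The $N-1$ vectors $\bh_i,\ldots,\bh_{i+N-2}$ are part of a length-$N$ cyclic window, so they are linearly independent. Hence the space in \eqref{eq:sparse-p-definition} is one-dimensional and $\bd_i$ exists.
\item We have $\bd_i^T\bh_{i+N-1}\neq0$. Otherwise $\bd_i$ would be orthogonal to the whole basis $\bh_i,\ldots,\bh_{i+N-1}$ of $\F_q^N$, forcing $\bd_i=\Zero$.
\end{itemize}

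\textbf{Independence of $r$ consecutive columns.} I will use the same triangular pairing trick as in Lemma~\ref{lem:consecutive-mds-parity-directions}. Fix $a$ and pair the columns $\bd_{a+t}$, $t=0,\ldots,r-1$, with $\bh_{a+u+N-1}$, $u=0,\ldots,r-1$.
\begin{itemize}
\item For $u<t$, the index $a+u+N-1$ lies in the cyclic interval $[a+t:a+t+N-2]_{\rm s}$. This holds because $t-u\le r-1\le N-1$ and $u\le t-1$.
\item It follows that $\bd_{a+t}^T\bh_{a+u+N-1}=0$ for $u<t$, while the diagonal entries are nonzero by the second fact above.
\end{itemize}
The $r\times r$ pairing matrix is therefore triangular with nonzero diagonal, and the $r$ columns are independent. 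Since $N<s$, each cyclic interval of length $\le N$ consists of distinct indices. So the membership test is legitimate and no accidental coincidences arise from wraparound. Checking this cyclic bookkeeping carefully is the step I expect to need the most attention.

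\textbf{Fiber containment.} Fix a source $j$, so that $\Gamma(j)=[j-r+1:j]_{\rm s}$.
\begin{itemize}
\item For every $i\in\Gamma(j)$, the interval $[i:i+N-2]_{\rm s}$ contains $[j:j+k-2]_{\rm s}$. This uses $i\le j$ and $i+N-2\ge j-r+1+N-2=j+k-2$.
\item Hence every $\bd_i$ with $i\in\Gamma(j)$ lies in $K_j\triangleq\Span\{\bh_j,\ldots,\bh_{j+k-2}\}^\perp$.
\item These $k-1$ vectors $\bh$ are independent, being part of a window, so $\dim K_j=N-(k-1)=r$.
\item By the independence just proved, $\Span\{\bd_i:i\in\Gamma(j)\}=K_j$.
\end{itemize}

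It remains to show $\mathscr U_j\subseteq K_j$. Take any $u\in[0:k-2]$. By construction $\bh_{j+u}\in\mathcal H_{j+u}$, so it is a sum of vectors of the form $\bx\otimes\bq_{j+u}$. Moreover $j\in[j+u-k+2:j+u]_{\rm s}$, so Lemma~\ref{lem:consecutive-mds-parity-directions} gives $\bq_{j+u}^T\bt_j=0$. Then
\[
(\be_b\otimes\bt_j)^T(\bx\otimes\bq_{j+u})=(\be_b^T\bx)(\bt_j^T\bq_{j+u})=0,
\]
so every generator $\be_b\otimes\bt_j$ of $\mathscr U_j$ is orthogonal to $\bh_j,\ldots,\bh_{j+k-2}$. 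This gives \eqref{eq:sparse-fiber-containment}.

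The degenerate case $k=1$ is consistent with this argument. There are no parity constraints, $K_j=\F_q^N$, and $r=N$.
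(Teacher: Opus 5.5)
Your proof is correct and follows the paper's argument: the window property gives existence of $\bd_i$, a triangular pairing gives independence of any $r$ cyclically consecutive $\bd_i$'s, and their span is identified with $\ker H_j$, which contains $\mathscr U_j$ because $\bq_{j+u}^T\bt_j=0$ for $u\in[0:k-2]$. The only difference is cosmetic: you pivot each $\bd_{a+t}$ on $\bh_{a+t+N-1}$, the vector just after its window, while the paper uses $\bh_{a+t-1}$, the vector just before; both give a triangular matrix with nonzero diagonal.
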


\begin{pf}
From Lemma~\ref{lem:simultaneous-sliding-representatives}, the \(N-1\) vectors in~\eqref{eq:sparse-p-definition} are independent, so their orthogonal complement is one-dimensional and \(\bd_i\) exists.
To prove the first claim, fix \(a\in[s]\) and pair
\(\bd_a,\ldots,\bd_{a+r-1}\) with
\(\bh_{a-1},\ldots,\bh_{a+r-2}\). For \(u>t\), the definition of
\(\bd_{a+t}\) gives
\[
    \bd_{a+t}^T\bh_{a-1+u}=0.
\]
The diagonal term \(\bd_{a+t}^T\bh_{a+t-1}\) is nonzero. Otherwise, \(\bd_{a+t}\) would be orthogonal to the \(N\)-vector basis 
\[
    \bh_{a+t-1},\bh_{a+t},\ldots,\bh_{a+t+N-2}.
\]
The pairing matrix is triangular with nonzero diagonal. Hence the \(r\) consecutive vectors \(\bd_a,\ldots,\bd_{a+r-1}\) are independent.

For source \(j\), define parity-check matrix $H_j\in\F_q^{(k-1)\times N}$
\begin{equation}
    H_j\triangleq
    \begin{bmatrix}
        \bh_j^T\\
        \bh_{j+1}^T\\
        \vdots\\
        \bh_{j+k-2}^T
    \end{bmatrix}.
    \label{eq:sparse-Hj}
\end{equation}
Then, \(\dim(\ker H_j)=N-(k-1)=r\). 
From \eqref{eq:sparse-p-definition}, the vectors $\bd_i,i\in[j-r+1:j]_{\rm s}=\Gamma(j)$ satisfy $H_j\bd_i=\Zero$. Combining with the independence of $\{\bd_i:i\in\Gamma(j)\}$, these vectors form a basis of $\ker H_j$.

Moreover, for every \(i\in[j:j+k-2]_{\rm s}\), the interval in \eqref{eq:sparse-h-definition} contains \(j\). Therefore \(\bq_i^T\bt_j=0\), and since \(\bh_i\in\mathcal H_i\), we have for every $b\in[\ell]$,
\[\bh_i^T(\be_b\otimes\bt_j)=0.\]
It follows that
\begin{equation}
    \mathscr{U}_j\subseteq\ker H_j=\Span(\{\bd_i:i\in\Gamma(j)\}).
    \label{eq:sparse-U-in-kernel}
\end{equation}
\end{pf}
We can now complete the support-constrained realization.

\begin{thm}\label{thm:divisible-sparse-cyclic-mds}
Let \(T\in\F_q^{k\times s}\) be an MDS matrix and let \(\Sigma_i=[i:i+r-1]_{\rm s}\). Suppose $r\le s-k$ and $r+k-1=k\ell.$ Let \(t\) be any positive integer such that $\bar q\triangleq q^t>k\ell.$ Then there exists an \(\F_{\bar q}\)-linear \((\ell,1)\) computation code. Equivalently, there exists an \(\F_q\)-linear \((t\ell,t)\) computation code.
Consequently, the computing capacity can be achieved by linear network codes, i.e.,
\[
    C(\mathcal N_{s,s,\Sigma},T)
    =C_{\rm lin}(\mathcal N_{s,s,\Sigma},T)
    =\ell.
\]
 
\end{thm}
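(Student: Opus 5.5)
The plan is to assemble the code over the extension field $\F_{\bar q}$ from the lemmas just proved, then convert it back to an $\F_q$-linear code, and finally compare with the converse. Since $T\in\F_q^{k\times s}$ is MDS over $\F_q$, it remains MDS over $\F_{\bar q}$: ranks of submatrices do not change under field extension. The network, the support system, and $N=k\ell=r+k-1<s$ (because $r\le s-k$) are the same as before.

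\textbf{Step 1: construct $D$ over $\F_{\bar q}$.} Over $\F_{\bar q}$, Lemma~\ref{lem:consecutive-mds-parity-directions} gives the parity directions $\bq_i$. Because $\bar q>N$, Lemma~\ref{lem:simultaneous-sliding-representatives} gives $\bh_i\in\cH_i$ with every cyclic window of $N$ consecutive vectors nonsingular. Lemma~\ref{lem:sliding-parity-check-realization} then supplies decoder columns $\bd_1,\ldots,\bd_s$ such that $\mathscr U_j\subseteq\Span\{\bd_i:i\in\Gamma(j)\}$ for every $j$. Set $D=[\bd_1\ \cdots\ \bd_s]\in\F_{\bar q}^{N\times s}$.

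\textbf{Step 2: build $E$ and apply the representation theorem.} For each source $j$ and each $b\in[\ell]$, write
\[
\be_b\otimes\bt_j=\sum_{i\in\Gamma(j)}\epsilon_{i,(b-1)s+j}\,\bd_i .
\]
Put these coefficients in column $(b-1)s+j$ of $E\in\F_{\bar q}^{s\times s\ell}$, and set every entry with $i\notin\Gamma(j)$ to zero. The $((b-1)s+j)$-th column of $I_\ell\otimes T$ is exactly $\be_b\otimes\bt_j$, which gives $DE=I_\ell\otimes T$ column by column. Row $i$ of $E$ is nonzero only in columns belonging to sources $j$ with $i\in\Gamma(j)$, i.e.\ $j\in\Sigma_i$, so $E$ is $\Sigma$-support-constrained with $n=1$. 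Theorem~\ref{thm:realization-implies-code} over $\F_{\bar q}$ therefore yields an $\F_{\bar q}$-linear $(\ell,1)$ code.

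\textbf{Step 3: transfer to an $\F_q$-linear code.} Fix an $\F_q$-basis of $\F_{\bar q}$ and identify each $\bar q$-ary symbol with $t$ symbols over $\F_q$. Multiplication by a fixed element of $\F_{\bar q}$ is an $\F_q$-linear map on $\F_q^t$. Hence each encoding map, each middle-node combination, and the decoder become $\F_q$-linear maps on blocks of $t$ symbols. The target $T$ has entries in $\F_q$, so computing $T\bx$ over $\F_{\bar q}$, with $\bx$ viewed as $t$ interleaved instances over $\F_q$, computes $t$ instances of $T$ over $\F_q$. This gives an $\F_q$-linear $(t\ell,t)$ code of rate $\ell$. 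The same identification gives the converse direction of the ``equivalently'' claim.

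\textbf{Step 4: match the converse.} Lemma~\ref{lem:cyclic-bound} gives $C\le\min\{r+k-1,s\}/k=N/k=\ell$ since $N<s$. Together with $C_{\rm lin}\le C$ and the achievability in Step 3, this yields $C=C_{\rm lin}=\ell$.

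The only delicate point is Step 3. One has to check that the identification of $\F_{\bar q}$ with $\F_q^t$ turns the $\F_{\bar q}$-scalar maps on source-to-middle and middle-to-sink edges into valid $\F_q$-matrices of the right sizes: $A_{j,i}\in\F_q^{t\times t\ell}$, and $n=t$ symbols per edge. This is routine bookkeeping with Kronecker structure. Steps 1 and 2 are a direct assembly of the lemmas just proved.
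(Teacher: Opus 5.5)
Your proof is correct and follows essentially the same route as the paper's: apply the sliding parity-check lemmas over $\F_{\bar q}$ to obtain $D$, read off a $\Sigma$-support-constrained $E$ with $DE=I_\ell\otimes T$ from the local fiber representations, expand each $\F_{\bar q}$-symbol into $t$ symbols over $\F_q$, and match the bound of Lemma~\ref{lem:cyclic-bound}. The one thing to drop is your remark that the same identification gives the reverse implication (an $\F_q$-linear $(t\ell,t)$ code $\Rightarrow$ an $\F_{\bar q}$-linear $(\ell,1)$ code), because an $\F_q$-linear map on $\F_q^t$ need not be multiplication by an element of $\F_{\bar q}$; the paper uses only the forward direction, and that is all the theorem needs.
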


\begin{pf}
Let \(N=k\ell\). We first work over the extension field \(\F_{\bar q}\). Since all \(k\times k\) minors of \(T\) that are nonzero over \(\F_q\) remain nonzero over \(\F_{\bar q}\), the matrix \(T\) is also MDS over \(\F_{\bar q}\). Moreover, \(\bar q>N\). Hence, Lemmas~\ref{lem:consecutive-mds-parity-directions}--\ref{lem:sliding-parity-check-realization} can be applied over \(\F_{\bar q}\) to select the vectors $\bh_i,\bq_i$ and $\bd_i$. Define
\[
    D=[\bd_1\ \bd_2\ \cdots\ \bd_s]\in\mathbb F_{\bar q}^{N\times s}.
\]
For every source \(j\) and instance \(b\), \eqref{eq:sparse-fiber-containment} gives coefficients \(c_{i,j,b}\in\mathbb F_{\bar q}\), \(i\in\Gamma(j)\), such that
\begin{equation}
    \be_b\otimes\bt_j
    =\sum_{i\in\Gamma(j)}c_{i,j,b}\bd_i.
    \label{eq:sparse-local-coefficients}
\end{equation}
Set \(c_{i,j,b}=0\) whenever \(i\notin\Gamma(j)\), and define \(E\in\mathbb F_{\bar q}^{s\times s\ell}\) by $E_{i,(b-1)s+j}=c_{i,j,b}.$
The entry in row \(i\) and source block \(j\) can be nonzero only when \(i\in\Gamma(j)\), equivalently when \(j\in\Sigma_i\). Therefore \(E\) is \(\Sigma\)-support-constrained. Moreover, \eqref{eq:sparse-local-coefficients} gives
\begin{equation}
    DE=I_\ell\otimes T.
    \label{eq:sparse-PFbar}
\end{equation}
Thus the middle nodes transmit \(E\bx_S\), one symbol each, and the sink applies \(D\) to recover the \(\ell\) target instances.

It remains to express this code over the original field \(\F_q\). Fix an \(\F_q\)-basis of \(\F_{\bar q}\), where \(\bar q=q^t\), and expand every \(\F_{\bar q}\)-symbol into its \(t\) coordinates over \(\F_q\). Every \(\F_{\bar q}\)-linear operation is then an \(\F_q\)-linear operation. Since the entries of \(T\) belong to \(\F_q\), the target transformation acts component-wise on these \(t\) coordinates. Consequently, the extension field \((\ell,1)\) code induces an \(\F_q\)-linear \((t\ell,t)\) code with the same rate.

Finally, Lemma~\ref{lem:cyclic-bound} gives
\[
    C(\mathcal N_{s,s,\Sigma},T)\leq\frac{r+k-1}{k}=\ell.
\]
The constructed linear code achieves this upper bound, and hence
\[
    C(\mathcal N_{s,s,\Sigma},T)=C_{\rm lin}(\mathcal N_{s,s,\Sigma},T)=\ell.
\]
\end{pf}

This construction fits the normalized form in Corollary~\ref{cor:normalized-row-space-realization}. Indeed, $I_\ell\otimes T=DE$ implies that the target space is contained in \(\Row(E)\). Hence there exist \(V\) and an invertible \(P\) such that 
\[
    E=P
    \begin{bmatrix}
        I_\ell\otimes T\\V
    \end{bmatrix}.
\]
The sliding argument finds this realization in the reverse order: it first designs the decoder columns, then obtains the supported rows of \(E\) from the local fiber representations. The framework then identifies the corresponding auxiliary rows \(V\) and row transformation \(P\).

\begin{rmk}
    The sliding construction requires a coefficient field with more than \(N=r+k-1\) elements. This causes no restriction on the original base field. Indeed, for any \(T\) over \(\F_q\), choose \(t\) such that \(q^t>N\), apply the construction over \(\F_{q^t}\), and expand every extension-field symbol into \(t\) symbols over \(\F_q\). Since the entries of \(T\) lie in \(\F_q\), this gives an \(\F_q\)-linear \((t\ell,t)\) code with the same rate. We may therefore present the construction over a field of size greater than \(N\).
\end{rmk}

\subsubsection{Nondivisible sparse regime $r<s-k+1$ and $k\nmid k+r-1$}
Define 
\begin{equation}
    \ell_0
    \triangleq
    \left\lfloor\frac{r+k-1}{k}\right\rfloor,
    \qquad
    r_0
    \triangleq
    k\ell_0-k+1.
    \label{eq:nondivisible-r0}
\end{equation}
Then $r_0+k-1=k\ell_0$ and $ r_0\le r.$
Thus \(r_0\) is the largest support size not exceeding \(r\) for which the sparse converse is an integer. The key point is that a code for cyclic supports of size \(r_0\) is also a code for supports of size \(r\), since the extra source access can simply be ignored.

\begin{cor}\label{cor:cyclic-integer-part}
Let \(T\in\F_q^{k\times s}\) be MDS and let
\(\Sigma_i=[i:i+r-1]_{\rm s}\), where \(r\le s-k\). Then
\begin{equation}
    \left\lfloor\frac{r+k-1}{k}\right\rfloor
    \le
    C_{\rm lin}(\mathcal N_{s,s,\Sigma},T)
    \le
    C(\mathcal N_{s,s,\Sigma},T)
    \le
    \frac{r+k-1}{k}.
    \label{eq:nondivisible-capacity-bounds}
\end{equation}
In particular, the gap between the achievable rate and the upper bound is strictly smaller than one. 
\end{cor}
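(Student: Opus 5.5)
The proof splits into three inequalities; the middle one, $C_{\rm lin}\le C$, holds by definition, since every valid linear code is a network code. For the right-hand inequality we use Lemma~\ref{lem:cyclic-bound}. Because $r\le s-k$, we have $r+k-1\le s-1<s$, so $\min\{r+k-1,s\}=r+k-1$. Hence $\overline C_{\rm MDS}(\Sigma)=(r+k-1)/k$, and this quantity upper bounds $C(\cN_{s,s,\Sigma},T)$ by the cut-set bound in Theorem~\ref{thm:cut-rank-bound} specialized as in \eqref{eq:Cbar-MDS}.

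For the lower bound we use the reduction given before the statement. Take $\ell_0$ and $r_0$ as in \eqref{eq:nondivisible-r0}, so that $r_0+k-1=k\ell_0$. We first check that the parameters are admissible. Since $r\ge1$, we get $r+k-1\ge k$, so $\ell_0\ge1$ and $r_0\ge1$. Also $k\ell_0\le r+k-1$ gives $r_0\le r\le s-k$.

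Now consider the restricted cyclic system $\Sigma'_i=[i:i+r_0-1]_{\rm s}\subseteq\Sigma_i$. By Theorem~\ref{thm:divisible-sparse-cyclic-mds}, applied with $r_0$ in place of $r$ and with $t$ chosen so that $q^t>k\ell_0$, there is an $\F_q$-linear $(t\ell_0,t)$ code for $(\cN_{s,s,\Sigma'},T)$. Through Theorem~\ref{thm:realization-implies-code} this code corresponds to a $\Sigma'$-support-constrained matrix $E$ with $\mathscr C_T^{(t\ell_0)}\subseteq\Row(E)$. Since $\mathscr E^{(t\ell_0)}_{\Sigma'_i}\subseteq\mathscr E^{(t\ell_0)}_{\Sigma_i}$, the same $E$ is also $\Sigma$-support-constrained. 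By Theorem~\ref{thm:realization-implies-code} again, it gives a linear $(t\ell_0,t)$ code for the original network, of rate $\ell_0=\lfloor (r+k-1)/k\rfloor$.

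The gap statement follows because $x-\lfloor x\rfloor<1$ for every real $x$. The only point needing care is the admissibility check that $r_0\ge 1$ and $r_0\le s-k$, so that Theorem~\ref{thm:divisible-sparse-cyclic-mds} applies; the rest is bookkeeping.
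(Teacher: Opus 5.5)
Your proposal is correct and follows the paper's proof essentially verbatim. You restrict each access set to the cyclic interval of length $r_0=k\ell_0-k+1$, apply Theorem~\ref{thm:divisible-sparse-cyclic-mds} to the restricted system, observe that a code for the smaller supports remains valid for $\Sigma$, and take the upper bound from Lemma~\ref{lem:cyclic-bound}; your explicit checks that $1\le r_0\le r\le s-k$ and $\min\{r+k-1,s\}=r+k-1$ are details the paper leaves implicit but do not change the argument.
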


\begin{pf}
For every \(i\in[s]\), restrict the original access set to
\[
    \Sigma_i^{(0)}
    \triangleq
    [i:i+r_0-1]_{\rm s}
    \subseteq\Sigma_i.
\]
Because \(r_0\le r\le s-k\), the restricted system is in the sparse regime, and $r_0+k-1=k\ell_0.$
Theorem~\ref{thm:divisible-sparse-cyclic-mds} therefore gives a code of rate \(\ell_0\) for the restricted supports. Every transmitted row supported on \(\Sigma_i^{(0)}\) is also supported on \(\Sigma_i\), so the same code is valid for the original network. This proves the lower bound in \eqref{eq:nondivisible-capacity-bounds}. The upper bound follows from Lemma~\ref{lem:cyclic-bound}.

Finally,
\[
    0\le\frac{r+k-1}{k}-\ell_0<1,
\]
which proves the gap statement.
\end{pf}

The dense construction, Theorem~\ref{thm:divisible-sparse-cyclic-mds}, and Corollary~\ref{cor:cyclic-integer-part} together prove Theorem~\ref{thm:MDS_cyclic_lowerbound}.

\section{Conclusion}
\label{sec:conclusion}
This paper studied vector-linear function computation over three-layer source-access networks with a prescribed target function and a fixed network topology. We developed a support-constrained row-space framework in which the global row space received by the sink is the main design object. For any candidate row space containing the lifted target space, its minimum communication load is determined exactly by the ranks of its locally supported sections. Optimizing over this row space gives a variational characterization of the linear computing capacity and unifies direct target-row coding with auxiliary-row coding. For cyclic networks with MDS targets, we further showed how parity-check-based enlargements of the target space overcome local support-constraints, attain the converse bound in the dense regime and in the divisible sparse regime, and provide a floor-rate guarantee otherwise.

The framework separates the coding problem into two parts. Once the global row space is fixed, its realization under the prescribed access sets is completely characterized by local rank inequalities. The remaining difficulty is therefore concentrated in choosing a suitable global row space. This viewpoint preserves the algebraic structure of the target matrix and makes clear that computing capacity depends on the compatibility between the target and the access pattern, rather than only on the target rank or the network topology. It also distinguishes the fixed-function, fixed-topology problem considered here from linearly separable computation settings in which the demand, the task assignment, or both may be designing variables.

An important open problem is to develop systematic methods for selecting a capacity-optimal enlarged row space for general target matrices and access systems. For cyclic MDS instances, closing the fractional gap in the nondivisible sparse regime appears to require a genuinely block-valued construction and a better understanding of the rank growth of the associated sliding parity-check spaces. Other directions include identifying target classes beyond MDS and sum matrices for which the outer row-space optimization is tractable, reducing the field-size and subpacketization requirements of the constructions, and extending the local-section approach to more general network topologies.

\bibliographystyle{IEEEtranS}
\bibliography{reference}

\end{document}